\keywords{games on graphs, fairness, two-player games, parity games}
\tikzset{
   n/.style= {circle,fill,inner sep=1.5pt,node distance=2cm}
  ,acc/.style={circle,draw,inner sep=3pt,node distance=2cm}
  ,phantom/.style={circle},
  ,arr/.style={->, >=stealth, semithick, shorten <= 3pt, shorten >= 3pt}
}
\crefname{definition}{Def.}{Def.}
\crefname{problem}{Prob.}{Prob.}
\crefname{algorithm}{Algo.}{Algo.}
\crefname{theorem}{Thm.}{Thm.}
\crefname{lemma}{Lem.}{Lem.}
\crefname{appendix}{App.}{App.}
\crefname{line}{line}{line}
\crefname{section}{Sec.}{Sec.}
\crefname{corollary}{Cor.}{Cor.}
\crefname{example}{Ex.}{Ex.}
\crefname{line}{Line}{Lines}
\crefname{item}{Item}{Items}
\crefname{subsection}{Subsec.}{Subsec.}
\crefname{inpara}{Inline Item}{Inline Item}
\crefname{figure}{Fig.}{Fig.}
\crefname{table}{Table}{Tables}
\crefname{thm}{Thm.}{Thms.}
\crefname{lem}{Lem.}{Lems.}
\crefname{prop}{Prop.}{Props.}
\crefname{cor}{Cor.}{Cors.}
\crefname{defi}{Def.}{Defs.}
\crefname{exa}{Ex.}{Exs.}
\crefname{rem}{Rem.}{Rems.}
\crefname{obs}{Obs.}{Obs.}
\crefname{nota}{Notation}{Notations}
\crefname{invariant}{Invariant}{Invariants}
\Crefname{thm}{Theorem}{Theorems}
\Crefname{lem}{Lemma}{Lemmas}
\Crefname{prop}{Proposition}{Propositions}
\Crefname{cor}{Corollary}{Corollaries}
\Crefname{defi}{Definition}{Definitions}
\Crefname{exa}{Example}{Examples}
\Crefname{rem}{Remark}{Remarks}
\Crefname{obs}{Observation}{Observations}
\Crefname{nota}{Notation}{Notations}
\Crefname{invariant}{Invariant}{Invariants}
\newcommand{\Win}{\mathsf{Win}}
\newcommand{\Inf}{\mathsf{Inf}}
\newcommand{\Cyc}{\mathsf{Cyc}}
\newcommand{\orphans}{O}
\newcommand{\Cpre}{\mathsf{Cpre}}
\newcommand{\Apre}{\mathsf{Apre}}
\newcommand{\Vn}{V^{\mathsf{n}}}
\newcommand{\Vfair}{V^{\mathsf{fair}}}
\newcommand{\fair}{\mathsf{fair}}
\newcommand{\play}{\mathsf{play}}
\newcommand{\out}{\mathsf{out}}
\newcommand{\vout}{{v^\mathsf{out}}}
\newcommand{\uout}{{u^\mathsf{out}}}
\newcommand{\even}{\mathsf{even}}
\newcommand{\odd}{\mathsf{odd}}
\newcommand{\mycount}{\mathtt{count}}
\newcommand{\mymod}{\mathsf{mod}}
\newcommand{\llambda}{{\lambda_{\downarrow}}}
\renewcommand{\Box}{\square}
\renewcommand{\Diamond}{\lozenge}
\def\moverlay{\mathpalette\mov@rlay}
\def\mov@rlay#1#2{\leavevmode\vtop{%
   \baselineskip\z@skip \lineskiplimit-\maxdimen
   \ialign{\hfil$\m@th#1##$\hfil\cr#2\crcr}}}
\newcommand{\charfusion}[3][\mathord]{
    #1{\ifx#1\mathop\vphantom{#2}\fi
        \mathpalette\mov@rlay{#2\cr#3}
      }
    \ifx#1\mathop\expandafter\displaylimits\fi}
\newcommand{\dotcup}{\charfusion[\mathbin]{\cup}{\cdot}}
\newcommand{\myparagraph}[1]{\medskip\par\noindent\textbf{\textbf{#1}}\hspace{6pt}}
\renewcommand\subparagraph{%
  \@startsection{subparagraph}{5}{\parindent}%
  {3.25ex \@plus 1ex \@minus .2ex}%
  {-1em}%
  {\normalfont\normalsize\bfseries}}
\definecolor{colored}{rgb}{1,0.5,0}
\renewcommand{\subsubsection}{\@startsection{subsubsection}{3}{\z@}%
  {-18pt plus -4pt minus -4pt}%
  {6pt plus 2pt minus 1pt}%
  {\normalfont\normalsize\bfseries}}
\newcommand{\subsubsubsection}{\@startsection{paragraph}{4}{\z@}%
  {-15pt plus -3pt minus -3pt}%
  {6pt plus 2pt minus 1pt}%
  {\normalfont\normalsize\bfseries}}
\theoremstyle{definition}
\newtheorem{invariant}[thm]{Invariant}
\newenvironment{inparaenum}[1][]{%
  \begin{enumerate*}[#1]%
}{%
  \end{enumerate*}%
}
\begin{document}

\title[Doubly Fair Parity Games]{Doubly Fair Parity Games}

\author[D.~Hausmann]{Daniel Hausmann\lmcsorcid{0000-0002-0935-8602}}[a]
\author[N.~Piterman]{Nir Piterman\lmcsorcid{0000-0002-8242-5357}}[b]
\author[I.~Sa\u{g}lam]{Irmak Sa\u{g}lam\lmcsorcid{0000-0002-4757-1631}}[c]
\author[A.-K.~Schmuck]{Anne-Kathrin Schmuck\lmcsorcid{0000-0003-2801-639X}}[c]

\address{University of Liverpool, Liverpool, United Kingdom}
\email{hausmann@liverpool.ac.uk}

\address{University of Gothenburg and Chalmers University of Technology, Gothenburg, Sweden}
\email{piterman@chalmers.se}

\address{Max Planck Institute for Software Systems (MPI-SWS), Kaiserslautern, Germany}
\email{\{isaglam, akschmuck\}@mpi-sws.org}

\thanks{Daniel Hausmann was supported by the ERC Consolidator grant D-SynMA (No.~772459) and by the EPSRC through grant EP/Z003121/1.}
\thanks{Nir Piterman was supported by the ERC Consolidator grant D-SynMA (No.~772459), Swedish research council (VR) project (No.~2020-04963), and the Wallenberg AI, Autonomous Systems and Software Program (WASP) funded by the Knut and Alice Wallenberg Foundation.}
\thanks{Irmak Sa\u{g}lam was supported by the DFG project SCHM 3541/1-1.}
\thanks{Anne-Kathrin Schmuck was supported by the DFG projects SCHM 3541/1-1 and 389792660, as part of TRR 248--CPEC.}

\begin{abstract}
We consider two-player games over finite graphs in which both players are restricted by fairness constraints on their moves. Given a two-player game graph $G=(V,E)$ and a set of fair moves $E_f\subseteq E$, a player is said to play \emph{fair} in $G$ if they choose an edge $e\in E_f$ infinitely often whenever the source node of $e$ is visited infinitely often. Otherwise, they play \emph{unfair}. We equip such games with two $\omega$-regular winning conditions $\alpha$ and $\beta$ deciding the winner of mutually fair and mutually unfair plays, respectively. Whenever one player plays fair and the other plays unfair, the fairly playing player wins the game. The resulting games are called \emph{fair $\alpha/\beta$ games}.

We formalize fair $\alpha/\beta$ games and show that they are determined. For fair parity/parity games, i.e., fair $\alpha/\beta$ games where $\alpha$ and $\beta$ are each given by a parity condition over $G$, we provide a polynomial reduction to normal parity games via a gadget construction inspired by the reduction of stochastic parity games to parity games. We further give a direct \emph{symbolic fixpoint algorithm} to solve fair parity/parity games. On a conceptual level, we illustrate the translation between the gadget-based reduction and the direct symbolic algorithm, uncovering the underlying similarities of solution algorithms for fair and stochastic parity games, as well as for the recently considered class of fair games in which only one player is restricted by fair moves.
\end{abstract}

\maketitle

\section{Introduction}\label{sec:intro}
\emph{Omega-regular games} are a popular abstract modelling formalism
for many core computational problems in the context of
correct-by-construction synthesis of reactive software or hardware.
This abstract view was initiated by the seminal work of Church
\cite{church1963application} and its independent solutions by B\"uchi
and Landweber and Rabin \cite{Rab69,BL69}.
Since then these ideas have been refined and extended for solving the
\emph{reactive synthesis problem} \cite{PR88,SF06,MeyerSL18}. 

However, before using such synthesis techniques, the reactive software
design problem at hand needs to be abstractly modelled as a two-player
game. In order for the subsequently synthesized software to be
`correct-by-construction' this game graph needs to reflect all possible
interactions between involved components in an abstract manner.
Building such a game graph with the `right' level of abstraction is a
known severe challenge, in particular, if the synthesized software is
interacting with existing components that already possess certain
behavior. Here, part of the modelling challenge amounts to finding  the
`right' power of both players in the resulting abstract game to ensure
that winning strategies do not fail to exist due to an unnecessarily
conservative overapproximation of modeling uncertainty (or the dual
problem due to underapproximation).

In this context, \emph{fairness} has been adopted as a notion to abstractly model known characteristics of the involved components in a very concise manner. \emph{Fairness assumptions} have been used in model checking \cite{AminofBK04} and scheduler synthesis for the classical AMBA arbiter \cite{NirGR1} or shared resource management \cite{CAFMR13}. 
Notably, fairness assumptions have also gained attention in cyber-physical system design \cite{thistle1998control,NOL17,MajumdarMallikSchmuckSoudjani24} and robot motion planning \cite{DIRS18,AGR20}. In all these applications, fairness is used as an \emph{assumption} that the synthesized (or verified) component can rely on. In particular, if these assumptions are modelled by \emph{transition fairness} over a two-player game arena\footnote{Whenever we interpret players in a one-sided manner as environment and system, we choose the environment player as the $\forall$-player, as we need to take all possible environment moves into account. Similarly, the system is the $\exists$-player in this scenario.} $(V_\forall,V_\exists,E)$ -- i.e., by a set of fair \emph{environment} moves $E_f\subseteq E$ (i.e., with $V_\forall$ as their domain) that need to be taken infinitely often if the source node is seen infinetely often along a play -- the resulting synthesis games can be solved efficiently \cite{banerjee2022fast,sauglam2023solving}.  

While most existing work has only looked at fairness as an \emph{assumption} to weaken the opponent in the synthesis game, all mentioned applications also naturally allow for scenarios where multiple components with intrinsically fair behavior are interacting with each other in a non-trivial manner. 
For example, the ability of a concurrent process to eventually free a shared resource might depend on how fair re-allocation is implemented in other threads. 
On an abstract level, the formal reasoning about such scenarios requires to understand how intrinsic fairness influences the reactive decision making of dependent processes. 
Algorithmically, such synthesis questions can be addressed by allowing fairness restrictions on both players in a game. 
We refer to such games simply as \emph{fair games}.

\smallskip
\noindent\textbf{Motivating Example.}
In order to illustrate such fair games, we consider the problem of deploying a new (green) robot in a robotic warehouse environment where autonomous (red) robots are already operating. The warehouse has narrow passages between adjacent regions that only one robot can pass at a time. The green robot has an $\omega$-regular objective $\alpha$ that specifies desired sequences of visited regions in the workspace. As the (high-level) objective of the red robot is not known, we assume that it acts adversarial in general. Nevertheless, it is reasonable to assume that all robots have a tie-breaking mechanism implemented for obstacle avoidance, i.e., they must eventually move left or right if an obstacle blocks their way. The existence of such a tie-breaking mechanism in the red robots acts as a local fairness assumption on the environment for the green robot. The green robot's own obligation for tie-breaking acts as an additional local fairness specification.

Now consider the scenario where two robots are facing each other at a gate, as depicted in~\cref{fig:stf2}. 
 While both robots block the gate from one side, neither of them can move forward, but if the green robots moves left or the red robot moves right, the other robot can take the gate to reach region $A$.
  With the mentioned requirement for tie-breaking, none of the robots is allowed to block the gate forever and both eventually have to move to the side.

    \begin{figure}
    \centering
     \includegraphics[width=0.4\linewidth]{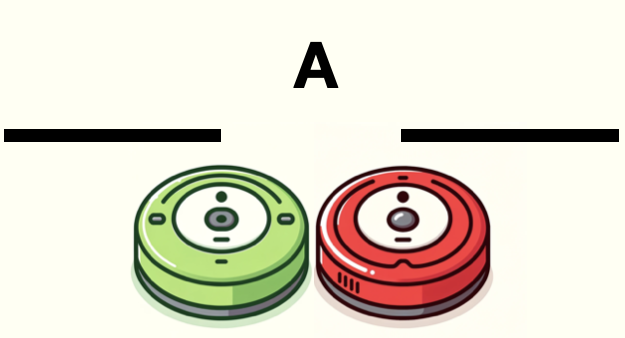}
    \caption{Deadlock caused by 
    fairness constraints of two robots facing a door.}
    \label{fig:stf2}
  \end{figure}

  Now let us assume that region $A$ is important for both robots, hence, both robots have an incentive to enter region $A$ first, to then move the game to an area preferable to them. However, the robot who breaks the tie first, (i.e., fulfills its fairness condition first) allows the \emph{other} robot to enter region $A$ first, which gives both robots the incentive to behave unfair. While it is very intuitive to make a player lose when she plays unfair and the other player plays fair, it is unclear who wins the game if both players play unfair.

  To resolve this issue, we can make the objectives of the robots completely adversarial by assigning one of the players (say, green) the winner in a play where both players play unfair. In the above example, this would give the red robot the incentive to break the tie first. 
  While this makes it harder for the red robot to spoil the objective of the green one, we might be interested in a more symmetric game which does not favor the green robot in all non-determined states of the graph. 
  We therefore consider a second $\omega$-regular objective $\beta$ that determines the winner of (mutually) unfair plays. This results in fair games $G=(A,\alpha,\beta)$ which are determined (as shown in~\cref{sec:fairgames}).

\smallskip
\noindent\textbf{Contribution.}
Motivated by the above mentioned examples where interactive decision making of two dependent processes is influenced by intrinsic fairness constraints imposed on their decisions, this paper studies \emph{fair games} $G=(A,\alpha,\beta)$ as their abstraction. In particular, we give solution algorithms for these games when both $\alpha$ and $\beta$ are parity conditions induced by two different priority functions over the node set. 
We call such games \emph{fair parity/parity games}.

Obviously, the previously discussed one-sided version of fair games, which we call \emph{$\forall$-fair games} (as only the $\forall$-player -- i.e., the environment -- is restricted by strong transition fairness), is a special case of fair games. Both enumerative \cite{sauglam2023solving} and symbolic solution algorithms \cite{banerjee2022fast} have recently been proposed for $\forall$-fair games, showing that strong transition fairness can be handled efficiently in both types of algorithms. This observation is closely related to a result for \emph{stochastic games}, i.e., two-player games with an \emph{additional} `half' player that takes all its moves uniformly at random. For
the purpose of qualitative analysis, such stochastic parity games have been shown in \cite{CJH03} to be reducible to (standard) parity games by the use of \enquote{gadgets} that turn stochastic nodes into a small combination of $\forall$- and $\exists$-player nodes. While it is known that stochastic games can be reduced to \emph{$\forall$-fair games} (and hence, fair games), it was not investigated how the different solution approaches compare. 
The main conceptual contribution of this paper is a unified understanding of all these solution approaches for the general class of fair games.

Concretely, our contribution is three-fold:\\
\begin{enumerate*}[(1)]
\item We formalize fair games as a generalization of $\forall$-fair games and qualitative  stochastic games in such a way that determinacy is preserved. Moreover, within this general framework, the players' behaviour in the presence of fairness violations by either party can be fine-tuned. Determinacy is a desirable property in synthesis, as it enables an agent to analyze the causes of its \emph{losing behaviour} by examining the opponent's counter-strategies. The ability to fine-tune the agents' behaviour in scenarios where one or neither party has an incentive to adhere to fairness assumptions constitutes a natural generalization and represents a novel contribution of this work.\\
 \item We show a reduction of fair parity/parity games to (standard) parity games, inspired by the gadget-based reduction of stochastic parity games to parity games in~\cite{CJH03}. This reduction enables the use of parity game solvers over the reduced game (in particular enumerative ones such as progress measure algorithms~\cite{DBLP:conf/lics/JurdzinskiL17}) and gives a gadget-based reduction of $\forall$-fair parity games to parity games as a corollary.\\
 \item We then show how our gadget construction can be used to define a \emph{symbolic fixpoint algorithm} to solve fair parity/parity games directly (without the need for a reduction). We show the direct symbolic algorithm for $\forall$-fair parity games in \cite{banerjee2022fast} coinciding with our algorithm for this particular subclass of fair games.
\end{enumerate*}

With this, we believe that this paper uncovers the underlying
similarities of solution algorithms for fair, $\forall$-fair and
stochastic parity games. Further, we show how these conceptual
similarities can be used to build both \emph{enumerative} and direct
\emph{symbolic} algorithms. This is of interest as both are known to
have complementary strengths, depending on how the synthesis instance is
provided, and this connection was, to the best of our knowledge, not
known before.

This paper extends a previous conference publication~\cite{DBLP:conf/fossacs/HausmannPSS24}. Besides providing full
proofs and additional examples, we also include a treatment of doubly fair games with B\"uchi objectives
(Section~\ref{subsection:reductionbuchi})
that we intend to serve as a gentle introduction to fair games and as a stepping stone towards full-fledged 
fair parity games.
On a technical level, we provide detailed proofs of the
correctness of all presented reductions from fair games to normal games,
and of the symbolic fixpoint algorithms.
This is particularly involved for the gadget reduction from fair parity/parity games to (normal) parity games (Theorem~\ref{thm:fairParityPlusParityToParity});
while this reduction is seemingly based on adding memory to previously known gadget constructions, resulting in a construction that looks similar to previous gadgets, the proof of this particular result is quite substantial. 
The proof involves a detailed case distinction of the interactive (fairness) behaviour of both players.
This reveals how the gadget construction correctly resolves all possible interactions of fairness constraints \emph{locally} without allowing players to 'cheat' by constructing a play that locally satisfies the mutual constraints but becomes (un)fair in the limit. Thereby, the proof provides deep insights into the structure of winning strategies in fair parity/parity games.

\smallskip
\noindent\textbf{Structure.} We start by recalling basic concepts and notation related to games in~\cref{section:preliminaries}.
In~\cref{sec:fairgames} we introduce fair $\alpha$/$\beta$ games (for $\omega$-regular objectives $\alpha$ and $\beta$) and show their determinacy.
\cref{section:reductionparity} provides polynomial reductions to games without fairness
constraints for the cases that
\begin{itemize} 
	\item $\alpha$ is a B\"uchi condition and $\beta=\bot$ (\cref{subsection:reductionbuchi});
	\item $\alpha$ is a parity condition and $\beta=\bot$ (\cref{subsection:reductionparitybot});
	\item both $\alpha$ and $\beta$ are parity conditions (\cref{subsection:reductionparityparity}).
\end{itemize}
Finally, we provide fixpoint algorithms for the latter two types of games in
\cref{section:fixpoints}.

\section{Preliminaries}\label{section:preliminaries}

\noindent We introduce infinite-duration $\omega$-regular two-player games over finite graphs with additional \emph{strong transition fairness conditions} on both players.
For readability, we call the considered games (and their respective notions) simply \emph{fair}.

\myparagraph{Infinite Sequences.} We denote the set of
infinite sequences over a set $U$ by $U^\omega$.  We often view sequences
$\tau=u_1u_2\ldots\in U^\omega$ as
functions $\tau:\mathbb{N}\to U$, writing
$\tau(i)=u_i$.
Furthermore, we let
$\mathsf{Inf}(\tau):=\{u\in U\mid \forall i.\, \exists j>i.\,
\tau(j)=u\}$
denote the set of elements of $U$ that occur infinitely often in $\tau$. 
We use $\tilde{\tau}$ to denote the longest suffix (i.e. tail) of $\tau$ that is a sequence over $\mathsf{Inf}(\tau)$.
Given a function $f:U \rightarrow W$, we denote by $f(\tau)\in
W^\omega$ the pointwise application of $f$ to $\tau$.
Given natural numbers $m < n$ , we write $[m, n]:=\{m,\ldots,n\}$ and simply write $[n]:=\{1,\ldots,n\}$.

\myparagraph{Fair Game Arenas.}
We consider two-player games, played by player $\exists$ and its antagonist, player $\forall$.
A \emph{fair game arena} $A=(V_\exists,V_\forall,E,E_f)$ consists
of a set of \emph{nodes} $V=V_\exists\cup V_\forall$ that is partitioned
into the sets of \emph{existential nodes} $V_\exists$ and \emph{universal nodes} $V_\forall$,
together with a set $E\subseteq V\times V$ of \emph{moves} (or \emph{edges}) that is partitioned
into the set $E_f\subseteq E$ of \emph{fair moves} (or \emph{fair edges})  and the set
  $E\setminus E_f$ of \emph{normal moves} (or \emph{normal edges}).
If $E_f=\emptyset$, then we sometimes omit this component for brevity.
Given a node $v\in V$ and a binary relation $R\subseteq V\times V$, we write
$R(v)$ to denote the set $\{w\in V\mid (v,w)\in R\}$.
We assume that $E$ is right-total, that is, $E(v)\neq\emptyset$ for all
$v\in V$. We call a node $v$ \emph{fair}, if it is the source node of
a fair edge, i.e., $E_f(v)\neq \emptyset$ and collect all fair nodes
in the set $\Vfair=\{v \in V~|~ E_f(v)\neq \emptyset\}$ and define
$\Vn=V\setminus \Vfair$ to be the set of nodes that are not fair (`normal nodes').
We denote $\Vfair_\exists=\Vfair\cap V_\exists$ and $\Vfair_\forall =
\Vfair\cap V_\forall$.

\myparagraph{Plays.}
A \emph{play} $\tau=v_0v_1\ldots$ on $A$ is an infinite
sequence of nodes
s.t. $v_{i+1}\in E(v_i)$ for all $i\geq 0$. A finite prefix of a play is called a \textit{finite play}.
Given a play $\tau=v_0v_1\ldots$, we define the associated sequence of moves
$\tau_m=(v_0,v_1)(v_1,v_2)\ldots$.
Additionally, if $p$ is a player in $\{\exists, \forall \}$, we denote the other player by $1-p$.
We let $\mathsf{plays}(A)$ denote the set of all plays on $A$.

\myparagraph{$i$-fair Plays.}
Let $i\in\{\exists,\forall\}$. A play $\tau$ is called
\emph{$i$-fair} if, intuitively, player $i$ behaves fairly along $\tau$. Formally, $\tau$ is $i$-fair
if for every node $v\in V_i\cap\mathsf{Inf}(\tau)$, it holds
that ${E_f}|_v \subseteq\mathsf{Inf}(\tau_m)$ where
${E_f}|_v=\{(v,v')\in E_f\mid v'\in V\}$ denotes the set of
 fair edges that start at $v\in V$.
 Given a play $\tau$, we write $\fair_i(\tau)$ to indicate that $\tau$ is $i$-fair. We call a play \emph{mutually fair} if it is both $\exists$- and $\forall$-fair and \emph{mutually unfair} if it is neither $\exists$- nor $\forall$-fair.

\myparagraph{Strategies.} A strategy for player $i \in \{\exists, \forall\}$ (or an \emph{i-strategy}) is a function $p : V^*\cdot V_i \to V$ where for all $u \cdot v \in V^* \cdot V_i$  it holds that $p(u \cdot v) \in E(v)$.
A strategy $p$ is called \emph{positional} if 
$p(u\cdot v) = p(w \cdot v)$ for all $u, w \in V^*$ and $v \in V_i$.

A strategy $p$ for player $i$ is said to \emph{admit} a play $\tau=v_0v_1\ldots$ if for all $k \in \mathbb{N}$, $v_k \in V_i$ implies $p(v_0 \ldots v_k) = v_{k+1}$.
Alternatively, $\tau$ is said to be \emph{compliant} with $p$.
We write $\Sigma$ for the set of $\exists$-strategies and $\Pi$ for the set of $\forall$-strategies over a given arena.
Given a node $v \in V$ and strategies $s\in\Sigma$ and $t\in\Pi$, we obtain a unique induced play $\play_v(s,t)$ in the game arena. If we do not care about the initial node of this play, we simply write $\play(s,t)$.

A strategy for player $i \in \{\exists, \forall\}$ is an \emph{i-fair strategy} if every play it admits is $i$-fair. We write $\Sigma^\fair$ (resp. $\Pi^\fair$) for the set of $\exists$-fair (resp. $\forall$-fair) strategies. 

\myparagraph{Omega-regular Winning Conditions.}
We consider winning conditions given by an
$\omega$-regular~\cite{Tho97,Mazala01} language
$\varphi\subseteq V^\omega$ over the node set $V$. In particular, we
write $\varphi=\bot$ and $\varphi=\top$ to denote the trivial winning
conditions $\emptyset$ and $V^\omega$, respectively.
In particular, we
focus our attention to \emph{parity} winning conditions.
For a priority function $\lambda: V \to [k]$ that maps nodes of a
game arena to the natural numbers bounded by $k$ for some $k \in
\mathbb{N}$, the Parity$(\lambda)$ condition is given by $\varphi =
\{\tau \in V^\omega \mid \max(\mathsf{Inf}(\lambda(\tau))) \text{ is even}\}$.
We also
consider so-called B\"uchi conditions, that is, parity conditions
with set of priorities $\{1,2\}$.

\myparagraph{Omega-regular Games.}
An $\omega$-regular game is traditionally defined as a tuple $G = (A,
\alpha)$ where $A$ is a game arena \emph{without fair edges}, i.e.
$E_f = \emptyset$ and $\alpha \subseteq V^\omega$ an
$\omega$-regular winning condition.
An $\exists$-strategy $s \in \Sigma$ is said to be \emph{winning} (for player $\exists$) from a node $v \in V$ if for all $t \in \Pi$,  $\play_v(s, t) \in \alpha$. Dually,
a $\forall$-strategy $t \in \Pi$ is said to be winning (for player $\forall$)
from a node $v \in V$ if for all $s \in \Sigma$,  $\play_v(s, t) \not
\in \alpha$.

\myparagraph{Determinacy and Winning Regions.}
In $\omega$-regular games, every node $v\in V$ is won by one \emph{and
only one} of the players
\cite{Mar75,Mazala01}.
This property of a game is called \emph{determinacy}, and $\omega$-regular games
are \emph{determined}.
We denote the nodes from which player $\exists$ (resp. player $\forall$) has a winning strategy in $G$ by $\Win_\exists(G)$ (resp. $\Win_\forall(G)$).
When $G$ is clear from the context, we drop the parantheses and
write $\Win_\exists$ and $\Win_\forall$ instead.
Determinacy then amounts to $\Win_\exists\cup
\Win_\forall=V$ and $\Win_\exists\cap \Win_\forall=\emptyset$.

\myparagraph{Node Conventions for Figures.} Throughout this paper, in all figures,
the rectangular nodes represent $\forall$-player nodes and the nodes with round corners represent $\exists$-player nodes.


\section{Fair Games}\label{sec:fairgames}


As already outlined in the motivating example in~\cref{sec:intro}, the
interpretation of winning conditions over fair games influences the
characteristics of resulting winning strategies. To formalize this
intuition, we first recall a particular subclass of fair games,
namely those where only one player is restricted by an additional
fairness condition, in~\cref{sec:forallfair}. Then we use these
games to motivate winning semantics for the general class of fair
games. 

\subsection{Determinacy of $\forall$-Fair Games}\label{sec:forallfair}

\leavevmode\par\medskip

 A $\forall$-fair game is a tuple $G = (A, \alpha)$ where $A$ is a
 game arena with $\Vfair \subseteq V_\forall$ (\emph{called a
 $\forall$-fair game arena}), and $\alpha$ is an $\omega$-regular
 winning condition.

 In $\forall$-fair games, fairness constraints typically model known
 behavior of existing components that player $\exists$ (i.e., the
 to be synthesized system) can rely on.
 This is formalized by defining that player $\exists$ wins a
 $\forall$-fair game with winning condition $\alpha$ from node $v$
 if
 \begin{subequations}\label{eq:one-sided-wr}
 	\begin{align}\label{eq:one-sided-wr-E}
 		\exists s \in \Sigma.\, \forall t \in \Pi^\fair. \, \play_v(s, t) \in
 		\alpha. 
 	\end{align}
That is, player $\exists$ (or shortly, $\exists$) wins if they have a strategy that can win
against all $\forall$-fair $\forall$-strategies.

Our intuition tells us that this can be converted to reasoning about
general strategies for $\forall$-player (or shortly, $\forall$) by allowing $\exists$ to
win whenever $\forall$ plays unfairly.
In order to see this, we can look at the complement of~\cref{eq:one-sided-wr-E}, i.e., the description of when player
$\forall$ wins;
namely, 
\begin{equation}\label{eq:neg-one-sided-wr-E}
  \forall s \in \Sigma. \, \exists t\in \Pi^\fair. \, \play_v(s,t)
\notin \alpha. \tag{$\neg$1a}
\end{equation}
We can replace the quantification over fair strategies with a
quantification over all strategies but require that, in addition to refuting $\alpha$,
the resulting play be fair:
\begin{equation}\label{eq:fairness-inside}\forall s \in \Sigma.\,\exists t\in \Pi. \, \fair_\forall(\play_v(s,t))
\wedge \play_v(s,t)\notin \alpha. \tag{$\neg$1a*}\end{equation}
Next, we show in~\cref{lemma:elongation-in}  that
if strategy $t\in \Pi$ satisfies
$\fair_\forall(\play_v(s,t))$ then we can find a fair strategy
$t'\in \Pi^\fair$ for which $\play_v(s,t) = \play_v(s, t')$.
This proves ~(\ref{eq:fairness-inside}) $\Rightarrow$~(\ref{eq:neg-one-sided-wr-E}). As the other direction is trivial,~\cref{lemma:elongation-in} shows the equivalence of~\cref{eq:neg-one-sided-wr-E} and~\cref{eq:fairness-inside}.

\begin{lemma}\label{lemma:elongation-in}
Let $s \in \Sigma$, $t \in \Pi$ and $v \in V$. If $\fair_\forall(\play_v(s,t))$ is true, then there is a strategy $t' \in \Pi^\fair$ that satisfies $\play_v(s,t) = \play_v(s,t')$.
Similarly, if $\fair_\exists(\play_v(s,t))$ is true, then there is a strategy $s' \in \Sigma^\fair$ that satisfies $\play_v(s,t) = \play_v(s',t)$.
\end{lemma}
\begin{linenomath}
\begin{proof}[Proof of~\cref{lemma:elongation-in}]
    By definition, $\fair_\forall(\play_v(s,t))$ implies for all nodes $w\in V_\forall\, \cap\,\mathsf{Inf}(\play_v(s,t))$, that ${E_f}|_w \subseteq\mathsf{Inf}(\play_v(s,t)_m)$. To construct the desired strategy $t'$, we simply define a
    strategy that copies strategy $t$ on every occurring history (i.e., prefix) of $\play_v(s,t)$ and defines a dummy fair strategy on every history that does not occur in $\play_v(s,t)$ (i.e., every finite play that is not a prefix of $\play_v(s,t)$). The outcome of this dummy strategy will never be observed in $\play_v(s,t)$, 
    and thus $\play_v(s,t)$ will be equal to $\play_v(s,t')$.
    We define $t'$ formally as follows. For every $w \in V_\forall$, fix an order on the fair outgoing edges of $w$ and let $e_w$ denote the number of fair outgoing edges of $w$. Furthermore, let $\mycount(h, w)$ for a history $h \in V^*$ and $w \in V$ denote the number of times $w$ occurs in $h$.
    Define the strategy $t': V^* \cdot V_\forall \to V$ as follows: 
    For $h \in V^*$ with $t(h, w) = w'$ and $h \cdot w$ is a prefix of $\play_v(s,t)$, put $t'(h,w)=w'$.
    For $h\in V^*$ and $h \cdot w$ is not a prefix of $\play_v(s,t)$,
    put $t'(h,w)$ to the $\mycount(h, w) (\mymod \, e_w)^{th}$ fair outgoing edge of $w$, if $e_w \neq 0$. If $e_w = 0$, then define $t'(h,w)$ to be an arbitrary successor of $w$.
    In any play $\rho$ compliant with $t'$, any play that diverges from $\play_v(s,t)$ takes all the outgoing fair edges of all $w \in V_\forall$ that it visits infinitely often, which makes $t'$ a $\forall$-fair strategy. The proof for $s'$ is identical. 
\end{proof}
\end{linenomath}

Due to determinacy of $\omega$-regular games,~\cref{eq:fairness-inside} is equivalent to \begin{equation}\label{eq:fairness-inside-swapped}\exists t\in \Pi.\, \forall s \in
\Sigma.\,\fair_\forall(\play_v(s,t)) \wedge \play_v(s,t)\notin\alpha.\tag{$\neg$1a**}\end{equation}
In particular, this implies that 
for a node $v \in \Win_\forall$,
the fixed $\forall$-player strategy $t$ satisfying~\cref{eq:fairness-inside-swapped} is fair.
We therefore conclude that the following equation is equivalent to~\cref{eq:fairness-inside-swapped}, and therefore is the complement of~\cref{eq:one-sided-wr-E}:
  \begin{align}\label{eq:one-sided-wr-A}
\exists t \in \Pi^\fair.\, \forall s \in \Sigma. \, \play_v(s, t) \not \in \alpha.
\end{align}
 \end{subequations}
This statement is equivalent to the determinacy of $\forall$-fair
games: from each node either player $\exists$ has a winning strategy or
player $\forall$ has a winning $\forall$-fair strategy.

  
\subsection{From $\forall$-Fair Games to Defining Determined Fair Games}\label{sec:forallfairtofair}

\leavevmode\par\medskip

Given a fair game arena $A$ and an $\omega$-regular objective $\alpha$, a natural attempt to define winning regions in fair games with fairness constraints for both players would be to generalize~\cref{eq:one-sided-wr} to 
\begin{subequations}\label{equ:Call}
 \begin{align}
 v \in \Win_\exists\, \text{ if } &\, \exists s \in \Sigma^\fair.\, \forall t \in \Pi^\fair. \, \play_v(s, t) \in \alpha, \text{ and } 
 \label{eq:Cone}\\
 v\in \Win_\forall\, \text{ if } &\, \exists t \in \Pi^\fair.\, \forall s \in \Sigma^\fair. \, \play_v(s, t) \not \in \alpha. 
 \label{eq:Ctwo}
\end{align}
\end{subequations}

 In words, equations~\eqref{eq:Cone} and~\eqref{eq:Ctwo} express that for $i \in \{\exists, \forall\}$, $v$ is won by player $i$ if there exists an $i$-fair strategy for player $i$ that wins against all $(1-i)$-fair strategies of player $(1-i)$. However, in this case, $\Win_\exists \, \cup \, \Win_\forall \neq
V$. Indeed, equations~\eqref{eq:Cone} and~\eqref{eq:Ctwo} are
not complements of each other, that is,
\begin{align*}
  \exists s\in\Sigma^\fair.\,\forall t\in\Pi^\fair.\,\play(s,t)\in\alpha\qquad\not\Leftrightarrow\qquad
  \forall t\in\Pi^\fair.\,\exists s\in\Sigma^\fair.\,\play(s,t)\in\alpha.
\end{align*}
This observation shows that fair games in which winning regions are defined
by~\cref{equ:Call} undetermined.
The undetermined nodes $\orphans\subseteq V$ -- nodes from which none
of the players has a fair winning strategy -- form a separate set
of nodes, i.e., $V = \Win_\exists \dotcup \Win_\forall \dotcup
\orphans$.
To see this, consider the following example.

\begin{example}\label{example:division-of-objectives}\upshape
  Consider the fair game arena depicted in~\cref{fig:division} 
 where fair edges are shown by dashed lines,
  $\alpha=$ Parity$(\lambda)$ and each node is labeled by its priority assigned by $\lambda$.
  We
  observe that the player $\exists$ cannot enforce reaching the even node
 with a $\exists$-fair strategy from the two middle nodes.
  Every $\exists$-fair $\exists$-strategy $s$ has a counter
  $\forall$-fair $\forall$-strategy: choose the fair edge outgoing from
  the square node \emph{after} $s$ chooses the fair edge outgoing
  from the node with round corners.
  On the other hand, player $\forall$ cannot prevent the play from reaching the even node
   with a $\forall$-fair strategy from these
  nodes for exactly the same reason.
  Hence, the middle two nodes are neither in
  $\Win_\exists$ nor in $\Win_\forall$.
  That is, these two nodes are undetermined; therefore they form $\orphans$.

\end{example}
  \begin{figure}
  \centering
  \tikzset{every state/.style={minimum size=50pt}}
    \begin{tikzpicture}[
      auto,
      node distance=0.8cm,
      semithick
      ]
       \node[draw, rounded corners] (0) {$1$};
       \node (yo) [right of=0] {};
/       \node[rectangle,draw] (1) [right of=yo] {$1$};
       \node (yo1) [left of=0] {};
       \node[draw, rounded corners] (2) [left of=yo1] {$1$};
       \node (yo2) [right of=1] {};
       \node[draw, rounded corners] (3) [right of=yo2] {$2$};
       \path[->] (2) edge [loop left] node [below] {} (2);
       \path[->] (3) edge [loop right] node [below] {} (3);
       \path[->] (0) edge [bend right=30] node [pos=0.3,below] {} (1);
       \path[->] (1) edge [bend right=30] node [pos=0.3,below] {} (0);
       \path[->,dashed] (0) edge node [pos=0.3,left] {} (2);
       \path[->,dashed] (1) edge node [pos=0.3,left] {} (3);
       \draw [
    thick,
    decoration={
        brace,
        mirror,
        raise=0.5cm
    },
    decorate
] (0.west) -- (1.east);
       \draw [
    thick,
    decoration={
        brace,
        mirror,
        raise=0.5cm
    },
    decorate
] (2.west) -- (2.east);
       \draw [
    thick,
    decoration={
        brace,
        mirror,
        raise=0.5cm
    },
    decorate
] (3.west) -- (3.east);
    \node (t1) [below of=2] {$\mathsf{Win}_\forall$};
    \node (t2) [below of=3] {$\mathsf{Win}_\exists$};
    \node (t3) [below of=yo] {$O$};

    \end{tikzpicture}
    \caption{A simple fair game arena discussed
    in~\cref{example:division-of-objectives}.}
    \label{fig:division}
    
  \end{figure}

In order to better understand the distinction between
Equations~\ref{eq:Cone} and~\ref{eq:Ctwo}, we rely again on translation
to $\omega$-regular games.
In~\cref{lemma:equivalence-2a-3a} we show that the following is a reformulation of~\cref{eq:Cone}:
\begin{subequations}\label{eq:Ball}
\begin{equation}
  \exists s\in\Sigma.\forall t\in\Pi.\fair_\exists(\play_v(s,t))\wedge
 (\fair_\forall(\play_v(s,t)) \Rightarrow \play_v(s,t)\in\alpha).\label{eq:Bone}
\end{equation}

\begin{lemma}\label{lemma:equivalence-2a-3a}
A node $v \in V$ satisfies~\cref{eq:Cone} if and only if it satisfies~\cref{eq:Bone}. 
\end{lemma}

\begin{proof}[Proof of~\cref{lemma:equivalence-2a-3a}]
{(\ref{eq:Bone})$\Rightarrow$(\ref{eq:Cone})}: Let $v$ satisfy~\cref{eq:Bone}. Fix an $\exists$-strategy $\hat{s}\in \Sigma$ that satisfies 
\begin{equation}\label{eq:referenced}\forall t\in\Pi.\fair_\exists(\play_v(\hat{s},t))\wedge
    (\fair_\forall\,(\play_v(\hat{s},t)) \Rightarrow \play_v(\hat{s},t)\in\alpha).\tag{L2 - 1}
\end{equation} 

As for all $t\in\Pi$, $\fair_\exists(\play_v(\hat{s},t))$, we have
$\hat{s}\in \Sigma^\fair$.
Furthermore, as for all $t\in \Pi$, $\fair_\forall\,(\play_v(\hat{s},t)) \Rightarrow \play_v(\hat{s},t)\in\alpha$, we have for all $t \in \Pi^\fair$
that $\play_v(\hat{s},t)\in\alpha$. It follows that $v $ satisfies~\cref{eq:Cone}.

{(\ref{eq:Cone})$\Rightarrow$(\ref{eq:Bone})}: Let $v$ satisfy~\cref{eq:Cone}. Fix an $\exists$-strategy $\hat{s}\in \Sigma^\fair$ that satisfies
\begin{equation*}\forall t\in\Pi^\fair.\, \play_v(\hat{s}, t) \in \alpha.
\end{equation*} 

\begin{linenomath}
We will show that $\hat{s}$ satisfies~\cref{eq:referenced}. Pick a strategy $t \in \Pi$. Since $\hat{s} \in \Sigma^\fair$, $\fair_\exists(\play_v(\hat{s},t))$ trivially holds.
If $t \in \Pi^\fair$ then both $\fair_\forall(\play_v(\hat{s},t))$ and $\play_v(\hat{s},t) \in \alpha$ are satisfied. 
On the other hand if $t\not \in \Pi^\fair$, then there exists some $\exists$-strategy $s' \in \Sigma$ for which $\fair_\forall(\play_v(s',t))$ does not hold. If $\hat{s}$ is such a strategy, i.e. if $\play_v(\hat{s},t)$ is not $\forall$-fair, then the implication in~\cref{eq:referenced} trivially holds. 
If $\play_v(\hat{s},t)$ is $\forall$-fair, by~\cref{lemma:elongation-in} we know the existence of a $\forall$-strategy $t' \in \Pi^\fair$ with $\play_v(\hat{s},t) = \play_v(\hat{s},t')$. Due to~\cref{eq:Cone}, we have $\play_v(\hat{s},t') \in \alpha$, thus $v$ satisfies~\cref{eq:Bone}. 
\end{linenomath}
\end{proof}

A proof identical to that of~\cref{lemma:equivalence-2a-3a} witnesses that the following is a reformulation of~\cref{eq:Ctwo}:
\begin{equation}
	\exists t\in\Pi.\forall s\in\Sigma.\fair_\forall(\play_v(s,t))\wedge
	(\fair_\exists(\play_v(s,t)) \Rightarrow \play_v(s,t)\not\in\alpha).
	\label{eq:Btwo-alternative}
\end{equation}

\noindent Straightforward complementation of \cref{eq:Btwo-alternative} yields
\begin{equation*}
 \forall t \in \Pi. \exists s\in\Sigma. \fair_\forall(\play_v(s,t))\Rightarrow
    (\fair_\exists(\play_v(s,t)) \wedge \play_v(s,t)\in\alpha).\tag{$\neg$3b}
\end{equation*}
As the inner set is a Borel set, due to the determinacy theorem for games with Borel set objectives~\cite{Mar75}, the existential and universal quantifiers in the beginning can be swapped to obtain the following equivalent formulation:
\begin{equation}
 \exists s\in\Sigma.\forall t\in\Pi.\fair_\forall(\play_v(s,t))\Rightarrow
    (\fair_\exists(\play_v(s,t)) \wedge \play_v(s,t)\in\alpha).\tag{$\neg$3b}\label{eq:Btwo}
\end{equation}
\end{subequations}
To summarise, (\ref{eq:Cone})$=$(\ref{eq:Bone}) and (\ref{eq:Ctwo})$=$(\ref{eq:Btwo-alternative}) and $\neg$(\ref{eq:Ctwo})$=$(\ref{eq:Btwo}).
It is not hard to see that the difference between~\cref{eq:Bone}
and~\cref{eq:Btwo} -- so (\ref{eq:Cone}) and $\neg$(\ref{eq:Ctwo}) -- is in the way fairness is handled.
Namely, in~\cref{eq:Bone} player $\exists$ loses
whenever she plays unfairly regardless of how player $\forall$ plays.
Dually, in~\cref{eq:Btwo} player $\exists$ wins immediately
when player $\forall$ plays unfairly regardless of how player $\exists$
plays.
It follows that determinacy can be regained by deciding the winner of
the four different combinations of fairness with an $\omega$-regular
winning condition each, as summarized in the following table. 

\vspace{0.5cm}

  \begin{center}
  \begin{tabular}{|r|c|c|}
  \hline
    & $\fair_\exists(\tau)$ & $\neg\fair_\exists(\tau)$\\
    \hline
     $\fair_\forall(\tau)$ & $\tau\in\alpha$ & $\tau\in\gamma$ \\
     \hline
    $\neg \fair_\forall(\tau)$ & $\tau\in\delta$ & $\tau\in\beta$\\
  \hline
  \end{tabular}\label{table:division-of-objectives}
  \end{center}

\vspace{0.5cm}

  With this generalization, we obtain the winning condition~\eqref{eq:Bone} if $\beta=\gamma=\bot$ and $\delta=\top$ -- that is, $v \in \Win_\exists$ if and only if it satisfies~\eqref{eq:Bone}; otherwise, $v\in \Win_\forall$.
  Similarly, we obtain~\eqref{eq:Btwo} if $\gamma=\bot$ and $\beta=\delta=\top$ -- that is, $v \in \Win_\exists$ if and only if it satisfies~\eqref{eq:Btwo}; otherwise, $v\in \Win_\forall$.

We note that the discussion of determinacy has crucial importance to
the analysis of games and the decision of how to model particular
scenarios.
For example, if fairness of $\forall$-player arises from physical
constraints (as, e.g., in \cite{banerjee2022fast}) then it might make
sense to consider the winning condition~(\ref{eq:Ctwo})$=$(\ref{eq:Btwo}), which corresponds to
$\beta=\top$, as this allows player $\exists$ to win immediately when $\forall$-fairness is violated.
Dually, if fairness of player $\exists$ must be adhered to, then it
makes sense to consider the winning condition~(\ref{eq:Cone})$=$(\ref{eq:Bone}), which corresponds to
$\beta=\bot$.
Our formulation allows to further fine tune what happens when both act
unfairly by adjusting $\beta$.

Given the intuition that fairness constraints are actually additional
obligations for both players, the choice of $\gamma=\bot$ and
$\delta=\top$ assumed in Equations~\eqref{equ:Call}-\eqref{eq:Ball} is very
natural. However, allowing mutually unfair plays to be decided by a
different $\omega$-regular winning condition $\beta$, allows games with
more symmetric
winning semantics e.g., by setting $\beta=\alpha$.
We therefore restrict our attention in this paper to fair games with two
winning conditions $\alpha$ and $\beta$ while if $i$-player plays
fairly but $(1-i)$-player plays unfairly, $i$-player wins, i.e.,
$\gamma:=\bot$ and $\delta:=\top$. This is formalized
next.

\begin{definition}[Fair Games]\label{def:fairgames}
  A \emph{fair game} $G=(A,\alpha,\beta)$ consists of a fair game arena
  $A$ together with two ($\omega$-regular) winning conditions
  $\alpha,\beta\subseteq \mathsf{plays}(A)$ where $\alpha$ and $\beta$
  determine the winner of mutually fair and mutually unfair plays,
  respectively. In fair games, a play that is $i$-fair and
  $(1-i)$-unfair is won by player $i$.
  Formally, in the fair game $G=(A,\alpha,\beta)$, $v \in \Win_\exists $ if and only if,
  \begin{align}
    \exists s\in\Sigma.\forall
    t\in\Pi.\,&\fair_\exists(\play_v(s,t))\wedge
    (\fair_\forall(\play_v(s,t)) \Rightarrow
    \play_v(s,t)\in\alpha)\notag\\
   \vee (&\neg \fair_\exists(\play_v(s,t))\wedge \neg
   \fair_\forall(\play_v(s,t)) \wedge
   \play_v(s,t)\in\beta)\label{eq:Bexists}
  \end{align}
\end{definition}
  The winning region of player $\forall$ is defined dually by defining $v \in \Win_\forall$ if and only if,
  \begin{align*}
    \exists t\in\Pi.\forall s\in\Sigma.\,&\fair_\forall(\play_v(s,t))\wedge (\fair_\exists(\play_v(s,t)) \Rightarrow \play_v(s,t)\not\in\alpha)\notag\\
   \vee (&\neg \fair_\forall(\play_v(s,t))\wedge \neg \fair_\exists(\play_v(s,t)) \wedge \play_v(s,t)\not \in\beta)
  \end{align*}

 Just like in the acquisition of~\cref{eq:Btwo}, a straightforward complementation of \cref{eq:Bexists} together with a swap of existential and universal quantifiers (due to Borel determinacy) shows that the above formula is equivalent to the complement of~\cref{eq:Bexists}. Consequently, fair games are determined.

  \begin{notation}
  	We call a fair game $G=(A,\alpha,\beta)$ a fair $\alpha / \beta$ game.
  Further, if $\alpha$ or $\beta$ are given by mentioned winning conditions 
   (e.g. $\alpha=$ Parity$(\lambda)$, $\beta=\bot$),
   with slight abuse of notation, we refer to the game with the name of the objectives (e.g. fair parity/$\bot$ game).
  \end{notation}

\begin{remark}
Stochastic games (see, e.g.,~\cite{CJH03}) allow for an additional set $V_s$ of stochastic game nodes
that belong to neither player $\exists$ nor player $\forall$,
and for which the stochasticity is resolved uniformly at random.
It is known that for purposes of qualitive analysis (i.e., the computation of almost-sure winning strategies), stochastic games can be seen as the special case of $\forall$-fair games in which
$E(v)\subseteq E_f$ holds for all stochastic nodes $v\in V_s$,
and $E_f\cap E(v)=\emptyset$ for all non-stochastic nodes $v\in V_\exists\cup V_\forall$,
that is, all stochastic edges are fair edges, but no
non-stochastic edges are fair edges. This encoding treats
stochastic branching as adversarial for the system (player $\exists$).
\end{remark}

\subsection{Mutually Fair Strategies in Fair Parity Games}

\leavevmode\par\medskip

In~\cref{sec:forallfairtofair} and in particular in~\cref{example:division-of-objectives} we have discussed the mutually unfair plays and strategies that take such plays into account in fair $\alpha/\beta$ games.
In this section, we start restricting our attention to fair parity/$\beta$ games (as this will be our focus for most of the rest of the paper) and discuss the particularities of mutually fair strategies in such games. 
We will do this with the help of the games $G_1-G_4$ depicted in~\cref{fig:fairstrategies}.
None of these games allows for a mutually unfair play. This is because on all given arenas the unfair behaviour of one player
makes the play trivially fair for the other player. Therefore, the winning regions are independent of $\beta$ in these examples.

\begin{figure}
	\centering
	\tikzset{every state/.style={minimum size=10pt}}
	\begin{tikzpicture}[
		auto,
		node distance=0.8cm,
		semithick
		]
		\node[draw, rectangle] (n3f) {$3$};
		\node[draw, rounded corners] (n4f) [right= 7mm of n3f] {$4$};
		\node (empty) [above=0.5 cm of n4f] {};
		\path[->] (n3f) edge [loop above] node [below] {} (n3f);
		\path[->, dashed] (n3f) edge [bend right=30] node [pos=0.3,below] {} (n4f);
		\path[->, red, very thick] (n4f) edge [bend right=30] node [pos=0.3,below] {} (n3f);
		
		\node (label1-0) [above=5mm of n3f] {};
		\node (label1) [left=0.1cm of label1-0] {$G_1$:};
		\node[draw, rectangle] (n3) [right= 25mm of empty] {$3$};
		\node[draw, rectangle] (n1) [left= 7mm of n3] {$1$};
		\node (label2) [left=3mm of n1] {$G_2$:};
		\node[draw, rounded corners] (n4) [right= 7mm of n3] {$4$};
		\path[->] (n3) edge [loop above] node [below] {} (n3);
		\path[->, blue, very thick] (n1) edge [loop above] node [below] {} (n1);
		\path[->, dashed] (n3) edge [bend right=30] node [pos=0.3,below] {} (n4);
		\node (empty2) [below=0.5 cm of n4] {};
		\path[->] (n4) edge [bend right=30] node [pos=0.3,below] {} (n3);
		\path[->, blue, very thick] (n3) edge [bend right=30] node [pos=0.3,below] {} (n1);
		\path[->] (n1) edge [bend right=30] node [pos=0.3,below] {} (n3);
		\node[draw, rectangle] (n3n) [below= 1cm of n1] {$3$};
		\node (label3) [left=3mm of n3n] {$G_3$:};
		\node[draw, rounded corners] (n4n) [right= 7mm of n3n] {$4$};
		\node[draw, rounded corners] (n5n) [right= 7mm of n4n] {$5$};
		\path[->] (n3n) edge [loop above] node [below] {} (n3n);
		\path[->, dashed, blue, very thick] (n3n) edge [bend right=30] node [pos=0.3,below] {} (n4n);
		\path[->] (n4n) edge [bend right=30] node [pos=0.3,below] {} (n3n);
		\path[->, dashed] (n4n) edge [bend right=30] node [pos=0.3,below] {} (n5n);
		\path[->] (n5n) edge [bend right=30] node [pos=0.3,below] {} (n4n);
		\node[draw, rounded corners] (n4l) [right=25mm of empty2] {$4$};
		\node[draw, rectangle] (n3l) [left=7mm of n4l]{$3$};
		\node[draw, rounded corners] (n5l) [right= 7mm of n4l] {$5$};
		\node[draw, rounded corners] (n6l) [above= 5mm of n4l] {$6$};
		\node[draw, rectangle] (n1l) [below= 5mm of n4l] {$1$};
		\path[->] (n3l) edge [loop above] node [below] {} (n3l);
		\path[->, dashed] (n3l) edge [bend right=30] node [pos=0.3,below] {} (n4l);
		\path[->] (n4l) edge [bend right=30] node [pos=0.3,below] {} (n3l);
		\path[->, dashed, red, very thick] (n4l) edge [bend right=30] node [pos=0.3,below] {} (n5l);
		\path[->,red, very thick] (n5l) edge [bend right=30] node [pos=0.3,below] {} (n4l);
		\path[->, red, very thick] (n4l) edge [bend right=30] node [pos=0.3,below] {} (n6l);
		\path[->,red, very thick] (n6l) edge [bend right=30] node [pos=0.3,below] {} (n4l);
		\path[->, dashed, red, very thick] (n4l) edge [bend right=30] node [pos=0.3,below] {} (n1l);
		\path[->] (n1l) edge [bend right=30] node [pos=0.3,below] {} (n4l);
		
		\node (label4) [left=6mm of n6l] {$G_4$:};
	\end{tikzpicture}
	\caption{\small 
		Four fair parity/$\beta$ games: dashed lines represent fair edges.   Games $G_1$ and $G_4$ are won by $\exists$-player and $G_2$ and $G_3$ are won by $\forall$-player. In each case, a respective winning strategy is shown by colored edges. A set of colored edges represents a 	strategy that takes only the colored edges in the game, and whenever a source node is visited infinitely often, all its colored outgoing edges are taken alternatingly.
	}
	\label{fig:fairstrategies}
\end{figure}

In game $G_1$, both nodes are won by player $\exists$. Player $\forall$ loses
node $3$ since taking the self loop on $3$ makes the play visit $3$
infinitely often, however, it forces $\forall$ to play fairly, implying
that they must take the edge to $4$ infinitely often as well.
Therefore, every $\forall$-fair play is won by player $\exists$ since the
priority $4$ is seen infinitely often in such plays. Also note that if
$\forall$-player decides not to play fairly, they immediately lose
since all plays are trivially $\exists$-fair.

To get to game $G_2$, we append node $1$ to the left of $G_1$. Here,
all the nodes are won by player $\forall$. This is because they
win node $3$ by eventually taking the outgoing edge to $1$ and then
staying in $1$ forever with the self-loop. By doing so, player $\forall$
evades
his obligation to take the fair edge by forcing each play to see node
$3$ a finite number of times.

To get to game $G_3$, we append node $5$ to the right of game $G_1$.
Again, all the nodes are won by player $\forall$ even though this time they
cannot evade taking their fair edges.
In this game, player $\forall$ wins due to the obligation of player $\exists$ to play fairly.
In a play starting from $3$, player $\forall$ must eventually take the
outgoing edge to $4$. From there on, any $\forall$-fair play will visit node $4$
infinitely often, forcing $\exists$ to take their outgoing edge to $5$
infinitely often. As a consequence, in every mutually fair play, $5$ is
seen infinitely often. Therefore, all nodes in the game are won by player $\forall$.

Finally, to get to game $G_4$, we append two nodes to game $G_3$. This
time, all the nodes are won by player $\exists$. They still need
to take their fair outgoing edges to $5$ (and this time, also to the
new node $1$) infinitely often. But now they can also take the
outgoing edge to $6$ infinitely often and thereby win the game.

\section{Reduction to Games without Fairness}\label{section:reductionparity}

In this section, we show how fair B\"uchi and parity games can be reduced to games without
fairness constraints. Our reductions work by systematically replacing each fair node in the fair game with a 3-step gadget, transforming the fair game into an equivalent standard parity game.

We establish reductions with linear blow-up for fair $\alpha/\beta$ games when $\alpha$ is parity and $\beta \in \{\top, \bot\}$, and a reduction with quadratic blow-up for fair parity/parity games where both objectives are non-trivial. Our gadget constructions are inspired by the gadget-based approach of Chatterjee et al.~\cite{CJH03} for stochastic parity games.

Section~\ref{subsection:reductionbuchi} provides an intuitive presentation via fair Büchi/$\bot$ games, building understanding of the gadget construction with minimal technical complexity. Section~\ref{sec:alternative-gadgets} presents alternative gadget variants and their implications. Section~\ref{subsection:reductionparitybot} gives the formal reduction for fair parity/$\bot$ games. Section~\ref{subsection:reductionparityparity} extends to fair parity/parity games, where memory is required to track objective interactions, yielding a reduction with quadratic increase of game arena size.

\subsection{Reduction of Fair B\"uchi/$\bot$ Games}\label{subsection:reductionbuchi}

\leavevmode\par\medskip

Before presenting the formal framework for fair parity/$\bot$ games in Section~\ref{subsection:reductionparitybot}, we develop intuition through a simpler case: fair Büchi/$\bot$ games. Since Büchi games are a special case of parity games, this serves as a gentle introduction to the general reduction technique while capturing the essential ideas.

\noindent \textbf{Büchi conditions as restricted parity conditions.} A Büchi condition is a Parity$(\lambda)$ condition where $\lambda$ ranges over $\{1, 2\}$. A Büchi condition can alternatively be represented by specifying the set $T$ of nodes with priority~$2$, rather than defining the priority function $\lambda$ explicitly. A Büchi$(T)$ condition is then given by $\varphi = \{\tau \mid \mathsf{Inf}(\tau)\cap T\neq \emptyset\}$, where $T = \lambda^{-1}(2)$.

\noindent \textbf{The reduction.} Consider a fair Büchi/$\bot$ game $G=(A, \text{Büchi}(T), \bot)$ where $A = (V_\exists,V_\forall,E,E_f)$ is a fair game arena. Our reduction replaces each fair node $v \in \Vfair$ with an appropriate 3-step gadget, as shown in~\cref{fig:buchigadgets-existential}:
\begin{itemize}
    \item Each node $v \in \Vfair_\exists$ is replaced with an \emph{existential gadget} (left gadget in~\cref{fig:buchigadgets-existential}).
    \item Each node $v \in \Vfair_\forall$ is replaced with a \emph{universal gadget} (right gadget in~\cref{fig:buchigadgets-existential}).
\end{itemize}
All incoming edges to $v$ in $G$ are redirected to the top node of the gadget.
The gadget's outgoing edges connect to the successors of $v$ in $G$: specifically, to $E(v)$ (all outgoing edges) and $E_f(v)$ (fair outgoing edges only).

The reduced game $G'$ has priority function $\lambda: V \cup V^{\mathsf{gad}} \to [3]$, where $V^{\mathsf{gad}}$ denotes the newly introduced gadget nodes. Priorities are assigned as follows:
\begin{itemize}
    \item \emph{Original nodes:} For $v \in T$, set $\lambda(v) = 2$; for $v \in V \setminus T$, set $\lambda(v) = 1$.
    \item \emph{Gadget bottom nodes:} The priorities of the bottom-level nodes in each gadget are explicitly shown next to the nodes in~\cref{fig:buchigadgets-existential}.
    \item \emph{Gadget intermediate nodes:} The priorities of intermediate nodes $v_1$ and $v_2$ do not affect the outcome. Without loss of generality, we can assign them the minimum priority $1$.
\end{itemize}
This reduction yields an equivalent parity game $G'$ with three priorities, preserving winning regions: player~$i \in \{\exists, \forall\}$ wins a node $v \in V$ in the original fair game $G$ if and only if player~$i$ wins the corresponding node in the reduced parity game~$G'$.

\noindent \textbf{Organization of this section.} We present the intuition behind this reduction through multiple complementary perspectives. We emphasize that the explanations provided here are intended to build intuition rather than constitute a formal proof—we provide the rigorous correctness argument  in~\cref{subsection:reductionparitybot}. Our presentation is structured as follows: 
\begin{itemize} 
  \item We begin by describing the gadget construction in Section~\ref{subsec:construction}, accompanied by an illustrative example. 
  \item Section~\ref{subsec:branching} explains the design rationale behind the branching structure of the gadgets, clarifying how they encode fairness constraints through priorities.
  \item Section~\ref{subsec:local-correctness} provides a local correctness argument, demonstrating why the gadgets preserve winning regions at the level of individual nodes. \end{itemize}

\noindent In Section~\ref{sec:alternative-gadgets}, we present alternative gadget designs that achieve the same reduction with different structural properties. The formal reduction for fair parity/$\bot$ games is developed in~\cref{subsection:reductionparitybot}; the correctness of the Büchi/$\bot$ case follows as a special case. While the formal proofs establish global correctness through rigirous argumentation, our goal here is to provide local, intuitive understanding of the gadget mechanisms.


\subsubsection{Gadget construction via an example}\label{subsec:construction}

We illustrate the reduction by applying it to a concrete Büchi/$\bot$ game. 
\begin{example}
Consider the Büchi/$\bot$ game derived from the fair game arena in~\cref{fig:division}, previously discussed in~\cref{example:division-of-objectives}. The $\alpha$-objective is the Büchi condition inherited from that example, and we set $\beta = \bot$ to declare mutually unfair plays as winning for player~$\forall$. 

In the original fair game, the leftmost and rightmost nodes are won by players $\forall$ and $\exists$, respectively, based solely on the $\alpha$-objective. The two middle nodes are won by player~$\forall$ due to the choice of $\beta = \bot$.

Applying the gadgets from~\cref{fig:buchigadgets-existential} yields the equivalent parity game shown in~\cref{fig:Buechi-gadgets-running-ex}.
In this figure, the nodes with the gray background depict game nodes $V$ inherited from $G$ and the newly added gadget nodes $V^{\mathsf{gad}}$ are depicted with white background, omitting the notation for the gadget nodes.

   One can readily verify that the nodes with the gray background have the same winners in both the Büchi/$\bot$ game and the reduced parity game: all nodes with gray background except for the rightmost one are won by the $\forall$ player. 
   We depict the associated positional winning strategy for $\forall$ player from his winning region by thicker blue edges.
\end{example}

\begin{figure}
  \begin{minipage}[t]{0.49\linewidth}\centering
\tikzset{every state/.style={minimum size=15pt}, every node/.style={minimum size=15pt}}\scalebox{.8}{ 
    \begin{tikzpicture}[
    auto,
    node distance=0.8cm,
    semithick
    ]
     \node[draw, rounded corners] (0) {$v$};
     \node (yo) [below of=0] {};
     \node (yoz) [below of=yo] {};
     \node[rectangle,draw] (1) [left of=yoz] {$v_1$};
     \node[rectangle,draw] (2) [right of=yoz] {$v_2$};
     \node (yoy) [below of=yoz] {};
     \node[rectangle,draw, label=right:{$2$}] (4) [below of=yoy] {$v^\forall_1$};
     \node (yo3) [left of=4] {};
     \node[draw, rounded corners, colored, label=right:{$1$}] (3) [left of=yo3] {$v^\exists_1$}; 
     \node (yo4) [right of=4] {};
     \node[draw, rounded corners, label=right:{$3$}] (5) [right of=yo4] {$v^\exists_2$}; 
     \node (yo5) [below of=3] {};
     \node (yo6) [below of=4] {};
     \node (yo7) [below of=5] {};
     \node (out1) [below of=yo5] {\color{colored}$E(v)$};
     \node (out2) [below of=yo6] {$E_f(v)$};
     \node (out3) [below of=yo7] {$E(v)$};
     \path[->] (0) edge node {} (1);
     \path[->] (0) edge node {} (2);
     \path[->, colored] (1) edge node {} (3);
     \path[->] (1) edge node {} (4);
     \path[->] (2) edge node {} (5);
     \path[->, colored] (3) edge node {} (out1);
     \path[->] (4) edge node {} (out2);
     \path[->] (5) edge node {} (out3);
   
  \end{tikzpicture}
\qquad
    \begin{tikzpicture}[
    auto,
    node distance=0.8cm,
    semithick
    ]
    \node[draw, rounded corners] (0) {$v$}; 
     \node (yoz) [below of=yo] {};
     \node[rectangle,draw] (1) [left of=yoz] {$v_1$};
     \node (yoy) [below of=yoz] {};
     \node[rectangle,draw, label=right:{$2$}] (4) [below of=yoy] {$v^\forall_1$}; 
     \node (yo3) [left of=4] {};
     \node[draw, rounded corners, colored, label=right:{$1$}] (3) [left of=yo3] {$v^\exists_1$}; 
     \node (yo5) [below of=3] {};
     \node (yo6) [below of=4] {};
     \node (out1) [below of=yo5] {\color{colored}$E_f(v)$};
     \node (out2) [below of=yo6] {$E(v)$};
     \path[->] (0) edge node {} (1);
     \path[->, colored] (1) edge node {} (3);
     \path[->] (1) edge node {} (4);
     \path[->, colored] (3) edge node {} (out1);
     \path[->] (4) edge node {} (out2);
  \end{tikzpicture}}
\caption{Existential gadgets that replace $v \in \Vfair_\exists$ (left) and $v \in \Vfair_\forall$ (right) in the fair B\"uchi/$\bot$ game.}\label{fig:buchigadgets-existential}
\end{minipage}
\vrule 
\begin{minipage}[t]{0.49\linewidth} \centering
\tikzset{every state/.style={minimum size=15pt}, every node/.style={minimum size=15pt}}
   \scalebox{.8}{ \begin{tikzpicture}[
    auto,
    node distance=0.8cm,
    semithick
    ]
     \node[rectangle,draw] (0) {$v$}; 
     \node (yo) [below of=0] {};
     \node (yoz) [below of=yo] {};
     \node[draw, rounded corners, colored] (1) [left of=yoz] {$v_1$};
     \node[draw, rounded corners] (2) [right of=yoz] {$v_2$};
     \node (yoy) [below of=yoz] {};
     \node[rectangle,draw, label=right:{$2$}] (4) [below of=yoy] {$v^\forall_2$}; 
     \node (yo3) [left of=4] {};
     \node[draw, rounded corners, colored, label=right:{$1$}] (3) [left of=yo3] {$v^\exists_1$}; 
     \node (yo4) [right of=4] {};
     \node[draw, rounded corners, label=right:{$3$}] (5) [right of=yo4] {$v^\exists_2$}; 
     \node (yo5) [below of=3] {};
     \node (yo6) [below of=4] {};
     \node (yo7) [below of=5] {};
     \node (out1) [below of=yo5] {\color{colored}$E(v)$};
     \node (out2) [below of=yo6] {$E_f(v)$};
     \node (out3) [below of=yo7] {$E(v)$};
     \path[->, colored] (0) edge node {} (1);
     \path[->] (0) edge node {} (2);
     \path[->, colored] (1) edge node {} (3);
     \path[->] (2) edge node {} (4);
     \path[->] (2) edge node {} (5);
     \path[->, colored] (3) edge node {} (out1);
     \path[->] (4) edge node {} (out2);
     \path[->] (5) edge node {} (out3);
  \end{tikzpicture}
\qquad
    \begin{tikzpicture}[
    auto,
    node distance=0.8cm,
    semithick
    ]
    \node[rectangle,draw] (0) {$v$};
     \node (yoz) [below of=yo] {};
     \node[draw, rounded corners, colored] (1) [left of=yoz] {$v_1$};
     \node[draw, rounded corners] (2) [right of=yoz] {$v_2$};
     \node (yoy) [below of=yoz] {};
     \node[rectangle,draw, label=right:{$2$}] (4) [below of=yoy] {$v^\forall_2$}; 
     \node (yo3) [left of=4] {};
     \node[draw, rounded corners, colored, label=right:{$1$}] (3) [left of=yo3] {$v^\exists_1$}; 
     \node (yo5) [below of=3] {};
     \node (yo6) [below of=4] {};
     \node (out1) [below of=yo5] {\color{colored}$E_f(v)$};
     \node (out2) [below of=yo6] {$E(v)$};
     \path[->, colored] (0) edge node {} (1);
     \path[->] (0) edge node {} (2);
     \path[->, colored] (1) edge node {} (3);
     \path[->] (2) edge node {} (4);
     \path[->, colored] (3) edge node {} (out1);
     \path[->] (4) edge node {} (out2);
  \end{tikzpicture}}
  \caption{Universal gadgets that replace $v \in \Vfair_\exists$ (left) and $v \in \Vfair_\forall$ (right) in the fair B\"uchi/$\bot$ game. } \label{fig:buchigadgets-universal}
\end{minipage}\caption*{Nodes are labeled with their priorities. Colored edges can be ignored if $v \in T$.}
\end{figure}

\subsubsection{Intuition behind the branching structure}\label{subsec:branching}

Having seen the gadget construction in action, we now explain the design rationale: \emph{why} do these gadgets correctly encode fairness constraints? The key insight is that gadgets monitor which player controls branching at fair nodes, using the parity condition to detect and penalize unfair behavior. Intuitively, if a fair node is visited infinitely often and one player behaves fairly, then the owner of that fair node must eventually relinquish control of the branching, or pay a high price.

\noindent \textbf{\textbullet \hspace{1mm} $\Vfair_\exists$ gadget.}
Consider the gadget on the left in~\cref{fig:buchigadgets-existential}, which replaces a node $v \in \Vfair_\exists$. Player~$\exists$ faces a strategic choice at the top node:
\begin{itemize}
    \item \emph{Retain control:} Move to $v_2$, then to $v_2^\exists$, triggering priority~$3$. This allows $\exists$ to choose any successor of $v$ (fair or unfair).
    \item \emph{Yield control:} Move to $v_1$, passing the decision to player~$\forall$.
\end{itemize}

When player~$\exists$ yields control by moving to $v_1$, player~$\forall$ then chooses:
\begin{itemize}
    \item \emph{Pick a fair move:} Select a fair successor via $v_1^\forall$, triggering priority~$2$.
    \item \emph{Return control:} Move to $v_1^\exists$, allowing $\exists$ to pick any successor, triggering priority~$1$.
\end{itemize}

\noindent \textit{Why this works.} Player~$\exists$ can retain control indefinitely by repeatedly triggering priority~$3$, but this is costly: since $3$ is the maximum priority, doing so infinitely often causes $\exists$ to lose the parity game. Thus, $\exists$ can only retain control finitely many times, which suffices to mimic any $\exists$-fair strategy from $G$ that visits $v$ infinitely often. 

\noindent \textbf{\textbullet \hspace{1mm} $\Vfair_\forall$ gadget.}
Consider the gadget on the right side in~\cref{fig:buchigadgets-existential}, which replaces a node $v \in \Vfair_\forall$. The situation is dual. Player~$\forall$ can:
\begin{itemize}
    \item \emph{Retain control:} Move to $v_1^\forall$, and choose any successor, triggering priority~$2$.
    \item \emph{Yield control:} Move to $v_1^\exists$, passing the decision to player~$\exists$, allowing it to pick any fair successor, triggering priority~$1$.
\end{itemize}

\noindent \textit{Why this works.} As in the $\Vfair_\exists$ gadget, player~$\forall$ can retain control at a cost—but here the cost is priority~$2$, which is less catastrophic than the priority~$3$ cost in the $\Vfair_\exists$ gadget. This asymmetry is crucial for handling mutually unfair plays: if both players ``need to" behave unfairly, they will both choose the ``retain control" option in their respective gadgets, taking the rightmost branches. In such plays, the maximum priority seen will be~$3$, ensuring that mutually unfair plays are won by player~$\forall$.

  \begin{figure}
  \centering
  \tikzset{every state/.style={minimum size=50pt}}
    \begin{tikzpicture}[
      auto,
      node distance=0.8cm,
      semithick
      ]
       \node[draw, rounded corners, fill=gray!30] (gray0) {$1$};
       \node (yo) [right=1.5 cm of gray0] {};
/       \node[draw,rounded corners, fill=gray!30] (gray1) [right=1.5cm of yo] {$1$};
       \node (yo1) [left=2.5cm of gray0] {};
       \node (yo2) [right=2.5cm of gray1] {};
       \path[->] (gray0) edge [bend right=15] node [pos=0.3,below] {} (gray1);
       \path[->] (gray1) edge [bend right=15] node [pos=0.3,below] {} (gray0);


    \node (gyo) [below of=gray0] {};
     \node (yoz) [below of=gyo] {};
     \node[rectangle,draw] (1) [left=1em of yoz] {\phantom{1}};
     \node[rectangle,draw] (2) [right=1em of yoz] {\phantom{1}};
     \node (yoy) [below of=yoz] {};
     \node[rectangle,draw] (4) [below of=yoy] {$2$};
     \node (g-yo3) [below of=1] {};
     \node (yo3) [below of=g-yo3] {};
     \node[draw, rounded corners] (3) [left=1em of yo3] {$1$}; 
     \node (g-yo4) [below of=2] {};
     \node (yo4) [below of=g-yo4] {};
     \node[draw, rounded corners] (5) [right=1em of yo4] {$3$}; 

     \path[->] (gray0) edge node {} (1);
     \path[->] (gray0) edge node {} (2);
     \path[->] (1) edge node {} (3);
     \path[->, blue, ultra thick] (1) edge node {} (4);
     \path[->, blue, ultra thick] (2) edge node {} (5);


     \node (below3) [below=0.7cm of 3] {};

     \node[draw, rounded corners, fill=gray!30] (gray2) [left=1cm of below3] {$1$}; 
     \path[->] (gray2) edge [loop left] node [below] {} (gray2);

     \tikzset{extends/.style={->, shorten >=1pt}}

     \coordinate (mid3) at ($(3.south |- gray2.east)$);
     \coordinate (mid4) at ($(4.south |- gray2.east)$);
     \coordinate (mid5) at ($(5.south |- gray2.east)$);

     \draw (3.south) -- (mid3);
     \draw[blue, ultra thick] (4.south) -- (mid4);
     \draw (5.south) -- (mid5);

     \draw[extends] (mid5) -- (gray2.east);
     \draw[extends, blue, ultra thick] (mid4) -- (gray2.east);


    \node (below5) [below of=5] {};
    \coordinate[right=1em of below5] (under5);
    \draw (3.south) -- (under5);
    \draw (5.south) -- (under5);
    \path[->] (under5) edge node {} (gray1);



     \node (scnd-yo) [below of=gray1] {};
     \node[rectangle,draw] (scnd-1) [below of=scnd-yo] {\phantom{1}};
     \node (scnd-yoy) [below of=scnd-1] {};
     \node (scnd-yoyo) [below of=scnd-yoy] {};
     \node[rectangle,draw] (scnd-4) [right=1em of scnd-yoyo] {$2$}; 
     \node[draw, rounded corners] (scnd-3) [left=1em of scnd-yoyo] {$1$}; 
     \path[->] (gray1) edge node {} (scnd-1);
     \path[->] (scnd-1) edge node {} (scnd-3);
     \path[->, blue, ultra thick] (scnd-1) edge node {} (scnd-4);

     \node (below-scnd-4) [below=0.7cm of scnd-4] {};
      \node[draw, rounded corners, fill=gray!30] (gray3) [right=1cm of below-scnd-4] {$2$};
      \path[->] (gray3) edge [loop right] node [below] {} (gray3);

     \coordinate (mid-scnd-3) at ($(scnd-3.south |- gray3.west)$);
     \coordinate (mid-scnd-4) at ($(scnd-4.south |- gray3.west)$);

     \draw (scnd-3.south) -- (mid-scnd-3);
     \draw(scnd-4.south) -- (mid-scnd-4);
     \draw[extends] (mid-scnd-3) -- (gray3.west);

     \path[->, blue, ultra thick] (scnd-4) edge node {} (gray0);


       \draw [
    thick,
    decoration={
        brace,
        mirror,
        raise=0.5cm
    },
    decorate
] (gray2.west) -- (gray2.east);
       \draw [
    thick,
    decoration={
        brace,
        mirror,
        raise=0.5cm
    },
    decorate
] (gray3.west) -- (gray3.east);
    \node (t1) [below of=gray2] {$\mathsf{Win}_\forall$};
    \node (t2) [below of=gray3] {$\mathsf{Win}_\exists$};

    \end{tikzpicture}
    \caption{The reduced parity game equivalent to the Büchi/$\bot$ game where the game graph and the $\alpha$ priorities are inherited from
    the fair game arena depicted in~\cref{fig:division}. The reduction uses the existential gadgets from~\cref{fig:buchigadgets-existential}.}
    \label{fig:Buechi-gadgets-running-ex}
  
  \end{figure}

\subsubsection{Local preservation of winning regions}\label{subsec:local-correctness}
We now argue informally why the gadget reduction preserves winning regions at the local level. The formal proof appears in~\cref{subsection:reductionparitybot}; here we focus on building intuition by examining how individual nodes behave under the transformation.

\paragraph*{Recall: Winning condition for fair Büchi/$\bot$ games.}
 We recall that the $\exists$-winning regions for fair Büchi/$\bot$ games are given via~\cref{eq:Bone}, that is, $v \in \Win_\exists(G)$ if and only if
\begin{equation}
    \exists s\in\Sigma.\forall t\in\Pi.\fair_\exists(\play_v(s,t))\wedge
   (\fair_\forall(\play_v(s,t)) \Rightarrow \play_v(s,t)\in \text{B\"{u}chi(T)}) \notag
\end{equation}

According to this, player $\exists$ wins a node $v$ in $G$ iff she has a $\exists$-fair strategy from $v$ that defeats all $\forall$-fair counter-strategies. 
Conversely, player~$\forall$ wins $v$ if either (case I)~there exists a $\forall$-fair strategy that defeats all $\exists$-fair counter-strategies, or (case II)~player~$\exists$ has no $\exists$-fair strategy from $v$.

For nodes $v \notin \Vfair$, the gadgets make no changes, so the local winning conditions are trivially preserved. We therefore focus on fair nodes $v \in \Vfair$.

\noindent \textbf{Case: Player~$\exists$ wins $v \in \Vfair_\exists$.}
Suppose $v \in \Win_\exists \cap \Vfair_\exists$, and let $s \in \Sigma^\fair$ be an $\exists$-fair winning strategy from $v$. We will focus our attention to this node $v$ in the reduced game $G'$.

Consider a play $\rho = \play_v(s,t)$ for some $\forall$-fair strategy $t \in \Pi^\fair$. We distinguish two cases based on how often $v$ is visited in $\rho$:
\begin{description}[font=\normalfont\itshape]
    \item[Case (i): $v$ visited finitely often.] In the reduced game, player~$\exists$ takes the rightmost branch (moving through $v_2 \to v_2^\exists$) and mimics the successor choices from $\rho$. Since $v$ is visited finitely often in $\rho$, priority~$3$ appears finitely often in $G'$ and does not affect the winner. Therefore, player~$\exists$ wins $v$ in $G'$.
    
   \item[Case (ii): $v$ and some priority-$2$ node visited infinitely often.] Player~$\exists$ \\takes the left branch to $v_1$ in the gadget. Since $\rho$ is $\exists$-fair and visits $v$ infinitely often, all fair successors $E_f(v)$ must be visited infinitely often in $\rho$ (otherwise $s$ would not be $\exists$-fair). This means the play cycles back to $v$ regardless of which fair successor is chosen.

     Player~$\exists$ must now win from both $v_1^\exists$ and $v_1^\forall$:
    \begin{itemize}
        \item If player~$\forall$ moves to $v_1^\forall$ (taking a fair successor), the priority sequence is $\lambda(v) \to 1 \to 2 \to \lambda(w)$ for $w \in E_f(v)$. Since priority~$2$ is seen infinitely often, player~$\exists$ wins.
        \item If player~$\forall$ moves to $v_1^\exists$ (returning control), player~$\exists$ mimics the successor choices from $\rho$. Since $\rho$ visits a priority-$2$ node infinitely often, player~$\exists$ wins.
    \end{itemize}
\end{description}

\noindent \textbf{Case: Player~$\exists$ wins $v \in \Vfair_\forall$.}
Suppose $v \in \Win_\exists \cap \Vfair_\forall$, and let $s \in \Sigma^\fair$ be an $\exists$-fair winning strategy. Again, we distinguish cases:

\begin{description}[font=\normalfont\itshape]
    \item[Case (i): $v$ visited finitely often.] There is no constraint on fair edges being taken. Player~$\exists$ wins from all successors of $v$ in $G$, hence also in the gadget:
    \begin{itemize}
        \item If player~$\forall$ takes the right branch to $v_1^\forall$ (choosing any successor), the play either cycles through $v$ (triggering priority~$2$ infinitely often, so $\exists$ wins) or reaches another cycle where $\exists$ wins.
        \item If player~$\forall$ takes the left branch to $v_1^\exists$, player~$\exists$ mimics $\rho$ and ensures $v$ is visited finitely often, winning as in the original game.
    \end{itemize}
    
    \item[Case (ii): $v$ and some priority-$2$ node visited infinitely often.] All fair successors $E_f(v)$ lie on a cycle with $v$:
    \begin{itemize}
        \item If player~$\forall$ moves to $v_1^\forall$, priority~$2$ is visited infinitely often, so player~$\exists$ wins.
        \item If player~$\forall$ moves to $v_1^\exists$, player~$\exists$ mimics $\rho$ and wins.
    \end{itemize}
\end{description}

\noindent \textbf{Case: Player~$\forall$ wins via a $\forall$-fair strategy (case I).}
Now suppose $v \in \Win_\forall$, and player~$\forall$ has a $\forall$-fair winning strategy $t \in \Pi^\fair$. Consider a play $\rho = \play_v(s,t)$ for some $\exists$-fair strategy $s \in \Sigma^\fair$. The argument is dual to the $\exists$-winning case.

For $v \in \Vfair_\exists$, we distinguish:
\begin{description}[font=\normalfont\itshape]
    \item[Case (i): $v$ visited finitely often] 

    \begin{itemize}
        \item If player~$\exists$ moves to $v_1$, player~$\forall$ responds by moving to $v_1^\forall$, mimicking $\rho$ by taking fair successors. Since $v$ is visited finitely often, player~$\forall$ wins.
        \item If player~$\exists$ moves to $v_2$, the play either visits $v$ finitely often (so priority~$3$ appears finitely often and $\forall$ wins) or visits priority~$3$ infinitely often (and $\forall$ wins by parity).
    \end{itemize}
    
    \item[Case (ii): $v$ visited infinitely often] No priority-$2$ nodes are visited infinitely often in $\rho$.
    \begin{itemize}
        \item If player~$\exists$ moves to $v_1$, player~$\forall$ responds by moving to $v_1^\exists$. No priority-$2$ node is visited infinitely often in $\rho$, so there are no priority-$2$ nodes that can be reached by $\exists$ player choices. Therefore, player~$\forall$ wins in $G'$ as in $G$.
        \item If player~$\exists$ moves to $v_2$, priority~$3$ is visited infinitely often, so player~$\forall$ wins by parity.
    \end{itemize}
\end{description}

For $v \in \Vfair_\forall$, player~$\exists$ moves to $v_1$:
\begin{description}[font=\normalfont\itshape]
    \item[Case (i): $v$ visited finitely often] From $v_1$, player~$\forall$ moves to $v_1^\forall$, mimics $\rho$, and wins.
    \item[Case (ii): $v$ visited infinitely often] No priority-$2$ nodes are visited infinitely often in $\rho$. From $v_1$, player~$\forall$ moves to $v_1^\exists$; no priority-$2$ node is visited infinitely often, so player~$\forall$ wins.
\end{description}

\noindent \textbf{Case: Player~$\forall$ wins due to mutually unfair plays (case II).}
Finally, suppose player~$\exists$ has no $\exists$-fair strategy from $v$. In mutually unfair plays, player~$\exists$ must signal priority~$3$ infinitely often (by repeatedly retaining control in $\Vfair_\exists$ gadgets), while player~$\forall$ signals priority~$2$ infinitely often (by retaining control in $\Vfair_\forall$ gadgets). Since the maximum priority is~$3$ (odd), the parity condition ensures player~$\forall$ wins such plays.


\begin{remark}\label{rem:minimal-gadgets}
In the arguments above, observe that whenever the root node $v$ of a gadget has priority $\lambda(v) = 2$ (i.e., $v \in T$), player~$\forall$ has no strategic incentive to take the colored edges in the gadgets of~\cref{fig:buchigadgets-existential}. 

The reason is that player~$\forall$ would only consider these edges if $v$ were visited infinitely often—but in such cases, priority~$2$ is already triggered by $v$ itself, making the right (black) branch strictly more advantageous than the colored branch. Thus, the colored edges become redundant whenever $v \in T$.

Eliminating these redundant edges yields \emph{equivalent minimal gadgets} for fair Büchi/$\bot$ games, as discussed in detail in~\cref{rem:opt}.
\end{remark}


\subsection{Alternative Gadgets}\label{sec:alternative-gadgets}

The parity/$\bot$ reduction presented in the next section employs 3-step gadgets to replace fair nodes, similar to the ones presented in the previous section. However, for each fair node, we provide two equivalent gadget variants. In this section, we explain how to interpret these variants and why both are presented.
We refer to them as \emph{existential} and \emph{universal} gadgets, according to the owner of the root node. These variants differ in who controls the initial branching within the gadget, yet both correctly encode the same fairness constraints.

\paragraph*{The two variants.} For parity/$\bot$ games, the left and right sides of~\cref{fig:existentialgadgetsparitybot} illustrate existential gadgets (where player~$\exists$ controls the root node) and universal gadgets (where player~$\forall$ controls the root node), respectively. Similarly, for Büchi/$\bot$ games, \cref{fig:buchigadgets-existential} and~\cref{fig:buchigadgets-universal} illustrate existential and universal gadgets, respectively. Although we provided the intuitive explanation in~\cref{subsection:reductionbuchi} using existential gadgets, we remark that the universal gadgets in~\cref{fig:buchigadgets-universal} can also be used for the reduction. Its correctness is again a special case of the parity/$\bot$ proof.

Although we present both variants, for the purposes of the reduction it suffices to fix one gadget type and apply it uniformly to all appropriate fair nodes. For instance, one may replace all fair nodes $v \in \Vfair_\exists$ with existential gadgets and all $v \in \Vfair_\forall$ with universal gadgets—or vice versa. More generally, the choice of variant can be made independently for each node, even at random, without affecting correctness.

\paragraph*{Key insight: Separating ownership from control.} The existence of both variants reveals an important structural property: \emph{the owner of a fair node need not be the player who controls the initial branching in the corresponding gadget.} For example, a fair node owned by player~$\exists$ can be replaced with either an existential gadget (where $\exists$ controls branching) or a universal gadget (where $\forall$ controls branching).

This flexibility arises because the gadgets encode fairness through a more subtle mechanism than simple ownership. Both players participate in the branching decisions within the gadget, and the priority structure ensures that fairness obligations are enforced regardless of who makes the first choice. In this sense, the gadgets implement a form of \emph{mutual control}: neither player can unilaterally avoid their fairness obligations, but both have strategic options for how to respond to the other player's behavior.


\paragraph*{Connection to stochastic parity games.} This flexibility has implications for stochastic games. Stochastic parity and Rabin games correspond to $\forall$-fair games in our framework, where random nodes correspond to $\Vfair_\forall$ nodes. In~\cite{CJH03} and~\cite{CLH05}, the authors use universal gadgets to replace random ($\Vfair_\forall$) nodes—that is, the player constrained by fairness controls the first-level branching of the gadget. Our existential gadgets show that this need not be the case: the choice of gadget variant is not dictated by the player constrained by fairness. This provides a deeper understanding of how stochasticity and fairness connect to the interplay between players. In particular, our existential gadgets for parity/$\bot$ games yield an alternative gadget-based reduction for stochastic parity games.


\paragraph*{Parity/parity case.} We conjecture that, analogous to the parity/$\bot$ case, two variants of each gadget can also be constructed for the parity/parity reduction (Section~\ref{subsection:reductionparityparity}). However, to avoid further complicating the presentation, we provide only the existential variant in that setting. Exploring the universal variant for parity/parity games remains an interesting direction for future work.

\subsection{Reduction of Fair Parity/$\bot$ Games}\label{subsection:reductionparitybot}

\leavevmode\par\medskip

In this section, we extend our attention from fair games with B\"uchi conditions to 
fair games with parity conditions. 
We start by considering fair $\alpha / \beta$ games with
$\alpha$ being a parity condition and $\beta$ being $\bot$
and present a linear reduction from such fair parity/$\bot$ games to (normal) parity
games. To this end, let $G=(A, \text{Parity}(\lambda), \bot)$ be
a fair parity/$\bot$ game, where $A = (V_\exists,V_\forall,E,E_f)$ is a fair game arena such that $V = V_\exists \dotcup V_\forall$ and where $\lambda: V \to [2k]$ is a priority function.

The reduction to parity games replaces fair nodes $v \in
\Vfair$ in $G$ with the
gadgets given in~\cref{fig:existentialgadgetsparitybot}. 
Nodes $v \in \Vfair_\exists$ in $G$ are replaced with one of the
gadgets on the top (i.e., the incoming edges to $v$ are redirected to $v$
in the root, and the outgoing edges on the third level lead to
$E(v)$ and $E_f(v)$, which are the outgoing edges and outgoing fair
edges of $v$ in $G$, respectively), and nodes $v \in
\Vfair_\forall$ in $G$ are replaced with one of the gadgets at the bottom.
We refer to the gadgets on the left hand side as \emph{existential} gadgets and the ones on the
right hand side as \emph{universal} gadgets, indicating the player picking the first move. 
Nodes in $\Vn$ are not altered.

\begin{figure} 
	\centering
	\tikzset{every state/.style={minimum size=15pt}, every node/.style={minimum size=15pt}}
	\begin{tikzpicture}[
		auto,
		node distance=0.8cm,
		semithick
		]
		\path[use as bounding box] (-6.2,-3.7) rectangle (6.2,0.6);
  		\node[draw,rounded corners] (vleft) at (-3.3cm,0) {$v$};
		\node[draw,rectangle] (vright) at (3.3cm,0) {$v$};

		\node (yo) [below= 4mm of vleft] {};
		\node (yoleft) [left= 3.5mm of yo] {$\dots$};
		\node (yoright) [right= 3.5mm of yo] {$\dots$};
		\node[rectangle,draw] (r1) [left= 18mm of yo] {$v_1$};
		\node (yo10) [below= 4mm of r1] {};
		\node[draw,rounded corners] (c1) [left=0.2mm of yo10] {$v^\exists_1$};
		\node at ([yshift=-9mm]c1.north west) {\scalebox{0.7}{$1$}};
		\node[rectangle,draw] (r2) [right=0.2mm of yo10] {$v^\forall_1$};
		\node at ([yshift=-9mm]r2.north west) {\scalebox{0.7}{$2$}};
		
		\node[rectangle,draw] (rnm) [below= 4mm of vleft] {$v_i$};
		\node (yon0) [below= 4mm of rnm] {};
		\node[draw,rounded corners] (cnm) [left=0.2mm of yon0] {$v^\exists_i$};
		\node at ([yshift=-9mm]cnm.north west) {\scalebox{0.7}{$2i-1\,\,$}};
		\node[rectangle,draw] (rn) [right= 0.2mm of yon0]  {$v^\forall_i$};
		\node at ([yshift=-9mm]rn.north west) {\scalebox{0.7}{$2i$}};
		\node[rectangle,draw] (rmaxevenm) [right= 18mm of yo] {$v_{k+1}$};
		\node[draw,rounded corners] (cmaxevenm) [below= 3mm of rmaxevenm] {$v^\exists_{k+1}$};
		\node at ([yshift=-9mm]cmaxevenm.north west) {\scalebox{0.7}{$2k+1$}};
		
		\path[->] (vleft) edge node {} (r1);
		\path[->] (r1) edge node {} (c1);
		\path[->] (r1) edge node {} (r2);
		\path[->] (vleft) edge node {} (rnm);
		\path[->] (rnm) edge node {} (cnm);
		\path[->] (rnm) edge node {} (rn);
		\path[->] (vleft) edge node {} (rmaxevenm);
		\path[->] (rmaxevenm) edge node {} (cmaxevenm);
		
		\node (out1) [below= 5mm of c1] {$E(v)$};
		\node (out2) [below= 5mm of r2] {$E_f(v)$};
		\node (outnm) [below= 5mm of cnm] {$E(v)$};
		\node (outn) [below= 5mm of rn] {$E_f(v)$};
		\node (outmaxevenm) [below= 5mm of cmaxevenm] {$E(v)$};
		\path[->] (c1) edge node {} (out1);
		\path[->] (r2) edge node {} (out2);
		\path[->] (cnm) edge node {} (outnm);
		\path[->] (rn) edge node {} (outn);
		\path[->] (cmaxevenm) edge node {} (outmaxevenm);
		
		\node (Ryo) [below= 4mm of vright] {};
		\node (Ryoleft) [left= 3.5mm of Ryo] {$\dots$};
		\node (Ryoright) [right= 3.5mm of Ryo] {$\dots$};
		\node[rounded corners,draw] (Rr1) [left= 18mm of Ryo] {$v_1$};
		\node[draw,rounded corners] (Rc1) [below= 4mm of Rr1] {$v^\exists_1$};
		\node at ([yshift=-9mm]Rc1.north west) {\scalebox{0.7}{$1$}};
		
		\node[rounded corners, draw] (Rrnm) [below= 4mm of vright] {$v_i$};
		\node (Ryon0) [below= 4mm of Rrnm] {};
		\node[rectangle, draw] (Rcnm) [left=0.2mm of Ryon0] {$v^\forall_i$};
		\node at ([yshift=-9mm]Rcnm.north west) {\scalebox{0.7}{$2i-2\,\,$}};
		\node[draw,rounded corners] (Rrn) [right= 0.2mm of Ryon0]  {$v^\exists_i$};
		\node at ([yshift=-9mm]Rrn.north west) {\scalebox{0.7}{$2i-1\,\,$}};
		\node[rounded corners, draw] (Rrmaxevenm) [right= 18mm of Ryo] {$v_{k+1}$};
		\node (Ryomaxevenm0) [below= 4mm of Rrmaxevenm] {};
		\node[draw,rectangle] (Rcmaxevenm) [left=0.2mm of Ryomaxevenm0] {$v^\forall_{k+1}$};
		\node at ([yshift=-9mm]Rcmaxevenm.north west) {\scalebox{0.7}{$2k$}};
		\node [draw,rounded corners]  (Rrmaxeven) [right=0.2mm of Ryomaxevenm0] {$v^\exists_{k+1}$};
		\node at ([yshift=-9mm]Rrmaxeven.north west) {\scalebox{0.7}{$2k+1$}};
		\path[->] (vright) edge node {} (Rr1);
		\path[->] (Rr1) edge node {} (Rc1);
		\path[->] (vright) edge node {} (Rrnm);
		\path[->] (Rrnm) edge node {} (Rcnm);
		\path[->] (Rrnm) edge node {} (Rrn);
		\path[->] (vright) edge node {} (Rrmaxevenm);
		\path[->] (Rrmaxevenm) edge node {} (Rcmaxevenm);
		\path[->] (Rrmaxevenm) edge node {} (Rrmaxeven);
		
		\node (Rout1) [below= 5mm of Rc1] {$E(v)$};
		\node (Routnm) [below= 5mm of Rcnm] {$E_f(v)$};
		\node (Routn) [below= 5mm of Rrn] {$E(v)$};
		\node (Routmaxevenm) [below= 5mm of Rcmaxevenm] {$E_f(v)$};
		\node (Routmaxeven) [below= 5mm of Rrmaxeven] {$E(v)$};
		\path[->] (Rc1) edge node {} (Rout1);
		\path[->] (Rcnm) edge node {} (Routnm);
		\path[->] (Rrn) edge node {} (Routn);
		\path[->] (Rcmaxevenm) edge node {} (Routmaxevenm);
		\path[->] (Rrmaxeven) edge node {} (Routmaxeven);
		
		
	\end{tikzpicture}
	\begin{tikzpicture}[
		auto,
		node distance=0.8cm,
		semithick
		]
		\path[use as bounding box] (-6.2,-3.7) rectangle (6.2,0.6);
  		\node[draw,rounded corners] (vleft) at (-3.3cm,0) {$v$};
		\node[draw,rectangle] (vright) at (3.3cm,0) {$v$};

		\node (yo) [below= 4mm of vleft] {};
		\node (yoleft) [left= 3.5mm of yo] {$\dots$};
		\node (yoright) [right= 3.5mm of yo] {$\dots$};
		\node[rectangle,draw] (r1) [left= 18mm of yo] {$v_1$};
		\node (yo10) [below= 4mm of r1] {};
		\node[draw,rounded corners] (c1) [left=0.2mm of yo10] {$v^\exists_1$};
		\node at ([yshift=-9mm]c1.north west) {\scalebox{0.7}{$1$}};
		\node[rectangle,draw] (r2) [right=0.2mm of yo10] {$v^\forall_1$};
		\node at ([yshift=-9mm]r2.north west) {\scalebox{0.7}{$2$}};
		
		\node[rectangle,draw] (rnm) [below= 4mm of vleft] {$v_i$};
		\node (yon0) [below= 4mm of rnm] {};
		\node[draw,rounded corners] (cnm) [left=0.2mm of yon0] {$v^\exists_i$};
		\node at ([yshift=-9mm]cnm.north west) {\scalebox{0.7}{$2i-1\,\,$}};
		\node[rectangle,draw] (rn) [right= 0.2mm of yon0]  {$v^\forall_i$};
		\node at ([yshift=-9mm]rn.north west) {\scalebox{0.7}{$2i$}};
		\node[rectangle,draw] (rmaxevenm) [right= 18mm of yo] {$v_k$};
		\node (yomaxevenm0) [below= 4mm of rmaxevenm] {};
		\node[draw,rounded corners] (cmaxevenm) [left=0.2mm of yomaxevenm0] {$v^\exists_k$};
		\node at ([yshift=-9mm]cmaxevenm.north west) {\scalebox{0.7}{$2k-1\,\,$}};
		\node [rectangle,draw]  (rmaxeven) [right=0.2mm of yomaxevenm0] {$v^\forall_k$};
		\node at ([yshift=-9mm]rmaxeven.north west) {\scalebox{0.7}{$2k$}};
		
		\path[->] (vleft) edge node {} (r1);
		\path[->] (r1) edge node {} (c1);
		\path[->] (r1) edge node {} (r2);
		\path[->] (vleft) edge node {} (rnm);
		\path[->] (rnm) edge node {} (cnm);
		\path[->] (rnm) edge node {} (rn);
		\path[->] (vleft) edge node {} (rmaxevenm);
		\path[->] (rmaxevenm) edge node {} (cmaxevenm);
		\path[->] (rmaxevenm) edge node {} (rmaxeven);
		
		\node (out1) [below= 5mm of c1] {$E_f(v)$};
		\node (out2) [below= 5mm of r2] {$E(v)$};
		\node (outnm) [below= 5mm of cnm] {$E_f(v)$};
		\node (outn) [below= 5mm of rn] {$E(v)$};
		\node (outmaxevenm) [below= 5mm of cmaxevenm] {$E_f(v)$};
		\node (outmaxeven) [below= 5mm of rmaxeven] {$E(v)$};
		\path[->] (c1) edge node {} (out1);
		\path[->] (r2) edge node {} (out2);
		\path[->] (cnm) edge node {} (outnm);
		\path[->] (rn) edge node {} (outn);
		\path[->] (cmaxevenm) edge node {} (outmaxevenm);
		\path[->] (rmaxeven) edge node {} (outmaxeven);
		
		\node (Ryo) [below= 4mm of vright] {};
		\node (Ryoleft) [left= 3.5mm of Ryo] {$\dots$};
		\node (Ryoright) [right= 3.5mm of Ryo] {$\dots$};
		\node[rounded corners,draw] (Rr1) [left= 18mm of Ryo] {$v_1$};
		\node[draw,rounded corners] (Rc1) [below= 4mm of Rr1] {$v^\exists_1$};
		\node at ([yshift=-9mm]Rc1.north west) {\scalebox{0.7}{$1$}};
		
		\node[rounded corners, draw] (Rrnm) [below= 4mm of vright] {$v_i$};
		\node (Ryon0) [below= 4mm of Rrnm] {};
		\node[rectangle, draw] (Rcnm) [left=0.2mm of Ryon0] {$v^\forall_i$};
		\node at ([yshift=-9mm]Rcnm.north west) {\scalebox{0.7}{$2i-2\,\,$}};
		\node[draw,rounded corners] (Rrn) [right= 0.2mm of Ryon0]  {$v^\exists_i$};
		\node at ([yshift=-9mm]Rrn.north west) {\scalebox{0.7}{$2i-1\,\,$}};
		\node[rounded corners, draw] (Rrmaxevenm) [right= 18mm of Ryo] {$v_{k+1}$};
		\node[draw,rectangle] (Rcmaxevenm) [below= 3mm of Rrmaxevenm] {$v^\forall_{k+1}$};
		\node at ([yshift=-9mm]Rcmaxevenm.north west) {\scalebox{0.7}{$2k$}};
		%
		\path[->] (vright) edge node {} (Rr1);
		\path[->] (Rr1) edge node {} (Rc1);
		\path[->] (vright) edge node {} (Rrnm);
		\path[->] (Rrnm) edge node {} (Rcnm);
		\path[->] (Rrnm) edge node {} (Rrn);
		\path[->] (vright) edge node {} (Rrmaxevenm);
		\path[->] (Rrmaxevenm) edge node {} (Rcmaxevenm);
		
		\node (Rout1) [below= 5mm of Rc1] {$E_f(v)$};
		\node (Routnm) [below= 5mm of Rcnm] {$E(v)$};
		\node (Routn) [below= 5mm of Rrn] {$E_f(v)$};
		\node (Routmaxevenm) [below= 5mm of Rcmaxevenm] {$E(v)$};
		\path[->] (Rc1) edge node {} (Rout1);
		\path[->] (Rcnm) edge node {} (Routnm);
		\path[->] (Rrn) edge node {} (Routn);
		\path[->] (Rcmaxevenm) edge node {} (Routmaxevenm);
		
		
	\end{tikzpicture}\caption{\small Existential (left) and universal (right) gadgets for $v \in \Vfair_\exists$ (top) and $v \in \Vfair_\forall$ (bottom) in fair parity/$\bot$ games.
		For $i \in [k+1]$, priorities of nodes $v_i^\exists$ and $v_i^\forall$ are given below them, priorities of nodes $v_i$ are ignored, and the priority of $v$ is unaltered.}\label{fig:existentialgadgetsparitybot}
\end{figure} 


Our correctness proof for the reduction shows correctness for all combinations of the gadgets (i.e., one can replace each $v \in \Vfair_\exists$ ($v \in \Vfair_\forall$) with each of the gadgets on the top (bottom)), we provide intuition only for the existential gadgets in order to avoid repetition.

Assume that all nodes $v \in \Vfair$ are replaced with their existential gadgets. Within a subgame that starts at a fair node $v\in\Vfair$, the two players intuitively
interact as follows. The $\exists$-player gets to pick a number $i$,
indicating the priorities ($2i-1$ or $2i$) they intend to visit infinitely often in every play that visits $v$ 
infinitely often. In turn, $\forall$-player gets to either
pick an outgoing edge at $v$ (for this, she pays the price of visiting the even priority $2i$), or
allow $\exists$ to pick an outgoing edge
(in which case she is rewarded with a
visit to the odd priority $2i-1$). Depending on the owner
of $v$, the edge picked by $\forall$ (if $v\in \Vfair_\exists$),
or the edge picked by $\exists$ (if $v\in \Vfair_\forall$) is required to be a fair edge, that is, to be contained in $E_f$.
Thus player $\forall$ can insist on exploring fair edges
at $\Vfair_\exists$ nodes, but has to pay a price for it;
dually, player $\forall$ eventually has to allow player $\exists$ to explore the fair
edges at $\Vfair_\forall$ nodes to win.

In the reduced game defined formally in the proof of~\cref{thm:fairParityToParity} below, the owner of a fair node $v$ can win \emph{fairly} from $v$
by either avoiding 
$v$ from some point on forever, or eventually allowing the opponent player to explore all fair edges
leading out of that node. The owner wins by playing \emph{unfairly} if and only if the opponent also plays unfairly and the owner is the $\forall$-player.

\begin{example}
	Recall the fair parity/$\beta$ games $G_1$ to $G_3$ from~\cref{fig:fairstrategies},
	and recall that $\beta$ is irrelevant in these example games due to the absence of mutually unfair plays.
	We apply the gadget construction
	from~\cref{fig:existentialgadgetsparitybot} to these games (treating them as parity/$\bot$ games),
	and obtain games $G'_1$ to $G'_3$ without fairness constraints as depicted in~\cref{fig:gadgetsexample}.
	We depict game nodes from the original game with gray background, and newly added gadget nodes
	with white background, omitting the notation for the gadget nodes.
	\begin{figure}
		\centering
		\tikzset{every state/.style={minimum size=10pt}}
		\begin{tikzpicture}[
			auto,
			node distance=0.8cm,
			semithick,
			extends/.style={->}
			]
			\node[draw, rectangle, fill=gray!30] (n3f) {$3$}	
			coordinate[right=1.35cm of n3f.east] (m);
			\node[draw, rounded corners] (n4fa) [below= 7mm of n3f] {\phantom{a}};
			\node[draw, rounded corners] (n4fb) [left= 7mm of n4fa] {\phantom{a}};
			\node[draw, rounded corners] (n4fc) [right= 7mm of n4fa] {\phantom{a}};
  		    \node[draw, rounded corners] (n4f1) [below= 7mm of n4fb] {$1$};
			\node[draw, rectangle] (n4f2) [right= 3.5mm of n4f1] {$2$}
			coordinate[below=0.35cm of n4f2.south] (m1);
			\node[draw, rounded corners] (n4f3) [right= 3.5mm of n4f2] {$3$};
			\node[draw, rectangle] (n4f4) [below= 7mm of n4fc] {$4$}
			coordinate[below=0.35cm of n4f4.south] (m2)
			coordinate[right=0.35cm of n4f4.east] (n);
			\node[draw, rounded corners, fill=gray!30] (n4f) [right= 7mm of n4f4] {$4$}
			coordinate[below=0.35cm of n4f.south] (m3);
			\path[->] (n4f2) edge [bend left=15] node [pos=0.3,below] {} (n3f);
			\path[->] (n3f) edge node [pos=0.3,below] {} (n4fa);
			\path[->] (n3f) edge node [pos=0.3,below] {} (n4fb);
			\path[->] (n3f) edge node [pos=0.3,below] {} (n4fc);
			\path[->,red,very thick] (n4fb) edge node [pos=0.3,below] {} (n4f1);
			\path[->] (n4fa) edge node [pos=0.3,below] {} (n4f2);
			\path[->,red,very thick] (n4fa) edge node [pos=0.3,below] {} (n4f3);
			\path[->,red,very thick] (n4fc) edge node [pos=0.3,below] {} (n4f4);
			\draw (n4f4.east) |- (n);
			\draw[red,very thick] (n4f.west) |- (n);
			\draw[red,very thick] (n) |- (m);
			\draw[red,very thick,extends] (m) -- (n3f.east);
			\draw[red,very thick] (n4f1.south) |- (m1);
			\draw (n4f2.south) |- (m1);
			\draw[red,very thick] (n4f3.south) |- (m2);
			\draw (n4f4.south) |- (m2);
			\draw[red,very thick] (m1) |- (m2);
			\draw[red,very thick] (m2) |- (m3);
			\draw[red,very thick,extends] (m3) -- (n4f.south);
			\node (label1) [left=15mm of n3f] {$G'_1$:};

			\node[draw, rectangle, fill=gray!30] (n3) [below= 35mm of n3f] {$3$}
			coordinate[right=1.35cm of n3.east] (my);
			\node (label2) [left=16mm of n3] {$G'_2$:};
			\node[draw, rounded corners] (n4ya) [below= 7mm of n3] {\phantom{a}};
			\node[draw, rounded corners] (n4yb) [left= 7mm of n4ya] {\phantom{a}};
			\node[draw, rounded corners] (n4yc) [right= 7mm of n4ya] {\phantom{a}};
			\node[draw, rounded corners] (n4y1) [below= 7mm of n4yb] {$1$};
			\node[draw, rectangle] (n4y2) [right= 3.5mm of n4y1] {$2$}
			coordinate[below=0.9cm of n4y2.west] (my5)
			coordinate[below=0.35cm of n4y2.south] (my1);
			\node (iiix) [below= -0.3mm of n4y2.west,xshift=1mm] {};
			\node[draw, rounded corners] (n4y3) [right= 3.5mm of n4y2] {$3$};
			\node[draw, rectangle] (n4y4) [below= 7mm of n4yc] {$4$}
			coordinate[below=0.35cm of n4y4.south] (my2)
			coordinate[below=0.9cm of n4y4.west] (my6)
			coordinate[right=0.35cm of n4y4.east] (ny);
			\node (iiiy) [below= -0.3mm of n4y4.west,xshift=1mm] {};
			\node[draw, rounded corners, fill=gray!30] (n4) [right= 7mm of n4y4] {$4$}
			coordinate[below=0.35cm of n4.south] (my3);
			\node[draw, rectangle, fill=gray!30] (n1) [left= 7mm of n4y1] {$1$}
			coordinate[left=1.65cm of my5] (my4);
			\node (empty2) [below=0.5 cm of n4] {};
			\path[->] (n4y2) edge [bend left=15] node [pos=0.3,below] {} (n3);
			\path[->] (n3) edge node [pos=0.3,below] {} (n4ya);
			\path[->] (n3) edge node [pos=0.3,below] {} (n4yb);
			\path[->,blue, very thick] (n3) edge node [pos=0.3,below] {} (n4yc);
			\path[->] (n4yb) edge node [pos=0.3,below] {} (n4y1);
			\path[->] (n4ya) edge node [pos=0.3,below] {} (n4y2);
			\path[->] (n4ya) edge node [pos=0.3,below] {} (n4y3);
			\path[->] (n4yc) edge node [pos=0.3,below] {} (n4y4);
			\draw (n4y1.south) |- (my1);
			\draw (n4y2.south) |- (my1);
			\draw (n4y3.south) |- (my2);
			\draw (n4y4.south) |- (my2);
			\draw (my1) |- (my2);
			\draw (my2) |- (my3);
			\draw[extends] (my3) -- (n4.south);
			\draw[blue, very thick] (iiix) |- (my5);
			\draw[blue, very thick] (iiiy) |- (my6);
			\draw[blue, very thick] (my5) |- (my6);
			\draw[blue, very thick] (my6) |- (my4);
			\draw[blue, very thick, extends] (my4) -- (n1.south);
			\path[->,blue, very thick] (n1) edge [loop right] node [below] {} (n1);
			\draw (n4y4.east) |- (ny);
			\draw (n4.west) |- (ny);
			\draw (ny) |- (my);
			\draw[extends] (my) -- (n3.east);
			\path[->] (n1) edge [bend left=30] node [pos=0.3,below] {} (n3);

\node[draw, rounded corners, fill=gray!30] (n3z) [right= 6cm of n3f] {$4$}
coordinate[right=2.5cm of n3z.east] (yar)
coordinate[right=1.35cm of n3z.east] (mz);
\node[draw, rectangle] (n4za) [below= 7mm of n3z] {\phantom{a}};
\node[draw, rectangle] (n4zb) [left= 7mm of n4za] {\phantom{a}};
\node[draw, rectangle] (n4zc) [right= 7mm of n4za] {\phantom{a}};
\node[draw, rounded corners] (n4z5) [below= 7mm of n4zc] {$5$};
\node (iz) [below= -0.3mm of n4z5.east,xshift=-1mm] {};
\node[draw, rounded corners] (n4z1) [left= 23mm of n4z5] {$1$}
coordinate[below=0.9cm of n4z1.east] (yo1);
\node (ix) [below= -0.3mm of n4z1.east,xshift=-1mm] {};
\node[draw, rectangle] (n4z2) [left= 16mm of n4z5] {$2$}
coordinate[below=0.35cm of n4z2.south] (mz1);
\node[draw, rounded corners] (n4z3) [left= 10mm of n4z5] {$3$};
\node (yo3) [below= -0.3mm of n4z3.east,xshift=-1mm] {};
\node (iy) [below= -0.3mm of n4z3.east,xshift=-1mm] {};
\node[draw, rectangle] (n4z4) [left= 3mm of n4z5] {$4$}
coordinate[below=0.35cm of n4z4.south] (mz2)
coordinate[right=0.35cm of n4z4.east] (nz);
\node[draw, rounded corners, fill=gray!30] (n4z) [right= 7mm of n4z5] {$5$}
coordinate[right=0.3cm of n4z.east] (ya)
coordinate[below=0.69cm of iz] (yo5)
coordinate[below=0.35cm of n4z.south] (mz3);
\draw (n4z.east) |- (ya);
\path[->] (n3z) edge node [pos=0.3,below] {} (n4za);
\path[->] (n3z) edge node [pos=0.3,below] {} (n4zb);
\path[->] (n3z) edge node [pos=0.3,below] {} (n4zc);
\path[->] (n4zb) edge node [pos=0.3,below] {} (n4z1);
\path[->,blue, very thick] (n4zb) edge node [pos=0.3,below] {} (n4z2);
\path[->] (n4za) edge node [pos=0.3,below] {} (n4z3);
\path[->,blue, very thick] (n4za) edge node [pos=0.3,below] {} (n4z4);
\path[->,blue, very thick] (n4zc) edge node [pos=0.3,below] {} (n4z5);
\draw (ix) |- (yo1);
\draw (iy) |- (yo3);
\draw (iz) |- (yo5);
\draw (yo1) |- (yo5);
\draw (n4z1.south) |- (mz1);
\draw[blue, very thick] (n4z2.south) |- (mz1);
\draw (n4z3.south) |- (mz2);
\draw[blue, very thick] (n4z4.south) |- (mz2);
\draw (n4z5.south) |- (mz2);
\draw[blue, very thick] (mz1) |- (mz2);
\draw[blue, very thick] (mz2) |- (mz3);
\draw[blue, very thick, extends] (mz3) -- (n4z.south);
\node (label1) [left=19mm of n3z] {$G'_3$:};

\node[draw, rectangle,xshift=-2mm, fill=gray!30] (n3x) [below= 35mm of n3z] {$3$}
coordinate[right=1.35cm of n3x.east] (mx)
coordinate[left=0.3cm of n3x] (yep)
coordinate[below=0.7cm of n3x] (yiz);
\draw[extends] (yo3) -- (n3x.north);
\node[draw, rounded corners] (n4xa) [left= 6mm of yiz] {\phantom{a}};
\node[draw, rounded corners] (n4xb) [left= 7mm of n4xa] {\phantom{a}};
\node[draw, rounded corners] (n4xc) [right= 6mm of yiz] {\phantom{a}};
\node[draw, rounded corners] (n4xd) [right= 7mm of n4xc] {\phantom{a}};
\node[draw, rounded corners] (n4x1) [below= 7mm of n4xb] {$1$};
\node[draw, rectangle] (n4x2) [right= 3.5mm of n4x1] {$2$}
coordinate[below=0.9cm of n4x2.west] (mx5)
coordinate[below=0.35cm of n4x2.south] (mx1);
\node (iix) [below= -0.3mm of n4x2.west,xshift=1mm] {};
\node[draw, rounded corners] (n4x3) [right= 3.5mm of n4x2] {$3$};
\node[draw, rectangle] (n4x4) [right= 3.5mm of n4x3] {$4$}
coordinate[below=0.9cm of n4x4.west] (mx7)
coordinate[below=0.35cm of n4x4.south] (mx0);
\node (iiy) [below= -0.3mm of n4x4.west,xshift=1mm] {};
\node[draw, rounded corners] (n4x5) [right= 3.5mm of n4x4] {$5$};
\node[draw, rectangle] (n4x6) [below= 7mm of n4xd] {$6$}
coordinate[below=0.35cm of n4x6.south] (mx2)
coordinate[right=1cm of mx2] (mx3)
coordinate[below=0.9cm of n4x4.west] (mx6)
coordinate[below=0.9cm of n4x6.west] (mx8)
coordinate[right=0.35cm of n4x4.east] (nx)
coordinate[left=1.2cm of mx5] (mx4);
\node (iiz) [below= -0.3mm of n4x6.west,xshift=1mm] {};
\path[->] (n3x) edge node [pos=0.3,below] {} (n4xa);
\path[->,blue, very thick] (n3x) edge node [pos=0.3,below] {} (n4xb);
\path[->] (n3x) edge node [pos=0.3,below] {} (n4xc);
\path[->] (n3x) edge node [pos=0.3,below] {} (n4xd);
\path[->] (n4xb) edge node [pos=0.3,below] {} (n4x1);
\path[->] (n4xa) edge node [pos=0.3,below] {} (n4x2);
\path[->] (n4xa) edge node [pos=0.3,below] {} (n4x3);
\path[->] (n4xc) edge node [pos=0.3,below] {} (n4x4);
\path[->] (n4xc) edge node [pos=0.3,below] {} (n4x5);
\path[->] (n4xd) edge node [pos=0.3,below] {} (n4x6);
\draw (n4x1.south) |- (mx1);
\draw (n4x2.south) |- (mx1);
\draw (n4x3.south) |- (mx0);
\draw (n4x4.south) |- (mx0);
\draw (n4x5.south) |- (mx2);
\draw (n4x6.south) |- (mx2);
\draw (mx1) |- (mx2);
\draw (mx2) |- (mx3);
\draw (mx3) |- (yar);
\draw[extends] (yar) -- (n3z.east);
\draw[blue, very thick] (iix) |- (mx5);
\draw[blue, very thick] (iiy) |- (mx6);
\draw[blue, very thick] (iiz) |- (mx8);
\draw[blue, very thick] (mx5) |- (mx6);
\draw[blue, very thick] (mx6) |- (mx4);
\draw[blue, very thick] (mx8) |- (mx4);
\draw[blue, very thick] (mx4) |- (yep);
\draw[blue, very thick,extends] (yep) -- (n3x.west);

		\end{tikzpicture}
		\caption{\small 
			Reduced games $G'_1$ to $G'_3$ without fairness constraints,
			obtained by applying the gadget construction to
			the fair example games $G_1$ to $G_3$ from~\cref{fig:fairstrategies}.
		}
		\label{fig:gadgetsexample}
	\end{figure}

The game $G'_1$ contains one gadget that is used to replace the $\forall$-fair node with priority
$3$ in $G_1$; in this example, we use the universal gadget.
Player $\forall$ can attempt to escape from this fair node by taking the right
branch in the gadget, which allows him to move to either successor of the fair node, 
but comes at the cost of visiting priority $4$. Since the game always cycles
back to the fair node, this escaping is a losing strategy for $\forall$.
Alternatively, $\forall$-player can let $\exists$-player pick a fair edge
by taking the left branch in the gadget; this however leads to the game node with priority
$4$ and also is a losing strategy for $\forall$. The remaining choice
is to let $\exists$-player decide the fairness by taking the middle branch in the gadget.
Then $\exists$-player can react to this by insisting to explore a fair edge
and paying the price of visiting priority $3$, which however allows her to visit the
game node with priority $4$, and hence is a winning strategy for $\exists$-player.
We denote this strategy by thicker red edges.

Game $G'_2$ again contains one universal gadget that is used to replace the $\forall$-fair node in $G_2$.
In this case, $\forall$-player can successfully escape from this fair node by taking the right
branch in the gadget, which allows him to move to either successor of the fair node, 
but comes at the cost of visiting priority $4$. The escape strategy then picks the non-fair transition in $G_2$
that leads from node $3$ to $1$. From this node, $\forall$-player can avoid the fair game node forever:
as the node $1$ in $G_2$ is not a fair node, $\forall$-player can simply take the looping transition at this node. The corresponding strategy is marked with thicker blue edges.

Finally, game $G'_3$ uses two gadgets, one being existential (replacing the $\exists$-fair node $4$) and the other one universal (replacing the $\forall$-fair node $3$).
In this case, player $\exists$ could 
take one of the left two branches in the gadget
for the $\exists$-fair node $4$. In both of these branches however, $\forall$-player can react
by insisting to explore a fair edge, and then taking the edge to the right-most game node in $G_3$, thereby
visiting priority $5$. Thus $\exists$-player cannot win by taking the left two branches in this gadget.
Alternatively, $\exists$-player can attempt to escape from the $\exists$-fair node $4$ by taking the right branch in the associated gadget.
This comes at the cost of visiting priority $5$ but gives her the choice
to either loop back to the $\exists$-fair node
via game node $5$,
or to progress the game to the $\forall$-fair node $3$.
The former is a losing strategy for $\exists$-player since it results in visiting priority $5$ infinitely often.
For the latter choice, it remains to analyse the possible strategies for the $\forall$-fair node $3$.
However, in the gadget for this node, $\forall$-player can simply take the left-most branch and force $\exists$-player to explore the only $\forall$-fair edge that leads back to the $\exists$-fair node $4$.
Therefore, intuitively, $\forall$-player wins by either exploring the $\exists$-fair edge 
infinitely often, or by having $\exists$-player escape from her fair node infinitely often. Again, we depict the associated overall winning strategy for $\forall$-player by thicker blue edges.

We point out that the winning regions in game $G_i$ are equal to the winning regions in game $G'_i$ restricted to the non-gadget nodes (for $1\leq i\leq 3$).
\end{example}

Next, we show the correctness of this reduction.

\begin{restatable}{theorem}{restatablefairParitytoParity}\label{thm:fairParityToParity} Let $G=(A,\text{Parity}(\lambda),\bot)$ be a fair parity / $\bot$ game, where $A = (V_\exists,V_\forall,E,E_f)$ is a fair game arena and  $\lambda: V \to [2k]$ is a priority function.
Then there is a parity game $G'$ with at most $n(3k+3)$ nodes and at most
$2k+1$ priorities such that $G'$ and $G$ are equivalent.
\end{restatable}

We fix $\max_\mathsf{\even} := 2k$ and $\max_{\odd} := 2k+1$.

\smallskip

  \noindent \textit{Proof. }
  Before we start the proof, we will formally define the reduced parity game $G'$ that uses \emph{only the existential gadgets}.
  That is, we assume that $G'$ uses the top left gadget in~\cref{fig:existentialgadgetsparitybot} to replace every $v \in \Vfair_\exists$ and the top right gadget to replace every  $v \in \Vfair_\forall$ in $G$.

Then $G'$ is formally defined as follows:
    \begin{align*}
    G'=(V'_\exists,V'_\forall,E',\Omega:V'\to[2k+1])
    \end{align*}
    where $V' = V'_\exists \dotcup V'_\forall$ by putting
    \begin{align*}
    V'_\exists=&\{v^\exists_i \mid i \in [k+1] \text{ for } v\in \Vfair_\exists \,\, \text{and }\, i \in [k] \,\text{ for } \, v\in \Vfair_\forall \} \, \cup \, \Vfair \cup V_\exists\\
    V'_\forall=&\{v_i \mid i \in [k+1]\, \text{ for } \,v\in \Vfair_\exists  \,\,\text{and }\, i \in [k] \text{ for } v\in \Vfair_\forall \}\\
	& \cup \, \{v^\forall_i\mid i \in [2k] \text{ for } v\in \Vfair\} \, \cup \, (V_\forall \setminus \Vfair)
    \end{align*}
    and
    \begin{align*}
    E'(v)=\begin{cases}
        E(v) & v\notin \Vfair \\
    \{v_i \mid i \in [k+1]\} & v\in  \Vfair_\exists \\
    \{v_i \mid i \in [k]\} & v\in  \Vfair_\forall\\
    \end{cases}, \quad \quad
      E'(v_i^\forall)=\begin{cases}
         \{v_i^\exists, v_i^\forall\} & i \neq k+1 \\
         \{ v_i^\exists\} & i = k+1\\
      \end{cases}
    \end{align*}
	\begin{align*}
	E'(v_i^p) &=
	\begin{cases}
	E(v)   & \Leftrightarrow v \in \Vfair_p \\
	E_f(v) & \Leftrightarrow v \in \Vfair_{1-p}
	\end{cases},
	\quad \quad
	&
	\Omega(v) &=
	\begin{cases}
	\lambda(v) & v \in V \setminus V' \\
	2i-1 & v = u_i^\exists \\
	2i   & v = u_i^\forall
	\end{cases}
	\end{align*}
	where we do not care about the priorities of $v_i$, so we can set simply them to the lowest priority $1$.

	Even though we have provided the formal definition of $G'$ that uses only the existential gadgets, the proof applies to any $G'$ obtained through an arbitrary combination of gadgets. The formal definition of each such gadget can be derived straightforwardly from the descriptions in~\cref{fig:existentialgadgetsparitybot}.

	Let $G'=(V'_\exists,V'_\forall,E',\Omega:V'\to[2k+1])$ be the parity game obtained by replacing the fair nodes in $G$ with an arbitrary combination of their corresponding existential and universal gadgets in~\cref{fig:existentialgadgetsparitybot}.

 Let $V' = V \cup V^{\mathsf{gad}}$ where $V$ is the set of nodes of $G$ and where $V^{\mathsf{gad}}$ (the \emph{gadget nodes}) is the set of nodes coming from the gadgets.
Note that the maximum priority in $G'$ is $\max_\odd = 2k+1$ which only occurs at some gadget nodes.
The maximum even priority in $G'$ is $\max_\even = 2k$ which can come from both $V^{\mathsf{gad}}$ and $V$.
It is easy to see that $|V'| \leq n(3k+3)$ and that $G'$ uses priorities $[2k+1]$.
To prove equivalence of $G$ and $G'$, we recall that the winning regions for fair parity/$\bot$ games are given via~\cref{eq:Bone}, that is, $v \in \Win_\exists(G)$ if and only if
\begin{equation}
    \exists s\in\Sigma.\forall t\in\Pi.\fair_\exists(\play_v(s,t))\wedge
   (\fair_\forall(\play_v(s,t)) \Rightarrow \play_v(s,t)\in\alpha).\tag{\ref{eq:Bone}}\label{eq:Bone-here}
\end{equation}


\noindent\textbf{($\Rightarrow$)} We first show that $v \in \Win_\exists(G') \cap V$ implies $v \in \Win_\exists(G)$. To do so, we fix a (positional) winning $\exists$-strategy $s'$ in $G'$ and construct an $\exists$-strategy $s$ in $G$ 
    such that $s$ is $\exists$-winning in $G$, that is, such that $s$ satisfies~\cref{eq:Bone}. To this end, it suffices to show that for all plays $\rho$ in $G$ that start from $v$ and are compliant with $s$,~\cref{eq:B1-s} holds.
    \begin{equation}
   \fair_\exists\,(\rho)\wedge
      (\fair_\forall(\rho) \Rightarrow \rho\in\alpha)\tag{\ref{eq:Bone}-s}\label{eq:B1-s} \end{equation}
  
    For this we show the two parts of the conjunction separately. We show (i) $\fair_\exists(\rho)$, that is, $s \in \Sigma^\fair$, and
    (ii) $\fair_\forall(\rho) \Rightarrow
    \rho\in\alpha$, that is, every $\forall$-fair play that is compliant with
    $s$ is $\exists$-winning with respect to the parity condition.
  
    \smallskip
    \noindent \textbf{Construction of the $s'$-subgame $G'_{s'}$: } Let $s'$ be a positional $\exists$-strategy winning every play from $v$ in $G'$.
    We denote the subgame of $G'$ in which $\exists$ nodes have only the outgoing edges $u \to s'(u)$ by $G'_{s'}$, and call it the \emph{$s'$-subgame}.
    Recall that all plays that start from $v$ in $G'_{s'}$ are $\exists$-winning.

    \smallskip
    \noindent \textbf{Notation of $n_u$ and $succ(u)$:} For the existential gadgets for both $u\in\Vfair_\exists$ and $u\in\Vfair_\forall$, we refer by $n_u$ to the index of the unique successor of $u$ in $G'_{s'}$. 
    That is, $s'(u) = u_{n_u}$. For the same gadgets, we denote $s'(u_{n_u}^\exists)$ by $succ(u)$. 
    For the universal gadgets for both $u\in\Vfair_\exists$ and $u\in\Vfair_\forall$, we let $n_u$ denote the index of the rightmost child of $u$ that is sent to its right child by $s'$. That is, 
    $n_u$ is the largest index $i$ such that $s'(u_i) = u_i^\exists$. For the same gadgets, we denote $s'(u_{n_u}^\exists)$ by $succ(u)$. 

    \smallskip
    \noindent \textbf{Construction of $s$: } We define $s: V^*\cdot
    V_\exists \to V$ as follows. For $u \in \Vfair_\exists$ there are two cases.
    \begin{inparaenum} \item If $n_u = k+1$, then we put $s(u)= succ(u)$. \item Otherwise, define $s(u)$ to cycle through the set $\{succ(u), E_f(u)\}$ starting at $succ(u)$. 
    \end{inparaenum}
    For $u \in V_\exists \setminus \Vfair_\exists$, we put $s(u)=s'(u)$.

    \smallskip
    \noindent \textbf{Constraining $G'_{s'}$ with $n_u$: } We constrain $G'_{s'}$ to its subgame by limiting the choices of $\forall$-player for a node $u$ replaced by the universal gadget.
    For every universal gadget encountered in $G'_{s'}$, we limit the choices of $u \in \Vfair_\forall$ to only $u \to u_{n_u}$ and $u \to u_{n_u +1}$ (if it exists). So, we remove all the other branches of the gadget for $u$ in $G'_{s'}$. We refer to the
    remaining game as $LG'_{s'}$, standing for \emph{limited} $G'_{s'}$. 
    Note that as $LG'_{s'}$ is a subgame of $G'_{s'}$, it is still $\exists$-winning.

    \smallskip
    \noindent \textbf{$\exists$-expansion: } Let $\rho= u^1 u^2 \ldots$ be some play in $G$ that is compliant with $s$. We define a play $\rho'$, referred to as the \emph{$\exists$-expansion of $\rho$}, as follows: define $\rho'$ to be the play on $LG'_{s'}$ that follows $\rho$ while `prioritising existential nodes on $succ(u)$'. 
    What is meant by this is, for a node $u^i \in \Vfair$, if $u^{i+1} = succ(u^i)$, then $\rho'$ takes the unique branch in $LG'_{s'}$ that leads to $u^{i+1}$ while passing through an existential node ${(u^i)}_j^\exists$. That is, 
    regardless of which gadget $u^i$ is replaced by, $\rho'$ takes the branch 
    \begin{equation} u^i \to u^i_{n_{u^i}} \to {(u^i)}^\exists_{n_{u^i}} \to succ(u^i) = u^{i+1} \tag{branch 1} \label{eq:branch1}
    \end{equation}
    On the other hand if $u^{i+1} \neq succ(u^i)$, then $\rho'$ takes the only other branch in $LG'_{s'}$, that is (branch 2) is defined as follows.
    \begin{enumerate}
        \item If $u^i \in \Vfair$ is replaced by an existential gadget, then (branch 2)
        is $u^i \to u^i_{n_{u^i}} \to {(u^i)}^\forall_{n_{u^i}} \to u^{i+1}$; \label{enum:existsgadgetunivbranch}
        \item if $u^i \in \Vfair$ is replaced by a universal gadget, then (branch 2)
        is $u^i \to u^i_{n_{u^i}+1} \to {(u^i)}^\forall_{n_{u^i}+1} \to u^{i+1}$. \label{enum:univgadgetunivbranch}
    \end{enumerate}

    Note that these branches include all possible transitions in $\rho$:
        all the successors of a node $u^i\in\Vfair_\forall$ are covered by one of the branches since (branch 2) leads to the universal node ${(u^i)}^\forall_{n_{u^i}}$ or ${(u^i)}^\forall_{n_{u^i}+1}$ from where every successor of $u^i$ can be picked;
        all the successors of a node $u^i\in\Vfair_\exists$ are covered by one of the branches, since by construction of $s$, all the successors of $u^i$ in $\rho$ are contained in the set $\{succ(u^i)\} \cup E_f(u^i)$, where (branch 1) covers the $succ(u^i)$ successors, and (branch 2) covers the $E_f(u^i)$ successors since in this case every fair successor of $u^i$ can be picked from the universal node ${(u^i)}^\forall_{n_{u^i}}$ or ${(u^i)}^\forall_{n_{u^i}+1}$.
        
    For $u^i \not \in \Vfair$, $\rho'$ just takes the edge $u^i \to u^{i+1}$. 
     
    \smallskip
    So $\rho'$ is a well-defined play in $LG'_{s'}$ that
    starts from $v$ and is $\exists$-winning.
    We observe that if we remove the gadget nodes from $\rho'$, we get $\rho$. That is, 
    we have $\rho' \mid_{V}=\rho$ where $\rho' \mid_{V}$ denotes the restriction of $\rho'$ to $V$.

    \smallskip
	Now we are ready to show that $\fair_\exists(\rho)$ and $\fair_\forall(\rho) \Rightarrow \rho \in \alpha$ for any play $\rho$ that is compliant with $s$.
	
    \noindent \textbf{(i) $\fair_\exists(\rho)$: } We observe that for every $\rho$ in $G$ that is compliant with $s$, by construction of $s$, the only nodes $u \in \Vfair_\exists$ that $\rho$ may not be fair on, are those for which $n_u = k+1$. So we only need to show that such nodes are visited finitely often in $\rho$.
	Let $\rho'$ be the $\exists$-expansion of $\rho$. 
    Since $\rho'|_V = \rho$, it suffices to show that such nodes $u$ do 
    not occur infinitely often in $\rho'$. 
    For if such a node $u$ is visited infinitely often by $\rho'$, then regardless of the gadget that $u$ is replaced with, the (branch 1) 
    $u \to u_{k+1} \to u_{k+1}^\exists$ is taken infinitely often, signalling the maximum priority $2k+1$. Then $\rho'$ is won by $\forall$-player, a contradiction.
    We conclude that $\rho$ is $\exists$-fair.

    \smallskip
    \noindent \textbf{(ii) $\fair_\forall(\rho) \Rightarrow \rho \in \alpha$: }
    Let $\rho$ be $\forall$-fair and consider the $\exists$-expansion
    $\rho'$ of $\rho$.
    Let $m$ be the largest priority in $\Inf(\rho')$. Then $m$ is even
    since player $\exists$ wins $\rho'$.
    Due to $\rho'\mid_V = \rho$, it suffices to show the existence of a node $u \in \Inf(\rho'\mid_V)$ that has priority $m$.
    This implies that the maximum priority in $\Inf(\rho)$ is $m$, and thus $\rho$ is $\exists$-winning.
  
    Towards a contradiction, assume that the priority $m$ appears only in $ V^{\mathsf{gad}} \cap \Inf(\rho')$. 
    Now let $F$ be the subgame of $LG'_{s'}$ that consists of nodes and edges taken infinitely often in $\rho'$. Then priority $m$ appears in $V^\mathsf{gad} \cap F$. 
    These gadget nodes exist in $F$ due to nodes 
    \begin{itemize} 
        \item $u \in \Vfair$ replaced by existential gadgets, and with $n_u = m/2$ (which corresponds to (branch 2)-1), or
        \item $u \in \Vfair$ replaced by universal gadgets, and with $n_u = m/2-1$ (which corresponds to (branch 2)-2)  \label{itemize:m-priority-nodes}
    \end{itemize}
    Note that for all such nodes $u$, (branch 1) of $u$ is also in $F$. This is because $u \to succ(u)$ is taken infinitely often in $\rho$. For $u \in \Vfair_\exists$, this is due to the construction of $s$, for $u \in \Vfair_\forall$, this is due to $\rho$ being $\forall$-fair; recall that in this case $succ(u) \in E_f(u)$.
    
    \smallskip
	\begin{linenomath}
    Next, we remove from $F$ all gadget nodes with priority $m$ (and every game node that is reachable only from those nodes). That is, we prune out (branch 2) of all the nodes that bring in $m$ priority gadget nodes to $F$. 
    Due to the remaining (branch 1)s, this pruning does not cause dead-ends. We refer to this pruned subgame of $F$ as $H$. Observe once more that all plays in $H$ are $\exists$-winning. However, the maximum priority in $H$ is $m-1$. This is due to the remaining (branch 1)s of the pruned nodes having this priority. 
    This implies that all infinite plays starting in $H$ get trapped in a subgame $H'$ of $H$ that does not have nodes with priority $m-1$.
    Since none of the nodes in $\Vfair \cap H'$ cause a gadget node with priority $m$, none of their branches get pruned. That is, all nodes in $H'$ have the
    same outgoing edges in $H'$ and in $F$. Then every play that starts in $H'$ in $F$ does not leave $H'$, making $H'$ exactly the set of nodes seen infinitely often in $\rho'$, that is, we have $H' = F$. This contradicts our initial assumption 
    that the maximum priority seen infinitely often in $\rho'$ is $m$; thus $\rho$ is $\exists$-winning. 
	\end{linenomath}




    \noindent \textbf{($\Leftarrow$)} Next we show that $v \in \Win_\forall(G') \cap V$ implies $v \in \Win_\forall(G)$. To do so, we fix a (positional) winning $\forall$-strategy $t'$ in $G'$ and construct an $\forall$-strategy $t$ in $G$
    such that $t$ is $\forall$-winning in $G$, that is, such that $t$ satisfies~$\neg$\cref{eq:Bone}. 
 To this end,it suffices to show that for all plays $\rho$ in $G$ that start from $v$ and are compliant with $t$,~\cref{eq:B1-t} holds.
      \begin{equation}\fair_\exists(\rho)  \Rightarrow (\fair_\forall(\rho) \wedge \rho \not\in \alpha)
  \tag{$\neg$\ref{eq:Bone}-t}\label{eq:B1-t} \end{equation}
  We show the two parts of the conjunction separately. We
  show that (i) $ \fair_\exists(\rho)
  \Rightarrow \fair_\forall(\rho)$, that is, that every $\exists$-fair play
  that is compliant with $t$ is $\forall$-fair, and that
  (ii) $\fair_\exists(\rho) \Rightarrow
  \rho\not\in\alpha$, that is, that every $\exists$-fair play that is compliant with
  $t$ is $\forall$-winning w.r.t. the parity condition.
  
    \smallskip
    \noindent \textbf{Construction of the $t'$-subgame $G'_{t'}$: } Let $t'$ be a positional $\forall$-strategy winning every play from $v$ in $G'$.
    We denote the subgame of $G'$ where $\forall$ nodes $u$ have only the outgoing edges $u \to t'(u)$ by $G'_{t'}$, and call it \emph{the $t'$-subgame}.

   \smallskip
    \noindent \textbf{Notation of $n_u$ and $succ(u)$:} 
    For the universal gadgets for both $u\in\Vfair_\exists$ and $u\in\Vfair_\forall$, we refer by $n_u$ to the index of the unique successor of $u$ in $G'_{t'}$.  That is, $t'(u) = u_{n_u}$. For the same gadgets, we denote $t'(u_{n_u}^\forall)$ by $succ(u)$. 
    For the existential gadgets for both $u\in\Vfair_\exists$ and $u\in\Vfair_\forall$, \emph{this time we let $n_u-1$} be the index of the rightmost child of $u$ that is sent to its right child by $t'$. That is,
    $n_u-1$ is the largest index $i$ such that $t'(u_i) = u_i^\forall$. For the same gadgets, we denote $t'(u_{n_u-1}^\forall) $ by $succ(u)$. 


    \smallskip
    \noindent \textbf{Construction of $t$: } We define $t: V^*\cdot V_\forall \to V$ as follows.
    For $u \in \Vfair_\forall$, we define $t$ by case distinction.
    \begin{inparaenum} \item If $n_u = k+1$, then we put $t(u)= succ(u)$ \item Otherwise, $t(u)$ is defined to cycle through the set $\{succ(u), E_f(u)\}$, starting at $succ(u)$.
    \end{inparaenum}
    For $u \in V_\forall \setminus \Vfair_\forall$, we put $t(u) = t'(u)$.

     \smallskip
    \noindent \textbf{Constraining $G'_{t'}$ with $n_u$: } We constrain $G'_{t'}$ to its subgame by limiting the choices of $\exists$-player for a node $u$ replaced by an existential gadget.
    For every existential gadget encountered in $G'_{t'}$, we limit the choices of $u \in \Vfair_\exists$ to only $u \to u_{n_u-1}$ and $u \to u_{n_u}$ (if it exists). So, we remove all the other branches of the gadget for $u$ in $G'_{t'}$. We refer to the
    remaining game as $LG'_{t'}$, standing for \emph{limited} $G'_{t'}$. 
    Note that as $LG'_{t'}$ is a subgame of $G'_{t'}$, it is still $\forall$-winning.
   
    \smallskip

 \noindent \textbf{$\forall$-expansion: } Let $\rho= u^1 u^2 \ldots$ be some play in $G$ that is compliant with $t$. We define a play $\rho'$, referred to as the \emph{$\forall$-expansion of $\rho$}, as follows: define $\rho'$ to be the play on $LG'_{t'}$ that follows $\rho$ while `prioritising universal nodes on $succ(u)$'.     
    
    What is meant by this is, for a node $u^i \in \Vfair$, if $u^{i+1} = succ(u^i)$, then $\rho'$ takes the unique branch in $LG'_{t'}$ that leads to $u^{i+1}$ while passing through a universal node node ${(u^i)}_j^\exists$. That is, if $u^i$ is replaced with a universal gadget, then $\rho'$ takes (branch 1) defined as follows.
     \begin{enumerate}
        \item If $u^i \in \Vfair$ is replaced by an existential gadget, then (branch 1)
        is $u^i \to u^i_{n_{u^i}-1} \to {(u^i)}^\forall_{n_{u^i}-1} \to succ(u^i) = u^{i+1}$;
        \item if $u^i \in \Vfair$ is replaced by a universal gadget, then (branch 1)
        is $u^i \to u^i_{n_{u^i}} \to {(u^i)}^\forall_{n_{u^i}} \to succ(u^i) = u^{i+1}$.
    \end{enumerate}
    On the other hand if $u^{i+1} \neq succ(u^i)$, then $\rho'$ takes the only other branch in $LG'_{t'}$, which is the following branch regardless of which gadget $u^i$ is replaced by.
    \begin{equation} u^i \to u^i_{n_{u^i}} \to {(u^i)}^\exists_{n_{u^i}} \to u^{i+1} \tag{branch 2} \label{eq:second-branch1}
    \end{equation}

    For $u^i \not \in \Vfair$, $\rho'$ just takes the edge $u^i \to u^{i+1}$. 
     
    \smallskip
    So $\rho'$ is a well-defined play in $LG'_{t'}$ that
    starts from $v$ and is $\forall$-winning.
    We observe that if we remove the gadget nodes from $\rho'$, we get $\rho$. That is, $\rho' \mid_{V}=\rho$.

    \smallskip
    	Now we are ready to show $\fair_\exists(\rho)
  \Rightarrow \fair_\forall(\rho)$ and
    	$\fair_\exists(\rho) \Rightarrow
    	\rho\not\in\alpha$ hold for any play $\rho$ that starts at $v$ and is compliant with $t$. To this end, assume that $\fair_\exists(\rho)$ is true. We will show (i) $\fair_\forall(\rho)$ and (ii) $\rho\not\in\alpha$.
    	
    \noindent \textbf{(i) $\fair_\forall(\rho)$:}
    We observe that for every $\rho$ in $G$ that is compliant with $t$, by construction of $t$, the only nodes $u \in \Vfair_\forall$ that $\rho$ may not be fair on, are those for which $n_u = k+1$. 

     So it suffices
    to show that such nodes occur finitely often in $\rho$.

	  Let $\rho'$ be the $\forall$-expansion of $\rho$. 
    Since $\rho'|_V = \rho$, it suffices to show that such nodes $u$ do 
    not occur infinitely often in $\rho'$. 
    Denote by $F$ the subgame of $LG'_{t'}$ that consists of only the nodes and edges visited infinitely often by $\rho'$.
    Assume that there is a node $u \in \Vfair_\forall \cap \Inf(\rho)$ with $n_u = k+1$. 
    Then $u$ and the (branch 1) exists in $F$ where it is
    $ u \to u_k \to u_k^\forall $ if $u$ is replaced with  the existential gadget, and $u \to u_{k+1} \to u_{k+1}^\forall$ if $u$ is replaced with the universal gadget. Note that both of these branches signal priority $2k$ (see~\cref{fig:existentialgadgetsparitybot}).
     Since every infinite play in $F$ is $\forall$-winning, there must exist a node in $F$ with priority $\max_\odd = 2k+1$.

    Since only gadget nodes for $w \in \Vfair_\exists$ have this priority, we conclude the existence of a node $w \in \Vfair_\exists$ in $F$ with $n_w = k+1$.

    Due to $\rho$ being $\exists$-fair, both the branches (branch 1) and (branch 2) of every such node $w$ exist in $F$. 
    For all such nodes, remove from $F$ (branch 2), i.e. $u \to u_{k+1} \to u_{k+1}^\exists$, and all nodes and edges reachable solely from this branch. Call the resulting subgame $H$.
    Note that this pruning does not result in any dead-ends since $F$ contains also (branch 1) for each such node. 
    Every infinite path in $H$ is compliant with $t'$ and thus $\forall$-winning.
    However, the maximum priority in $H$ is $\max_\even = 2k$. This
    implies that no infinite play in $H$ visits a node $w \in
    \Vfair_\exists$ with $n_w = k+1$ infinitely often, since this would imply visiting (branch 1) of this node, which would trigger the priority $2k$.
    This implies the existence of a subgame $H'$ of $H$ that has no $w$ with $n_w = k+1$. Therefore, all nodes in $H'$ have the same outgoing edges in $H'$ and in $F$.
    It follows that once $\rho'$ reaches $H'$, it never leaves $H'$, that is, we have $F= H'$.
    This is in contradiction with our initial assumption of a node with priority $n_u = k+1$ existing in $F$.
  
    \smallskip
    \noindent \textbf{(ii) $\rho\not\in\alpha$:}
    We again consider the $\forall$-expansion
    $\rho'$ of $\rho$. Let
    $m+1$ be the largest (odd) priority in $\Inf(\rho')$.
    Due to the equivalence $\rho'\mid_V = \rho$, it suffices to show the existence of a node $u \in \Inf(\rho'\mid_V)$ that has priority $m+1$ to prove $\rho\not\in\alpha$.
 
    Assume to the contrary that there is no such node $u$ and one again, denote by $F$ the subgame consisting of the nodes and edges visited infinitely often by $\rho'$.
    Then there is no $u\in V \cap F$ with $\lambda(u) = m+1$.
    Since the priority $m+1$ is guaranteed to exist in $F$, it must appear only in gadget nodes.

    Then, there is at least one node $w$ in $F$ with $n_w = (m \backslash 2) + 1$ (one can check that this is the only case a gadget node signals priority $m+1$ on the existential or universal gadgets in $LG'_{t'}$). 
    For all $w$ with $n_w = (m \backslash 2) + 1$, remove from $F$ the (branch 2), i.e. $u \to u_{(m \backslash 2)+1} \to u_{(m \backslash 2)+1}^\forall$\textemdash the branches that signal priority $m+1$. Denote the subgame of $F$ reachable from $w$, by $H$. Once again, $H$ has no dead-ends.
    Since all gadget nodes that signal priority $m+1$ are removed in this process, $H$ contains no nodes with priority $m+1$. However, it contains
    gadget nodes with priority $m$ that come from the (branch 1)s of $w$ with $n_w = (m \backslash 2) + 1$\textemdash which is the maximum priority in $H$.
    Since all infinite plays in $H$ are compliant with $t'$ and thus $\forall$-winning, the priority $m$ is visited finitely often in infinite plays of $H$.
    This implies the existence of a subgame $H'$ of $H$ that does not contain nodes with priority $m$. As before, this subgame $H'$ is equal to $F$ since all the nodes in $H'$ have the same outgoing edges in $H'$ and $F$.
    However, this contradicts
    our initial assumption that the maximum priority that is visited infinitely often by $\rho'$ is $m+1$.

    \noindent We therefore conclude that $\rho$ is $\forall$-winning, that is,
    that $\rho\not\in\alpha$, as required. \qed

\begin{remark}\label{rem:opt}
The gadgets in~\cref{fig:existentialgadgetsparitybot} have equivalent \emph{minimal} versions that are obtained by removing redundant branches; in~\cref{fig:optimalexistentialparitybotgadgets} we
depict the existential versions of the minimal gadgets for fair
parity/$\bot$ games. In more detail, consider a node $v\in \Vfair$ 
with priority $\lambda(v)$. Then in~\cref{fig:existentialgadgetsparitybot}, every visit to the gadget 
for $v$ visits priority at least $\lambda(v)$. Thus the gadget branches
that visit priorities less than $\lambda(v)$ are redundant. We show this by case distinction on $\lambda(v)$ for the existential versions of the 
gadgets; the argumentation for the universal versions is analogous. Let
$\mathbf{c}=\lceil \lambda(v)/2\rceil$.
\begin{itemize}
\item If $\lambda(v)=1$, then the minimal version of the gadget is the same as the standard version of the gadget, so there is nothing to show.
\item If $\lambda(v)>1$ and $\lambda(v)$ is odd, then visits
to $v_{\mathbf{c}-j}$ (for $j>0$) have similar outcomes as visits to $v_\mathbf{c}$, but may be disadvantageous to $\exists$-player with
regards to satisfying the parity objective.
Let $\pi$ and $\pi'$ be paths through the existential gadget for $v$ that are identical except for that
$\pi$ visits $v_\mathbf{c}$ and $\pi'$ visits $v_\mathbf{c-j}$.
The maximal priority in $\pi'$ is $\lambda(v)$, a bad event for
$\exists$-player since $\lambda(v)$ is odd. The maximal priority
in $\pi$ on the other hand is either $\lambda(v)$ or $\lambda(v)+1$. So
$\exists$-player can only decrease their chance of winning by moving to 
$v_\mathbf{c-j}$, showing that the branch $v_\mathbf{c-j}$ is redundant.
\item If $\lambda(v)$ is even, then visits
to $v_\mathbf{c}$ and $v_{\mathbf{c}-j}$ (for $j>0$) even have identical outcomes regarding
the parity objective.
Let $\pi$ and $\pi'$ be paths through the existential gadget for $v$ that are identical except for that
$\pi$ visits $v_\mathbf{c}$ and $\pi'$ visits $v_\mathbf{c-j}$.
The maximal priority in both $\pi'$ and $\pi$ is $\lambda(v)$, so
the two paths through the gadget are equivalent.
\end{itemize}
Thus we can safely remove, from gadgets for game nodes $v$, 
all branches with nodes that have priority less than $\lambda(v)$.
While this does not affect the asymptotic size of the reduced games,
it leads to smaller games and is expected to have noticeable impact
in practice.

We further note that the proof of~\cref{thm:fairParityToParity} works without any changes for the minimal gadgets, as we always restrict ourselves to the \emph{limited} game graphs $LG'_{s'}$ and $LG'_{t'}$ in the proof, and the minimal gadgets do not change the limited game graphs. 
\end{remark}


\begin{figure}
	\centering
	\tikzset{every state/.style={minimum size=15pt}, every node/.style={minimum size=15pt}}
	\begin{tikzpicture}[
		auto,
		node distance=0.8cm,
		semithick
		]
		\node[draw,rounded corners, left= 3.2cm of 0, label=left:{\scalebox{0.7}{$\lambda(v)$}}] (vleft) {$v$};
		\node (yo) [below= 4mm of vleft] {};
		\node (yoleft) [left= 3.5mm of yo] {$\dots$};
		\node (yoright) [right= 3.5mm of yo] {$\dots$};
		\node[rectangle,draw] (r1) [left= 18mm of yo] {$v_\mathbf{c}$};
		\node (yo10) [below= 4mm of r1] {};
		\node[draw,rounded corners, colored] (c1) [left=0.2mm of yo10] {$v^\exists_\mathbf{c}$};
		\node at ([yshift=-9mm, xshift=-1mm]c1.north west) {\scalebox{0.7}{$2\mathbf{c}-1$}};
		\node[rectangle,draw] (r2) [right=0.2mm of yo10] {$v^\forall_\mathbf{c}$};
		\node at ([yshift=-9mm]r2.north west) {\scalebox{0.7}{$2\mathbf{c}$}};
		
		\node[rectangle,draw] (rnm) [below= 4mm of vleft] {$v_i$};
		\node (yon0) [below= 4mm of rnm] {};
		\node[draw,rounded corners] (cnm) [left=0.2mm of yon0] {$v^\exists_i$};
		\node at ([yshift=-9mm]cnm.north west) {\scalebox{0.7}{$2i-1\,\,$}};
		\node[rectangle,draw] (rn) [right= 0.2mm of yon0]  {$v^\forall_i$};
		\node at ([yshift=-9mm]rn.north west) {\scalebox{0.7}{$2i$}};
		\node[rectangle,draw] (rmaxevenm) [right= 18mm of yo] {$v_{k+1}$};
		\node[draw,rounded corners] (cmaxevenm) [below= 3mm of rmaxevenm] {$v^\exists_{{k+1}}$};
		\node at ([yshift=-9mm]cmaxevenm.north west) {\scalebox{0.7}{$2k+1$}};
		
		\path[->] (vleft) edge node {} (r1);
		\path[->, colored] (r1) edge node {} (c1);
		\path[->] (r1) edge node {} (r2);
		\path[->] (vleft) edge node {} (rnm);
		\path[->] (rnm) edge node {} (cnm);
		\path[->] (rnm) edge node {} (rn);
		\path[->] (vleft) edge node {} (rmaxevenm);
		\path[->] (rmaxevenm) edge node {} (cmaxevenm);
		
		\node (out1) [below= 5mm of c1, colored] {$E(v)$};
		\node (out2) [below= 5mm of r2] {$E_f(v)$};
		\node (outnm) [below= 5mm of cnm] {$E(v)$};
		\node (outn) [below= 5mm of rn] {$E_f(v)$};
		\node (outmaxevenm) [below= 5mm of cmaxevenm] {$E(v)$};
		\path[->, colored] (c1) edge node {} (out1);
		\path[->] (r2) edge node {} (out2);
		\path[->] (cnm) edge node {} (outnm);
		\path[->] (rn) edge node {} (outn);
		\path[->] (cmaxevenm) edge node {} (outmaxevenm);
		
		\node[draw,rounded corners, right= 3.2cm of 0, label=left:{\scalebox{0.7}{$\lambda(v)$}}] (vright) {$v$};
		\node (yo) [below= 4mm of vright] {};
		\node (yoleft) [left= 3.5mm of yo] {$\dots$};
		\node (yoright) [right= 3.5mm of yo] {$\dots$};
		\node[rectangle,draw] (r1) [left= 18mm of yo] {$v_\mathbf{c}$};
		\node (yo10) [below= 4mm of r1] {};
		\node[draw,rounded corners, colored] (c1) [left=0.2mm of yo10] {$v^\exists_\mathbf{c}$};
		\node at ([yshift=-9mm, xshift=-1mm]c1.north west) {\scalebox{0.7}{$2\mathbf{c}-1$}};
		\node[rectangle,draw] (r2) [right=0.2mm of yo10] {$v^\forall_\mathbf{c}$};
		\node at ([yshift=-9mm]r2.north west) {\scalebox{0.7}{$2\mathbf{c}$}};
		
		\node[rectangle,draw] (rnm) [below= 4mm of vright] {$v_i$};
		\node (yon0) [below= 4mm of rnm] {};
		\node[draw,rounded corners] (cnm) [left=0.2mm of yon0] {$v^\exists_i$};
		\node at ([yshift=-9mm]cnm.north west) {\scalebox{0.7}{$2i-1\,\,$}};
		\node[rectangle,draw] (rn) [right= 0.2mm of yon0]  {$v^\forall_i$};
		\node at ([yshift=-9mm]rn.north west) {\scalebox{0.7}{$2i$}};
		\node[rectangle,draw] (rmaxevenm) [right= 18mm of yo] {$v_k$};
		\node (yomaxevenm0) [below= 4mm of rmaxevenm] {};
		\node[draw,rounded corners] (cmaxevenm) [left=0.2mm of yomaxevenm0] {$v^\exists_k$};
		\node at ([yshift=-9mm]cmaxevenm.north west) {\scalebox{0.7}{$2k-1\,\,$}};
		\node [rectangle,draw]  (rmaxeven) [right=0.2mm of yomaxevenm0] {$v^\forall_k$};
		\node at ([yshift=-9mm]rmaxeven.north west) {\scalebox{0.7}{$2k$}};
		
		\path[->] (vright) edge node {} (r1);
		\path[->, colored] (r1) edge node {} (c1);
		\path[->] (r1) edge node {} (r2);
		\path[->] (vright) edge node {} (rnm);
		\path[->] (rnm) edge node {} (cnm);
		\path[->] (rnm) edge node {} (rn);
		\path[->] (vright) edge node {} (rmaxevenm);
		\path[->] (rmaxevenm) edge node {} (cmaxevenm);
		\path[->] (rmaxevenm) edge node {} (rmaxeven);
		
		\node (out1) [below= 5mm of c1, colored] {$E_f(v)$};
		\node (out2) [below= 5mm of r2] {$E(v)$};
		\node (outnm) [below= 5mm of cnm] {$E_f(v)$};
		\node (outn) [below= 5mm of rn] {$E(v)$};
		\node (outmaxevenm) [below= 5mm of cmaxevenm] {$E_f(v)$};
		\node (outmaxeven) [below= 5mm of rmaxeven] {$E(v)$};
		\path[->, colored] (c1) edge node {} (out1);
		\path[->] (r2) edge node {} (out2);
		\path[->] (cnm) edge node {} (outnm);
		\path[->] (rn) edge node {} (outn);
		\path[->] (cmaxevenm) edge node {} (outmaxevenm);
		\path[->] (rmaxeven) edge node {} (outmaxeven);

	\end{tikzpicture}\caption{Existential minimal gadgets for $v \in \Vfair_\exists$ (left) and $v \in \Vfair_\forall$ (right) following the notation from~\cref{fig:existentialgadgetsparitybot}.
		Here, $\mathbf{c}$ stands for $\lceil \lambda(v) \backslash 2 \rceil$. The full gadgets are used if $\lambda(v)$ is odd and the colored edges are discarded if it is even. Therefore, in a gadget for $v$, all entry and exit nodes have priority at least $\lambda(v)$.}\label{fig:optimalexistentialparitybotgadgets}
	\end{figure} 

   \begin{remark} [Reduction of fair parity/$\top$ games]\label{remark:reducitonofparitybot}
    Consider the gadgets from~\cref{fig:existentialgadgetsparitybot}. In order to play unfairly from a node $v \in \Vfair_\exists$, $\exists$-player has to take its rightmost branch and signal priority $\max_\odd$, whereas to play unfairly from a node $v \in \Vfair_\forall$, $\forall$-player has to take the rightmost branch and signal $\max_\even$. Since $\max_\odd > \max_\even$, this dynamic ensures that mutually unfair plays are $\forall$-winning.

The gadgets for a fair parity/$\top$ game with priority function $\lambda: V \to [2k]$ can be constructed as follows with the addition of priority $2k+2$. Take the gadgets from~\cref{fig:existentialgadgetsparitybot}.
In the existential gadget for $\Vfair_\exists$ 
 add another branch $\to v^\forall_{k+1} \to E_f(v)$ to $v_{k+1}$ and in the universal gadget for $\Vfair_\exists$ add a rightmost branch $\to v_{k+2} \to v_{k+2}^\forall \to E_f(v)$.
 In the existential gadget for $\Vfair_\forall$ add a rightmost branch $\to  v_{k+1} \to v^\exists_{k+1} \to E_f(v)$
 and in the universal gadget for $\Vfair_\forall$ add another branch $\to v^\exists_{k+1} \to E_f(v)$ to $v_{k+1}$.
 
All the newly added gadget nodes have priority $2k+2$ and therefore $\max_\even=2k+2 > \max_\odd=2k+1$, which ensures that mutually unfair plays are $\exists$-winning.
The correctness of this construction follows as a corollary of the reduction of fair parity/parity games given in~\cref{subsection:reductionparityparity} below.
   \end{remark}

\subsection{Reduction of Fair Parity/Parity Games}\label{subsection:reductionparityparity}

\leavevmode\par\medskip

In this section, we present a quadratic reduction from fair parity/parity games to parity games.
So let
$G=(A, \text{Parity}(\lambda),\text{Parity}(\Gamma))$ where $A = (V_\exists,V_\forall,E,E_f)$ is a fair game arena with $V = V_\exists \dotcup V_\forall$ and priority functions $\lambda:V\to[2k]$, $\Gamma: V\to [d]$.

The reduction is based on ideas from the previous section, in particular
adapting the basic structure of the introduced gadgets. However, in order to
correctly treat mutually unfair plays according to
the additional parity objective
$\Gamma$, we annotate game nodes $v\in V$
with two memory values $p\in[d]$ and $b\in\{\exists,\forall\}$.
The former is used to store the maximal priority according
to $\Gamma$ that the play has \emph{recently} seen; this value
is signalled (and reset after signalling)
from time to time by visiting an according priority in the 
reduced game. The value $b$ is used to decide (at certain nodes)
whether the memory value is signalled, or not.


\begin{figure}
\begin{footnotesize}
\begin{center}
  \tikzset{every state/.style={minimum size=15pt}, every node/.style={minimum size=15pt}}

    \begin{tikzpicture}[
      auto,
      node distance=0.8cm,
      semithick
      ]
       \node[draw,rounded corners, label=left:{\scalebox{0.7}{$\lambda(v)$}}] (0) {$v,p,b$};
       \node (yo) [below= 4mm of 0] {};
       \node (yoleft) [left= 1.3cm of yo] {$\dots$};
       \node (yoright) [right= 1.3cm of yo] {$\dots$};
       \node[rectangle,draw] (r1) [left= 35mm of yo] {$u_1$};
       \node (yo10) [below= 4mm of r1] {};
       \node[draw,rounded corners, label=left:{\scalebox{0.7}{$1$}}] (c1) [left=2mm of yo10] {$u^\exists_1$};
       \node[rectangle,draw, label=left:{\scalebox{0.7}{$2$}}] (r2) [right=2mm of yo10] {$u^\forall_1$};

       \node[rectangle,draw] (rnm) [below= 4mm of 0] {$u_{i}$};
       \node (yon0) [below=4mm of rnm] {};
       \node[draw,rounded corners, label=left:{\scalebox{0.7}{$2i-1$}}] (cnm) [left=2mm of yon0] {$u^\exists_{i}$};
       \node[rectangle,draw, label=left:{\scalebox{0.7}{$2i$}}] (rn) [right=2mm of yon0] {$u^\forall_{i}$};
       \node[rectangle,draw] (rmaxodd) [right= 35mm of yo] {$u_{k+1}$};
       \node[draw,rounded corners,
       label=right:{\color{orange}{\scalebox{0.7}{$\begin{cases}2k+1 & b=\exists\\2k+2+p & b=\forall\end{cases}$}}}] (cmaxodd) [below=4mm of rmaxodd] {$u^\exists_{k+1}$};
       \path[->] (0) edge node {} (r1);
       \path[->] (r1) edge node {} (c1);
       \path[->] (r1) edge node {} (r2);
       \path[->] (0) edge node {} (rnm);
       \path[->] (rnm) edge node {} (cnm);
       \path[->] (rnm) edge node {} (rn);
       \path[->] (0) edge node {} (rmaxodd);
       \path[->] (rmaxodd) edge node {} (cmaxodd);
       \node (out1) [below=5mm of c1] {$E(v,p,b)$};
       \node (out2) [below=5mm of r2] {$E_f(v,p,b)$};
       \node (outnm) [below=5mm of cnm] {$E(v,p,b)$};
       \node (outn) [below=5mm of rn] {$E_f(v,p,b)$};
       \node (outmaxodd) [below=5mm of cmaxodd] {$E(v,\color{orange}{p/1}
       \color{black},\exists)$};
       \path[->] (c1) edge node {} (out1);
       \path[->] (r2) edge node {} (out2);
       \path[->] (cnm) edge node {} (outnm);
       \path[->] (rn) edge node {} (outn);
       \path[->] (cmaxodd) edge node {} (outmaxodd);
    \end{tikzpicture}
\end{center}
\end{footnotesize}
    \caption{Gadget for $v \in \Vfair_\exists$ in fair parity/parity games; $u$ abbreviates $(v,p,b)$.}\label{fig:existentialgadgetparityparity}
\end{figure}


It indicates the player that has last taken the rightmost branch
in the gadget for one of its fair nodes.
If this bit keeps flipping between $\exists$ and $\forall$
forever, then both players intuitively insist on keeping control in one of their
respective fair nodes, enabling a mutually unfair play; in the reduced game, the memory content $p$ is signalled (and then reset to $1$) whenever the value flips from $\forall$ to $\exists$.

\begin{figure}
\begin{footnotesize}
\begin{center}

  \tikzset{every state/.style={minimum size=15pt}, every node/.style={minimum size=15pt}}
      \begin{tikzpicture}[
      auto,
      node distance=0.8cm,
      semithick
      ]
       \node[draw, rectangle,label=left:{\scalebox{0.7}{$\lambda(v)$}}] (0) {{$v,p,b$}};
       \node (yo) [below= 4mm of 0] {\large{$\dots$}};
       \node (yoleft) [left= 2.5cm of yo] {};
       \node[draw,rounded corners] (c1) [left= 0.1cm of yoleft] {$u_1$};
       \node[draw,rounded corners] (cn) [right= 1cm of yo] {{$u_{k+1}$}};
       \node (yo10) [below=4mm of c1] {};
       \node[draw,rounded corners,label=left:{\scalebox{0.7}{$1$}}] (c1-2) [left of=yo10] {$u^\exists_1$};
       \node[draw,rounded corners] (c2) [right= 0.8cm of yoleft] {$u_2$};
       \node (yo20) [below=4mm of c2] {};
       \node[rectangle,draw, label=left:{\scalebox{0.7}{$2$}}] (r2) [left of=yo20] {$u^\forall_2$};
       \node[draw,rounded corners,label=left:{\scalebox{0.7}{$3$}}] (c3) [right of=yo20] {$u^\exists_2$};
       \node (yon0) [below=4mm of cn] {};
       \node[rectangle, draw,label=left:{\scalebox{0.7}{$2k$}}] (rn) [left of=yon0] {$u^\forall_{k+1}$};
       \node[draw,rounded corners,label=right:{\scalebox{0.7}{$2k+1$}}] (cnp) [right of=yon0] {{$u^\exists_{k+1}$}};
       \node (rightdots) [right= 0.6cm of cn] {};
       \node[draw,rounded corners] (cmaxeven) [right= 7.5mm of rightdots] {{$u_{k+2}$}};
       \node (yocmaxeven) [below=4mm of cmaxeven] {};
       \node[draw,rectangle,label=right:{\scalebox{0.7}{$2k+2$}}] (rmaxeven) [below=4mm of cmaxeven] {{$u^\forall_{k+2}$}};
       \path[->] (0) edge node {} (c1);
       \path[->] (0) edge node {} (c2);
       \path[->] (0) edge node {} (cn);
       \path[->] (0) edge node {} (cmaxeven);
       \path[->] (c1) edge node {} (c1-2);
       \path[->] (c2) edge node {} (r2);
       \path[->] (c2) edge node {} (c3);
       \path[->] (cn) edge node {} (cnp);
       \path[->] (cn) edge node {} (rn);
       \path[->] (cmaxeven) edge node {} (rmaxeven);
      \node (out1) [below=5mm of c1-2] {$E_f(v,p,b)$};
      \node (out2) [below=5mm of r2] {$E(v,p,b)$};
      \node (out3) [below=5mm of c3] {$E_f(v,p,b)$};
      \node (outn) [below=5mm of rn] {$E(v,p,b)$};
      \node (outnp) [below=5mm of cnp] {$E_f(v,p,b)$};
      \node (outmaxeven) [below=5mm of rmaxeven] {$E(v,p,\forall)$};
      \path[->] (c1-2) edge node {} (out1);
      \path[->] (r2) edge node {} (out2);
      \path[->] (c3) edge node {} (out3);
      \path[->] (rn) edge node {} (outn);
      \path[->] (cnp) edge node {} (outnp);
      \path[->] (rmaxeven) edge node {} (outmaxeven);
    \end{tikzpicture}
    \end{center}
    \end{footnotesize}

    \caption{Gadget for $v \in \Vfair_\forall$ in fair parity/parity games; $u$ abbreviates $(v,p,b)$.}\label{fig:universalgadgetparityparity}
   \end{figure}

Formally, the reduction is as follows. The game is based
on the set $V\times[d]\times[2]$ of base nodes, where we use $[2]$ to denote
$\{\exists,\forall\}$; intuitively, a
node $(v,p,b)$ from this set corresponds to $v\in V$, annotated with
memory values $p$ and $b$ as described above.
In order to succinctly refer to the combination of taking
a move in $G$ and updating the memory components, we overload notation and put
\begin{align*}
E(v,p,b)&=\{(w,p',b)\in V\times [d]\times [2]\mid w\in E(v)\text{ and } p'=\max(p,\Gamma(v))\}\\
E_f(v,p,b)&=\{(w,p',b)\in V\times [d]\times [2]\mid w\in E_f(v)\text{ and } p'=\max(p,\Gamma(v))\}
\end{align*}
for $(v,p,b)\in V\times[d]\times[2]$.
Thus, a triple $(w,p',b)$ is contained in
$E(v,p,b)$ if there is a move $(v,w)\in E$ and $p'$ is the
maximum of the previous memory value $p$ and the current priority
$\Gamma(v)$ of $v$; in $E_f(v,p,b)$, we require $(v,w)\in E_f$ instead.
In both functions, the argument $b$ is used
to explicitly set this component of the memory to either $\exists$
or $\forall$.
The reduced game consists of subgames
that start at annotated nodes $u=(v,p,b)\in V\times[d]\times [2]$.
In case that $v\in \Vn$, the game just proceeds according to
$E(v,p,b)$, with ownership of $(v,p,b)$
determined by whether $v\in V_\exists$ or $v\in V_\forall$;
this corresponds to taking a move at a normal node in $G$, but updating the memory component $p$, and keeping the component $b$ unchanged.
For fair nodes $v\in \Vfair$,
the subgame
consists of three levels, and after these three steps leads back to a node
from $V\times[d]\times [2]$.
\cref{fig:existentialgadgetparityparity} and~\ref{fig:universalgadgetparityparity} show
the subgames that start at $(v,b,p)\in V\times[d]\times [2]$ such that
$v\in \Vfair_\exists$ and $v\in \Vfair_\forall$,
respectively, adapting the existential gadget for $v \in \Vfair_\exists$ and the universal one for $v\in \Vfair_\forall$.

The rightmost branches in these gadgets overwrite the last component $b$ with $\exists$ and $\forall$, respectively. 
The colored values in the rightmost branch in the gadget in 
\cref{fig:existentialgadgetparityparity} depend on the value of $b$. If $b=\forall$
(corresponding to $\forall$-player being the one that last has taken the rightmost branch), then the priority $2k+2+p$ is signalled and the memory value $p$
is reset to $1$;
if $b=\exists$
(corresponding to $\exists$-player having taken the rightmost branch
last), then the priority $2k+1$ is signalled and the memory value $p$ is not updated further.

Before we state the theorem, we demonstrate the reduction on a simple game, adapted from~\cref{fig:division} by coloring the nodes twice, with $\alpha$ and $\beta$ priorities. 

 \begin{figure}
  \centering
  \tikzset{every state/.style={minimum size=50pt}}
    \begin{tikzpicture}[
      auto,
      node distance=0.8cm,
      semithick
      ]
       \node[draw, rounded corners] (0) {$1/$\color{orange}{$1$}};
       \node (yo) [right of=0] {};
       \node[rectangle,draw] (1) [right of=yo] {$1/$\color{orange}{$1$}};
       \node (yo1) [left of=0] {};
       \node[draw, rounded corners] (2) [left of=yo1] {$1/$\color{orange}{$1$}};
       \node (yo2) [right of=1] {};
       \node[draw, rounded corners] (3) [right of=yo2] {$2/$\color{orange}{$1$}};
      \node[below=2pt of 2] {$v_1$};
      \node[below=2pt of 0] {$v_2$};
      \node[below=2pt of 1] {$v_3$};
      \node[below=2pt of 3] {$v_4$};

       \path[->] (2) edge [loop left] node [below] {} (2);
       \path[->] (3) edge [loop right] node [below] {} (3);
       \path[->] (0) edge [bend right=30] node [pos=0.3,below] {} (1);
       \path[->,blue, very thick] (1) edge [bend right=30] node [pos=0.3,below] {} (0);
       \path[->,dashed] (0) edge node [pos=0.3,left] {} (2);
       \path[->,dashed] (1) edge node [pos=0.3,left] {} (3);
    
    \end{tikzpicture}
    \caption{A parity/parity game arena adapted from~\cref{fig:division}. Each node is labeled twice, with $\alpha$- and $\beta$-priorities, where the $\beta$ priorities are shown in orange. The $\alpha$ priorities coincide with those in~\cref{fig:division}. Priorities are written inside the nodes, while the node names are placed below them. Universal player wins this game from all the nodes except for~$v_4$, following the winning strategy indicated by the thicker blue edge; whereas existential player wins from $v_4$ via the trivial strategy.}
    \label{fig:parityparityexample}
    
  \end{figure}
\begin{example}
  We demonstrate the reduction on the game depicted in~\cref{fig:parityparityexample}. 
Each game node is assigned two colors corresponding to the $\alpha$- and $\beta$-priorities. The $\beta$-priorities, chosen to remain simple and avoid a further blow-up in the reduced game, are uniformly set to~$1$. This setup ensures that $\forall$-player wins all mutually unfair plays.
In this game, $\forall$-player wins from the set $\{v_1, v_2, v_3\}$ by taking the (thicker blue) edge from~$v_3$ to~$v_2$.
If $\exists$-player acts fairly and takes her fair edge from~$v_2$ to~$v_1$,
then the play eventually remains in~$v_1$, only visiting priority~$1$,
thus yielding a win for~$\forall$-player.
If, on the other hand, $\exists$-player chooses to act unfairly by always taking the edge from~$v_2$ back to~$v_3$,
then the resulting play is mutually unfair, and the winner is determined by the maximal $\beta$-priority encountered,
i.e.~$1$—which gives the victory to~$\forall$-player. 

The reduced parity game is shown in~\cref{fig:parityparityexample2}, where the nodes~$v_2$ and~$v_3$ are first duplicated for the two possible values of $b\in\{\exists,\forall\}$ and then replaced by the corresponding gadgets from~\cref{fig:existentialgadgetparityparity} for $p=1$ (as $1$ is the only $\beta$-priority). If the maximal $\beta$-priority were larger, a (linear) blow-up in the state space of the reduced game occurs, as a separate copy of each fair node and each value of~$p$ is then required.
We further depict nodes inherited from the original game with a gray background, and the gadget nodes with a white background, omitting the notation for the gadget nodes.
For nodes coming from the parity/parity game, the node name is displayed inside the node and its priority is shown next to it; for gadget nodes, priorities are written inside the node.

Our reduction guarantees that whenever a node~$v$ is winning for player~$j \in \{\exists, \forall\}$ in the original parity/parity game, the node~$(v, 1, \exists)$ is likewise winning for player $j$ in the reduced parity game.
Accordingly, we observe that $(v_i, 1, \exists)$ is winning for~$\forall$-player for $1 \leq i \leq 3$, and winning for~$\exists$-player for~$i = 4$.
The winning strategy for~$\forall$-player is indicated by the thicker blue edges.
With this strategy, $\forall$-player wins all nodes in the reduced game except for~$(v_4, 1, \exists)$ and~$(v_4, 1, \forall)$, which are won by~$\exists$-player via the trivial strategy.
\begin{figure} 
		\centering
		\tikzset{every state/.style={minimum size=10pt}}
		\begin{tikzpicture}[
			auto,
			node distance=0.8cm,
			semithick,
			extends/.style={->},
      ]
			\node[draw, rounded corners, fill=gray!30] (q) {$v_2, 1, \exists$};
      \node (label-q) [right of=q, xshift=1.5mm] {$\mathbf{1}$};
      \node (q-mid) [below= 1cm of q] {};
			\node[draw, rectangle] (qa) [left= 3mm of q-mid] {\phantom{a}}
      coordinate[below=1cm of qa] (qa-mid);
      \node (qa-mid-node) at (qa-mid) {};
			\node[draw, rectangle] (qb) [right= 3mm of q-mid] {\phantom{a}}
      coordinate[below=1cm of qb] (qb-mid);
      \node[draw, rounded corners] (qa1) [left= 1mm of qa-mid-node] {$1$};
			\node[draw, rectangle] (qa2) [right= 1mm of qa-mid-node] {$2$};
			\node[draw, rounded corners] (qb-mid-node) at (qb-mid) {$3$};
      \node[draw, rounded corners, fill=gray!30] (q-trap) [below= 1.2cm of qa-mid-node] {$v_1, 1, \exists$};
      \node (label-q-trap) [left of= q-trap, xshift=-1.5mm] {$\mathbf{1}$};

      \node[draw, rectangle, fill=gray!30] (p) [below= 0.3mm of qb-mid-node,xshift=3cm] {$v_3, 1, \exists$};
      \node (label-p) [left of=p, yshift=9pt, xshift=-1.5mm] {$\mathbf{1}$};
      \node[draw, rounded corners] (pb) [below= 1cm of p] {\phantom{a}}
      coordinate[below = 1cm of pb] (pb-mid);
      \node[draw, rounded corners] (pa) [left= 1cm of pb] {\phantom{a}}
      coordinate[below = 1cm of pa] (pa-mid);
      \node[draw, rounded corners] (pc) [right= 1cm of pb] {\phantom{a}}
      coordinate[below = 1cm of pc] (pc-mid);
      \node[draw, rounded corners] (pa-mid-node) at (pa-mid) {$1$};
      \node (pb-mid-node) at (pb-mid) {};
      \node[draw, rectangle] (pb1) [left= 1mm of pb-mid-node] {$2$};
      \node[draw, rounded corners] (pb2) [right= 1mm of pb-mid-node] {$3$};
      \node[draw, rectangle] (pc-mid-node) at (pc-mid) {$4$};
      \node[draw, rounded corners, fill=gray!30] (p-trap1) [below= 1.2cm of pa-mid-node] {$v_4, 1, \exists$};
      \node (label-p-trap1) [left of= p-trap1, xshift=-1mm] {$\mathbf{2}$};
      \node[draw, rounded corners, fill=gray!30] (p-trap2) [below= 1.2cm of pb2] {$v_4, 1, \forall$};
      \node (label-p-trap2) [left of= p-trap2, xshift=-1mm] {$\mathbf{2}$};

      \node[draw, rounded corners, fill=gray!30] (r) [right= 1cm of p-trap2] {$v_2, 1, \forall$};
      \node (label-r) [left of= r, xshift=-1mm] {$\mathbf{1}$};
      \node (r-mid) [below= 1cm of r] {};
			\node[draw, rectangle] (ra) [left= 3mm of r-mid] {\phantom{a}}
      coordinate[below=1cm of ra] (ra-mid);
      \node (ra-mid-node) at (ra-mid) {};
			\node[draw, rectangle] (rb) [right= 3mm of r-mid] {\phantom{a}}
      coordinate[below=1cm of rb] (rb-mid);
      \node[draw, rounded corners] (ra1) [left= 1mm of ra-mid-node] {$1$};
			\node[draw, rectangle] (ra2) [right= 1mm of ra-mid-node] {$2$};
			\node[draw, rounded corners] (rb-mid-node) at (rb-mid) {{\color{orange}$5$}};
      \node[draw, rounded corners, fill=gray!30] (r-trap) [below= 1.2cm of rb-mid-node] {$v_1, 1, \forall$};
      \node (label-r-trap) [left of=r-trap, xshift=-1mm] {$\mathbf{1}$};

      \node[draw, rectangle, fill=gray!30] (t) [left= 1cm of r-trap] {$v_3, 1, \forall$};
      \node (label-t) [left of=t, xshift=-1mm] {$\mathbf{1}$};
      \node[draw, rounded corners] (tb) [below= 1cm of t] {\phantom{a}}
      coordinate[below = 1cm of tb] (tb-mid);
      \node[draw, rounded corners] (ta) [left= 1cm of tb] {\phantom{a}}
      coordinate[below = 1cm of ta] (ta-mid);
      \node[draw, rounded corners] (tc) [right= 1cm of tb] {\phantom{a}}
      coordinate[below = 1cm of tc] (tc-mid);
      \node[draw, rounded corners] (ta-mid-node) at (ta-mid) {$1$};
      \node (tb-mid-node) at (tb-mid) {};
      \node[draw, rectangle] (tb1) [left= 1mm of tb-mid-node] {$2$};
      \node[draw, rounded corners] (tb2) [right= 1mm of tb-mid-node] {$3$};
      \node[draw, rectangle] (tc-mid-node) at (tc-mid) {$4$};

			\draw[->] (q) -- (qa);
      \draw[->] (q) -- (qb);
      \draw[->, blue, very thick] (qa) -- (qa1);
      \draw[->] (qa) -- (qa2);
      \draw[->, blue, very thick] (qb) -- (qb-mid-node);

      \draw[->] (qa1) -- (q-trap);
      \draw[->, blue, very thick] (qa2) -- (q-trap);
      \draw[->] (qb-mid-node) -- (q-trap);
      \draw[->] (qa1) |- (p);
      \draw[->] (qb-mid-node) |- (p);

			\draw[->] (p) -- (pa);
      \draw[->] (p) -- (pb);
      \draw[->, blue, very thick] (p) -- (pc);
      \draw[->] (pa) -- (pa-mid-node);
      \draw[->] (pb) -- (pb1);      
      \draw[->] (pb) -- (pb2);
      \draw[->] (pc) -- (pc-mid-node);

      \draw[->] (pa-mid-node) -- (p-trap1);
      \draw[->] (pb1) -- (p-trap1);
      \draw[->] (pb2) -- (p-trap1);
      \draw[->] (pc-mid-node) -- (p-trap2);
      \draw[->, blue, very thick] (pc-mid-node) -- (r);
      \coordinate[below= 2 cm of q-trap, xshift=-1.3cm] (mid1);
      \draw[->, blue, very thick] (pb1) |- (mid1) |- (q); 

			\draw[->] (r) -- (ra);
      \draw[->] (r) -- (rb);
      \draw[->] (ra) -- (ra1);
      \draw[->, blue, very thick] (ra) -- (ra2);
      \draw[->, blue, very thick] (rb) -- (rb-mid-node);

      \draw[->] (ra1) -- (r-trap);
      \draw[->] (ra1) -- (t);
      \draw[->, blue, very thick] (ra2) -- (r-trap);
      \coordinate[right= 3.35cm of p] (mid6);
      \coordinate[above= 0.6cm of mid6] (mid7);
      \coordinate[right= 1.5cm of q-trap] (mid8);
      \draw[->] (rb-mid-node) -| (mid6) -- (p);
      \draw[->] (rb-mid-node) -| (mid7) -| (mid8) -- (q-trap);

			\draw[->] (t) -- (ta);
      \draw[->] (t) -- (tb);
      \draw[->, blue, very thick] (t) -- (tc);
      \draw[->] (ta) -- (ta-mid-node);
      \draw[->] (tb) -- (tb1);      
      \draw[->] (tb) -- (tb2);
      \draw[->] (tc) -- (tc-mid-node);

      \coordinate[below= 7.4cm of p-trap1] (mid2);
      \coordinate[below= 2cm of p-trap1] (mid3);
      \draw[->] (ta-mid-node) |- (mid2) -- (mid3) -| (p-trap2);
      \draw[->] ([xshift=-2pt]tb1.south) |- (mid2) -- (mid3) -| (p-trap2);
      \draw[->] (tb2) |- (mid2) -- (mid3) -| (p-trap2);
      \draw[->] (tc-mid-node) |- (mid2) -- (mid3) -| (p-trap2);
      \coordinate[below = 5mm of tc-mid-node, xshift=2.2cm] (mid4);
      \coordinate[right=2.2cm of tc-mid-node] (mid5);
      \draw[->, blue, very thick] ([xshift=+2pt]tb1.south) |- (mid4) |- (r);
      \draw[blue, very thick] (tc-mid-node) -- ([xshift=-6pt] mid5);

      \path[->] (p-trap2) edge [loop right, looseness=4] node {} (p-trap2);
      \draw[<-] (q-trap) to[out=245, in=290, looseness=7] (q-trap);
      \draw[<-] (p-trap1) to[out=245, in=290, looseness=7] (p-trap1);
      \draw[<-] (r-trap) to[out=245, in=290, looseness=7] (r-trap);

		\end{tikzpicture} 
		\caption{\small 
		Reduced version of the parity/parity game from~\cref{fig:parityparityexample}.
}
		\label{fig:parityparityexample2}
	\end{figure} 


A closer inspection of the reduced game provides the following intuition.
Player~$\forall$ wins because every play starting from his winning region either reaches one of the nodes~$(v_1, 1, \exists)$, $(v_1, 1, \forall)$\textemdash which form $\forall$-winning traps signaling priority~$1$\textemdash or visits priority~$5$ infinitely often, which is the maximal priority in the game.
The reason why~$\forall$-player has a winning strategy while~$\exists$-player does not\textemdash despite the symmetry of the parity/parity game graph\textemdash is that the maximal priority~$5$ ($2k + 2 + p = 4 + 1$) reflects the maximal $\beta$-priority~$1$ of the mutually unfair play~$(v_2 v_3)^\omega$. 
If the maximal $\beta$-priority of this play were even, the priority signaled in the rightmost branch of~$(v_2, 1, \forall)$ would instead be an even number greater than all other priorities in the game.
In that case, using a strategy analogous to the blue $\forall$-winning strategy, $\exists$-player could ensure that every play starting from any node other than~$(v_1, 1, \exists), $ $(v_1, 1, \forall)$ either reaches an $\exists$-winning trap or yields a play the maximal priority of which has the even value~$2k + 2 + p$. However, the reduced game would be (linearly) larger in this case, as discussed above.
\end{example}

  Next, we formally state and prove the reduction. 
  \begin{restatable}{theorem}{restatablefairParityPlusParityToParity}\label{thm:fairParityPlusParityToParity} Let
$G=(A, \text{Parity}(\lambda),\text{Parity}(\Gamma))$ where $A = (V_\exists,V_\forall,E,E_f)$ is a fair game arena, $V = V_\exists \dotcup V_\forall$ and $\lambda:V\to[2k]$ and $\Gamma: V\to [d]$ are priority functions.
Then there exists a parity game $G'$ with $6 n d(k+2)$ nodes and
$2k+2+d$ priorities with set $V\times[d]\times[2]$ of base nodes
such that for all $v\in V$, $\exists$-player wins $v$ in $G$ if
and only if $\exists$-player wins $(v,1,\exists)$ in $G'$.
\end{restatable}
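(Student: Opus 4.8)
The plan is to mirror the structure of the proof of \cref{thm:fairParityToParity}, generalizing the gadget argument so that the extra memory components $(p,b)$ correctly resolve the winner of mutually unfair plays via $\text{Parity}(\Gamma)$. First I would fix notation: let $G'$ be the parity game obtained by replacing every fair node $v\in\Vfair_\exists$ (resp.\ $v\in\Vfair_\forall$) annotated with memory $(p,b)$ by the gadget of \cref{fig:existentialgadgetparityparity} (resp.\ \cref{fig:universalgadgetparityparity}), and by letting normal nodes $v\in\Vn$ just carry their memory along via $E(v,p,b)$. The size and priority-count bounds ($6nd(k+1)$ nodes, $2k+d$ priorities) are a routine counting exercise: each base node spawns $\le 3$ gadget levels with $O(k)$ branches, and the memory blows the node count up by a factor $2d$, while the priorities range over $\{1,\dots,2k\}$ from the standard gadget levels plus $\{2k+1,\dots,2k+d\}$ from the reset branch. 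I would establish the correspondence between plays: every play $\rho$ in $G$ with a fixed initial memory lifts to a play $\rho'$ in $G'$ by choosing gadget branches according to the strategy under consideration, and $\rho'\restrict_{V\times[d]\times[1]}$ projects back to $\rho$ (forgetting gadget-internal nodes); conversely $G'$-plays project to $G$-plays. The crucial structural observation is that along such a projection the $p$-component always equals the maximal $\Gamma$-priority seen since the last reset, and the $b$-component records which player last took a rightmost branch.

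The heart of the argument is the case analysis on the three types of infinite behaviour, parallel to the discussion of $G_1$--$G_4$ but now with memory. Fix an infinite play $\rho'$ in $G'$ that is eventually confined to a strongly connected subgraph $F$ (the edges taken infinitely often). Three cases: (1) \emph{no rightmost branch of any $\exists$-fair node is taken infinitely often and no rightmost branch of any $\forall$-fair node is taken infinitely often}; then the play is mutually fair in $G$, only priorities $\le 2k$ occur infinitely often, and the gadget's internal $\{1,\dots,2i\}$-priorities are dominated by the $G$-priorities $\lambda(\cdot)$ seen infinitely often exactly as in \cref{thm:fairParityToParity}, so $\rho'$ is won by $\exists$ in $G'$ iff $\rho$ satisfies $\text{Parity}(\lambda)$. (2) \emph{exactly one side's rightmost branch is taken infinitely often} (say $\exists$'s, i.e.\ $\exists$ refuses to let $\forall$ explore a fair edge at some $\exists$-fair node visited infinitely often); then $b$ stabilizes to $\exists$, priority $2k+1$ is seen infinitely often while no higher priority is, so $\rho'$ is won by $\forall$ in $G'$ — matching the fair-game semantics (\cref{eq:Bexists}), since the play is $\forall$-fair and $\exists$-unfair, hence won by $\forall$. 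The symmetric sub-case gives the dual. (3) \emph{both sides' rightmost branches are taken infinitely often}; then $b$ flips infinitely often, the reset-to-$\exists$ branch is traversed infinitely often, and by the $p=\max\Gamma$-since-reset invariant the priorities $\{2k+1,\dots,2k+d\}$ seen infinitely often are exactly $\{2k+p\mid p\text{ signalled}\}$, whose maximum is $2k+(\text{max }\Gamma\text{-priority seen infinitely often in }\rho)$; thus the parity of the dominant priority in $\rho'$ equals the parity of $\max\mathsf{Inf}(\Gamma(\rho))$, i.e.\ $\rho'$ is won by $\exists$ in $G'$ iff $\rho\in\text{Parity}(\Gamma)$ — matching the semantics, since this is precisely a mutually unfair play.

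With this semantic correspondence in hand, the two implications go as in \cref{thm:fairParityToParity}. For ($\Leftarrow$, giving $\exists$): from a positional winning $\exists$-strategy $s'$ in $G'$ from $(v,1,\exists)$, build a finite-memory $\exists$-strategy $s$ in $G$ whose memory is exactly the pair $(p,b)$: at normal nodes play the $G$-move dictated by $s'$; at an $\exists$-fair node, if $s'$ takes the rightmost branch then play unfairly as $s'$ prescribes after the reset, otherwise cycle through $\{s'(\text{the }u^\exists_i\text{ branch})\}\cup E_f(v)$ to guarantee $\exists$-fairness. One then argues that every play of $s$ is $\exists$-fair (rightmost $\exists$-branches can only be taken finitely often, else case (2)/(3) would make $\rho'$ won by $\forall$, contradicting that $s'$ wins), and that the remaining three-case analysis forces $\rho$ into $\Win_\exists(G)$ according to \cref{eq:Bexists} — reusing the "peel off the top gadget priority / find a dead-end-free subgame" argument from \cref{thm:fairParityToParity} for case (1), and directly reading off the winner from the invariants for cases (2),(3). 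For ($\Rightarrow$, giving $\forall$): symmetric, using a positional winning $\forall$-strategy $t'$ in $G'$ and the universal gadget, with the role of "finitely-often-visited nodes that $\forall$ escapes" (as node $3$ in $G_2$) handled by the $n_u=\max_\even$ special case exactly as in the earlier proof. I expect the main obstacle to be case (3): carefully proving the $p$-invariant is preserved across the reset and that the set of priorities $\{2k+p\}$ signalled infinitely often realizes precisely $\max\mathsf{Inf}(\Gamma(\rho))$ — in particular that resets do not "lose" a high $\Gamma$-priority, which relies on the reset happening only on the $\forall\!\to\!\exists$ flip and $p$ being updated at every move of $\rho$ via $\max(p,\Gamma(v))$. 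The remaining bookkeeping (dead-end freeness of the constructed strategies, the subgame-extraction lemma for parity games) is identical in spirit to \cref{thm:fairParityToParity}, and I would defer the full details to the appendix.
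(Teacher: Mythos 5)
Your construction, size count, play correspondence, and three-case analysis of infinite behaviours match the paper's proof in all essentials: the same gadgets, the same invariant that $p$ records the maximal $\Gamma$-priority seen since the last reset and $b$ the player who last took a rightmost branch, and the same observation that mutually unfair plays are decided by the signalled priorities $2k+p$.

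There is, however, one step in your ($\Leftarrow$) direction that is wrong as stated: the claim that \emph{every} play compliant with the constructed strategy $s$ is $\exists$-fair, justified by ``else case (2)/(3) would make $\rho'$ won by $\forall$, contradicting that $s'$ wins.'' This is false, and it contradicts your own case (3): in a fair parity/parity game a winning $\exists$-strategy may legitimately admit mutually unfair plays, namely exactly those in which both players' rightmost branches are taken infinitely often and the induced $G'$-play has dominant priority $2k+j$ for \emph{even} $j$. Such plays are won by $\exists$ in $G'$, so no contradiction with $s'$ being winning arises, yet their projection to $G$ is not $\exists$-fair. The blanket fairness claim is correct only in the parity/$\bot$ setting of \cref{thm:fairParityToParity}, where $\beta=\bot$ forces \cref{eq:Bone} and hence $\fair_\exists$ unconditionally; importing it here defeats the purpose of the $\beta$-component. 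The repair is exactly what the paper does, and your own case analysis already contains the ingredients: split on whether the compliant play $\pi$ is $\exists$-fair. If it is, argue $\fair_\forall(\pi)\Rightarrow\pi\in\alpha$ as in \cref{thm:fairParityToParity}. If it is not, then every induced play in $G'_{s'}$ takes the $\exists$-rightmost branch infinitely often and hence sees a priority above $2k$ infinitely often; since $s'$ wins, the dominant priority is some even $2k+j$ with $j>1$, which can only be signalled on the reset branch with $b=\forall$, and the $p$-invariant then gives $\pi\in\beta$. Separately, $\neg\fair_\forall(\pi)$ follows by contradiction: if $\pi$ were $\forall$-fair, one could choose an induced play in which $\forall$ eventually never takes its rightmost branch, so $b$ stabilises to $\exists$ and the dominant priority becomes $2k+1$, contradicting that $s'$ wins. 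With that correction, $\pi$ satisfies the second disjunct of \cref{eq:Bexists}, and the rest of your plan, including the dual direction, is sound.
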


We construct the parity game $G'$ following the above description, using
the gadgets from~\cref{fig:existentialgadgetparityparity}
and~\ref{fig:universalgadgetparityparity} to treat fair nodes. 

\subsubsection{An overview of the proof of~\cref{thm:fairParityPlusParityToParity}}
 We show that for all $v\in V$, $v \in \Win_\exists(G)$ if and only if $(v, 1, \exists) \in \Win_\exists(G')$. We do so by assuming one player wins the node $(v, 1, \exists)$ in $G'$ and show that the same player wins $v$ in $G$, similar to the proof of~\cref{thm:fairParityToParity}. 
Throughout the proof we will assume $\exists$-player to be the winner of $(v, 1, \exists)$ in $G'$ and show that she wins $v$ in $G$. The proof of the dual case, where $\forall$-player is the winner of $(v, 1, \exists)$, will require identical arguments, and therefore we will leave it to the reader to confirm the second side. 

Note that even though the proof of~\cref{thm:fairParityPlusParityToParity} is substantially harder than that of~\cref{thm:fairParityToParity}, the fact that we have symmetric winning conditions for the players (parity/parity in contrast to parity/$\bot$) implies that the players have dual winning conditions and allows us to ommit one side of the proof, a luxury we could not afford in the case of~\cref{thm:fairParityToParity}.

As in the proof of~\cref{thm:fairParityToParity}, we will proceed by extracting an $\exists$-strategy $s$ in $G$, based on a positional winning $\exists$-strategy $s'$ in $G'$. Then, we show that for every $\forall$-strategy $t$ in $G$ there exists a counter $\exists$-strategy to $t$ s.t. every play compliant with both strategies and start at $v$ is won by the existential player. This will imply $v \in \Win_\exists(G)$ by determinacy, and conclude the proof. In some cases, $s$ will suffice as the counter strategy to $t$, and in others we will need to iterate $s$ to get a counter strategy $s^\circ$.



We will give the proof of~\cref{thm:fairParityPlusParityToParity} in 4 subsections. 
In~\cref{subsec:construction-of-gadget-game} we will formally construct the parity game $G'$ (a.k.a., the gadget game), and prove some of its properties. In~\cref{subsec:restriction-of-gadget-game} we will restrict the choices of $\exists$-player in $G'$ to a positional winning $\exists$-strategy $s'$, obtaining the $s'$-subgame $G'_{s'}$ and prove some properties of $G'_{s'}$. In~\cref{subsec:construction-of-s} we will construct the $\exists$-strategy $s$. Finally in~\cref{subsec:the-parity-plus-parity-to-parity-proof} we will bring all the arguments together and give the main proof.

In the last subsection~\cref{subsec:parity-parity-strategy-size} we will show that fair parity/parity games require exponential strategies. 
\subsubsection{Construction of $G'$ and its properties}\label{subsec:construction-of-gadget-game}

We define the (fairness-free) parity game $G'$, where the gadgets for fair parity/parity games replace fair nodes as follows.
Define \begin{align*}
G'=(V'_\exists,V'_\forall,E',\Omega:V'\to[2k+2+d])
\end{align*}
to be the parity game (with $V'=V'_\exists\dotcup V'_\forall$)
defined by
\begin{align*}
V'_\exists&=V_\exists\times[d]\times[2]\,\cup\\
&\quad\,\{\,(v,p,b)_i\,\mid \,v\in V_\forall,\,1\leq i\leq k+2,\, 1\leq p\leq d,\, b\in[2]\,\}\,\,\cup\\
&\quad\,\{\,(v,p,b)_i^\exists\,\mid\, v\in V_\exists \cup V_\forall ,\, 1\leq i\leq k+1,\, 1\leq p\leq d,\, b\in[2]\,\}\\
V'_\forall&=V_\forall\times[d]\times[2]\,\,\cup\\
&\quad\,\{\,(v,p,b)_i\,\mid\, v\in V_\exists,\, 1\leq i\leq k+1,\, 1\leq p\leq d,\, b\in[2]\,\}\,\,\cup\\
&\quad\,\{\,(v,p,b)_i^\forall\,\mid\, v\in V_\exists,\, 1\leq i\leq k,\, 1\leq p\leq d,\, b\in[2]\,\}\,\,\cup\\
&\quad\,\{\,(v,p,b)_i^\forall\,\mid\, v\in V_\forall,\, 2\leq i\leq k+2, \, 1\leq p\leq d, \,b\in[2]\,\}\end{align*}
and
\begin{align*}
E'((v,p,b))=&\begin{cases}
E((v,p,b)) & v\in \Vn \\
\{\,(v_i,p,b)\,\mid\, 1\leq i\leq k+1\,\}& v\in \Vfair_\exists\\
\{\,(v_i,p,b)\,\mid\, 1\leq i\leq k+2\,\}& v\in \Vfair_\forall
\end{cases} \\
E'((v,p,b)_i)=&
\begin{cases}
 \{\,(v,p,b)_i^\exists,\,(v,p,b)_i^\forall\,\} & i\leq k, \,v\in V_\exists\\
 \{\,(v,p,b)_{k+1}^\exists\,\} & i=k+1, \,v\in V_\exists\\
 \{\,(v,p,b)_i^\forall,\,(v,p,b)_i^\exists\,\} & 1<i<k+2,\, v\in V_\forall\\
 \{\,(v,p,b)_1^\exists\,\} & i=1, \,v\in V_\forall\\
 \{\,(v,p,b)_{k+2}^\forall\,\} & i=k+2,\, v\in V_\forall
\end{cases} \\
E'((v,p,b)_i^\exists)=& \begin{cases}
E((v,p,b))   & v\in V_\exists,\, i\leq k\\
E((v,p,\exists))   & v\in V_\exists, \,i=k+1,\, b=\exists\\
E((v,1,\exists))   & v\in V_\exists,\, i=k+1,\, b=\forall\\
E_f((v,p,b))   & v\in V_\forall\\
\end{cases}\\
E'((v,p,b)_i^\forall)=& \begin{cases}
E((v,p,b))         & v\in V_\forall,\, i< k+2\\
E((v,p,\forall))   & v\in V_\forall,\, i=k+2\\
E_f((v,p,b))       & v\in V_\exists\\
\end{cases}
\end{align*}
for $v\in V$, $p\in[d]$, $b\in[2]$. 
Finally, we put
\begin{align*}
\Omega((v,p,b))&=\lambda(v) & 
\Omega((v,p,b)_i^\exists)&=\begin{cases}
2i-1 & v \in V_\forall \text{ or } i\neq k+1 \text{ or }  b=\exists\\
2k+2+p & v \in V_\exists \text{ and } i=k+1 \text{ and } b=\forall
\end{cases}\\
\Omega((v,p,b)_i)&=1 &
\Omega((v,p,b)_i^\forall)&=\begin{cases}
2i &  v\in V_\exists\\
2i-2 & v\in V_\forall
\end{cases}
\end{align*}
for $v\in V$, $p\in[d]$ and $b\in[2]$.
The claimed bound on the number of priorities
in $G'$ is immediate from the definition of $\Omega$.
To derive the bound on the number of nodes, we note
that $|V\times[d]\times[2]|\leq 2nd$
and that a gadget includes at most $3k+5$ many nodes (for the universal gadget);
hence $|V'|\leq 2nd(3k+5) < 6nd(k+2)$
as claimed.


We set now some notation and prove some properties of the gadget game $G'$. 

\begin{notation}[Base node]\label{not:base-node-base-sequence} We refer to the first component of a node $u = (v, p, b)$ as the \emph{base node} of $u$ and we denote it with $u{\mid}_V =v$. We extend this notation to plays in $G'$ while ommitting repetitions in gadget nodes, and define the \emph{base sequence} of a play $\bm{\rho}$ as $w^1|_V \,w^2|_V\, \ldots$ where $w^1 w^2 \ldots$ is the subsequence of $\bm{\rho}$ that skips over the gadget nodes.  We denote the base sequence of $\bm{\rho}$ with $\bm{\rho}|_V$. Further, we call the set $\{\, v \,|\, v \in \bm{\rho}|_V \, \}$ the \emph{base set} of $\bm{\rho}$ and denote it with $\bm{\rho}^{\{\}}|_V$.
\end{notation}

\begin{notation}[Eventually stabilizing and alternating plays] Let \\ $\bm{\rho} = (v_0, p_0, b_0) (v_1, p_1, b_1)\ldots$ be a play in $G'$. We call $\bm{\rho}$ \emph{eventually stabilizing} if the memory value $b$ stabilizes in $\bm{\rho}$, i.e. there exists an $i$ such that for all $j > i$, $b_i = b_j$. Otherwise, we call it \emph{alternating}. \end{notation}

The following two lemmata state that in alternating plays in $G'$, the winner is determined by the $\beta$ priorities of the base nodes (\cref{lemma:alternating-play-priority}), and in 
eventually stabilising plays, the winner is determined by the $\alpha$ priorities of the base nodes and the priorities of gadget nodes below $2k+2$ (\cref{lemma:stabilizing-p-b}).

\begin{lemma} \label{lemma:alternating-play-priority} 
    If a play $\bm{\rho}$ is alternating, then the maximum priority visited infinitely often in $\bm{\rho}$ is $2+k+\bar{p}$ where $\bar{p}$ is the maximum $\beta$-priority of the nodes in $\Inf(\bm{\rho}|_V)$.
\end{lemma}    

\cref{lemma:alternating-play-priority} follows from the way the $p$ values are updated at the edges of $G'$.

\enskip 

The following lemma states that in eventually-stabilizing plays, the winner is determined by the $\alpha$ priorities of the base nodes and gadget nodes. 
This is because $\beta$ priorities are signalled only by the rightmost edges of existential gadgets with $b = \forall$ (in the form of $2k+2+p$ where $p$ is the $\beta$ priority of the base node) and these edges are not visited infinitely often in eventually stabilizing plays. 

\begin{lemma}\label{lemma:stabilizing-p-b}
    If a play $\bm{\rho}$ in $G'$ is eventually stabilizing, then both the $p$ and $b$ values in $\bm{\rho}$ stabilize. Furthermore, priorities over $2k+2$ are not signalled infinitely often in $\bm{\rho}$, allowing the winner to be determined by the $\alpha$ priorities of base nodes, and the priorities of gadget nodes below $2k+2$. 
\end{lemma}

\begin{proof}[Proof of~\cref{lemma:stabilizing-p-b}]
Let $\bm{\rho}$ be an eventually stabilizing play. Then there exists some $i$ s.t. for all $j > i$, $p_i = p_j$ and $b_i = b_j$. This is because the rightmost edge of any existential gadget (see~\cref{fig:existentialgadgetparityparity}) with $b= \forall$ will eventually not be taken, and therefore the $p$ value will eventually not be reset. As the $p$ values on a play are non-decreasing except for the resets, the $p$ value will eventually stabilize.

Furthermore, as the righmost edge of any existential gadget with $b = \forall$ is eventually not taken, the priorities $2k+2+p$ are signaled finitely often in the play. That is, the $\beta$ values of the base nodes eventually do not contribute to determining the winner of the play. Therefore, the winner is determined by the $\alpha$ priorities of the base nodes and the priorities of gadget nodes below $2k+2$. \end{proof}

The following lemma states that the winner of a node $(w, p, b)$ in the gadget game is decided by its base node $w$. 

\begin{lemma}\label{lemma:winningunformity} Let $w \in V$. Then the game nodes $(w, p, b)$ for all $p$ and $b$ are won by the same player in $G'$.
\end{lemma}
\begin{proof}[Proof of~\cref{lemma:winningunformity}]
	We consider the case that existential player wins $(w, p, b)$; the case for the other player is dual.
	Let $s^\circ$ be a positional winning $\exists$-strategy in $G'$ that wins $(w, p, b)$.
	We construct a strategy $s'$ that uses
	memory values from $[d]$ and $[2]$ to mimic the behavior of $s^\circ$ from any starting point. Given a game node $\bm{u}\in V'_\exists$
	and memory values $p',b'$, we define $s'$ depending
	on the shape of $\bm{u}$.
	\begin{itemize}
		\item For game nodes $\bm{u}$ that are of the shape $(z,p'',b'')$ such that $z\in V_\exists\setminus \Vfair$ (that is, $(z,p'',b'')$ is not a gadget node), define $s'$ to simply take the move that $s^\circ$ prescribes for the game node $(z,p',b')$. Formally, we put $s'(z,p'',b'')=(z',p''',b'')$, where $s^\circ(z,p',b')=(z',q,c)$ and $p'''$ is the maximum of $p''$ and the $\beta$-priority of $z$. 

		\item For game nodes $\bm{u}$ that are an exit node of a gadget
		(that is, $\bm{u}=u^\exists_i$ for some $u=(z,p'',b'')$),
		define $s'$ to take the move that $s^\circ$ prescribes for the game node $(u')^\exists_i$ where $u'=(z,p',b')$. That is, $s'$ copies the exit behavior of $s^\circ$ on the gadget for $(z,p',b')$.
		Formally, put $s'(u^\exists_i)=(z',p''',b'')$, where $s^\circ((u')^\exists_i)=(z',q,c)$ and $p'''$ is the maximum of $p''$ and the $\beta$-priority of $z$. 
		
		\item For game nodes $\bm{u}$ that are internal nodes of a universal gadget
		(that is, $\bm{u}=u_i$ for some $u=(z,p'',b'')$ such that $z\in \Vfair_\forall$),
		define $s'$ to take the choice that $s^\circ$ prescribes for the game node $u'_i$ where $u'=(z,p',b')$. That is, $s'$ copies the internal behavior of $s^\circ$ on the gadget for $(z,p',b')$.
		Formally, put $s'(u_i)=u_i^\exists$
		if $s^\circ(u'_i)={u'}_i^\exists$, and $s'(u_i)=u_i^\forall$ otherwise.
		Do not update the memory values.

		\item For existential gadget nodes $\bm{u}$ that are the entry node of a gadget for
		an existentially fair base node (that is, $\bm{u}$ is of the shape $\bm{u}=(z,p'',b'')$ such that $z\in \Vfair_\exists$), define $s'$ to take the branch that $s^\circ$ prescribes for the game node $u'=(z,p',b')$. That is, $s'$ copies the entry behavior of $s^\circ$ on the gadget for $u'$. Formally, put $s'(u)=u_i$ where $i$ is such that $s^\circ(u')=u'_i$. Do not update the memory values.

	\end{itemize}
	The memory update in $s'$ on existential moves has been described above. On universal moves, the strategy $s'$ simply updates the memory value according to the respective move according to $s^\circ$; e.g. for a universal move from
	game node $(z,p'',b'')$ with memory content $p'$ and $b'$ to game node $(z',p''',b''')$, there is a corresponding move from $(z,p',b')$ to
	$(z',q,c)$ for some $q$ and $c$. Use $q$ and $c$ as new memory values.
	
	Then for every play $\pi$ starting at a node $(w,p',b')$ that is compatible with strategy $s'$ using starting memory values $p$ and $b$, there is 
	an induced play $\pi_s$ starting at $(w,p,b)$ such that the base sequences of $\pi$ and $\pi_s$ are exactly the same, and both
	$\pi$ and $\pi_s$ take exactly the same branches through gadgets.
	In particular, $\pi$ is eventually stabilizing if and only if
	$\pi_s$ is eventually stabilizing.
	The sequence of the memory values that $s'$ uses to construct $\pi$ is just sequence of the memory components in the nodes of $\pi_s$.
	
	It follows that $\pi$ and $\pi_s$ are won by the same player: If
	both $\pi$ and $\pi_s$ are eventually stabilizing, then the winner of
	both plays is determined by their sequences of $\alpha$-priorities and the priorities of at most $2k+2$ coming from the gadget nodes (\cref{lemma:stabilizing-p-b}), which
	are identical as both plays agree on base nodes and gadget branches. If both
	$\pi$ and $\pi_s$ are alternating, then the winner of
	both plays is determined by their sequences of $\beta$-priorities (\cref{lemma:alternating-play-priority}), which
	again are identical as both plays agree on base nodes.
\end{proof}


\subsubsection{The $s'$-subgame $G'_{s'}$ and its properties}\label{subsec:restriction-of-gadget-game}
\smallskip
Let $s'$ be a positional $\exists$-strategy in $G'$.
    We will denote the subgame of $G'$ where $\exists$ nodes have only the outgoing edges $u \to s'(u)$ by $G'_{s'}$, and call it \emph{the $s'$-subgame}.
    Recall that all plays that start from $v$ in $G'_{s'}$ are $\exists$-winning.
  
 Next, we will set up some notation and extract observations about $G'_{s'}$.

\enskip 
\begin{notation}[$n_u$]\label{not:n} For an existential gadget node $u = (v, p, b)$ where $v \in \Vfair_\exists$ we will denote the index $i$ of its unique successor $u_i$ in $G'_{s'}$, by $n_u$. For a universal gadget node $u = (v, p, b)$ where $v \in \Vfair_\forall$ we will denote the largest index $i$ s.t. $s'(u_i) = u^\exists_i$ by $n_u$.\footnote{i.e., on every branch to the right of $u \to u_{n_u} \to u^\exists_{n_u}$ of the universal gadget, $s'$ chooses the universal successor, that is, for every $i > n_u$, $s'(u_i) = u^\forall_i$.} \end{notation}

\enskip

Next, in~\cref{lemma:nu-regularity-for-exists-player},\cref{cor:lemma-regularity} and~\cref{lemma:nu-regularity-for-forall-player}, we show that we can restrict attention to winning strategies $s'$ that have some restrictions on their structure in term of choices on the branches of universal gadgets. 


\begin{lemma}\label{lemma:nu-regularity-for-exists-player} There exists an existential winning strategy $s'$ in $G'$ such that for every universal gadget node $u = (v, p, b)$ won by $\exists$-player, $s'(u_i) = {u_i^\exists}$ for all $i < n_u$. 
\end{lemma}

\begin{proof}[Proof of~\cref{lemma:nu-regularity-for-exists-player}]
Let $u = (v, p, b)$ be as given. We claim that all $u^\exists_i$ with $i < n_u$ are also winning for her. The existence of the desired $s'$ follows from the claim. 

Since $u$ is winning for $\exists$, we know all $u_i$ are as well. To get the winning strategy from $u^\exists_i$ for some $i < n_u$, it is enough to set the strategy to take the edge $u^\exists_i \to w$ where $w = s'(u^\exists_{n_u})$. 
The equality follows since $u^\exists_{i}$ and $u^\exists_{n_u}$ have the same set of successors and the former has a smaller odd priority w.r.t. the later. 
\end{proof}

\enskip

\begin{corollary}[\cref{lemma:nu-regularity-for-exists-player}]\label{cor:lemma-regularity} It follows from \cref{lemma:nu-regularity-for-exists-player} that for every universal gadget node $u = (v, p, b)$ we can WLOG assume that in $G'_{s'}$, 
all $u_i$ with $i \leq n_u$ have only their right (existential) successors and all $u_i$ with $i > n_u$ have only left (universal) successors. It follows from the proof of the lemma that we can further assume that for all $i, j \leq n_u$, $s'(u^\exists_i) = s'(u^\exists_j)$. 
\end{corollary}

\enskip

The next lemma, \cref{lemma:nu-regularity-for-forall-player}, states that there is a strategy $s'$ with restricted choices in universal gadgets with memory value $b = \forall$. It states that a winning $\exists$-strategy can be made to \textit{not} take the existential branch on the ${k+1}^\text{th}$ branch of such nodes, resulting in $n_u \neq k+1$ (i.e. $n_u \leq k$) for every universal gadget node $u = (v, p, \forall)$ in $\Win_\exists(G')$. This guarantees the existence of an existential branch for each such $u$ in $G'_{s'}$, a fact we will exploit in the proof.

\begin{lemma}\label{lemma:nu-regularity-for-forall-player} There exists a winning $\exists$-strategy in $G'$ which satisfies $n_u \leq k$ for every universal gadget node $u = (v, p, \forall)$ in $\Win_\exists(G')$.
\end{lemma}

\begin{proof}[Proof of~\cref{lemma:nu-regularity-for-forall-player}] Let $s'$ be a positional winning $\exists$-strategy and $u = (v, p, \forall) \in \Win_\exists(G')$ be some universal gadget node in $G'$. 
Assume that $n_u=k+1$, that is, $s'(u_{k+1})=u^\exists_{k+1}$.
We will show that in this case, the existential player wins from $u^\forall_{k+1}$ as well. This implies that $s'$ can be made to take the edge to $u^\forall_{k+1}$ instead of $u^\exists_{k+1}$ from $u_{k+1}$, giving an equivalent winning strategy with $n_{u} \leq k$. We will use proof by contradiction.

Observe that the existential player wins the game from $u_{k+2}$, and therefore from any successor of $u^\forall_{k+2}$ upon seeing the even priority $2k+2$, i.e. the priority of $u^\forall_{k+2}$. 
Furthermore, the successors of $u^\forall_{k+1}$ and $u^\forall_{k+2}$ are identical, due to $u$ having $b= \forall$. The only difference between these two branches are the 
even priorities, $2k$ and $2k+2$, seen upon visiting the gadget nodes $u^\forall_{k+1}$ and $u^\forall_{k+2}$, respectively.
Assume a play $\bm{\rho}$ in $G'$ that visits $u^\forall_{k+2}$ is $\exists$-winning while an identical play $\bm{\pi}$ that replaces all occurrences of $u^\forall_{k+2}$ in $\bm{\rho}$ with $u^\forall_{k+1}$, is $\forall$-winning.\footnote{Note here that $\bm{\pi}$ is indeed a play in $G'$, since in any play, the branch $u \to u_{k+2} \to u^\forall_{k+2} \to w$ can be replaced with $u \to u_{k+1} \to u^\forall_{k+1} \to w$ for any $w \in E(u)$.} We will show that this is not possible, which will imply the existence of an equivalent strategy to $s'$ with $n_u \leq k$, as required. 

The fact that plays $\bm{\rho}$ and $\bm{\pi}$ have different winners implies that they cycle over $u$, since they are otherwise identical. This in turn implies that $\bm{\rho}$ cycles over $u^\forall_{k+2} $ and  $\bm{\pi}$ cycles over  $u^\forall_{k+1}$. Then, $\max\{\Omega(w) \mid w \in \Inf(\bm{\rho})\} \geq 2k+2$ and $\max\{\Omega(w) \mid w \in \Inf(\bm{\pi})\} \geq 2k$, due to the priorities seen on nodes $u^\forall_{k+2} $ and $u^\forall_{k+1}$ being $2k+2$ and $2k$, respectively.  

Furthermore, since these two plays are identical except for the gadget nodes $u^\forall_{k+1}$ and $u^\forall_{k+2}$, either both of them are alternating, or both of them are eventually stabilizing. If both of them were alternating,  due to~\cref{lemma:alternating-play-priority}, the maximum priority visited infinitely often in both plays would be the same value, larger than $2k+2$, corresponding to the maximum $\beta$ value seen infinitely often in the base sequence $\bm{\rho}|_V = \bm{\pi}|_V$. 

So, we know that both plays are eventually stabilizing. In this case, by~\cref{lemma:stabilizing-p-b} we get that that maximum priority seen infinitely often in both plays is at most $2k+2$. This entails
 $\max\{\Omega(w) \mid \Inf(\bm{\rho})\} = 2k+2$ and $\max\{\Omega(w) \mid \Inf(\bm{\pi})\} = 2k+1$. The later is due to $\max\{\Omega(w) \mid w \in \Inf(\bm{\pi})\} \geq 2k$ and $\bm{\pi}$ being $\forall$-winning.

As $\max\{\Omega(w) \mid \Inf(\bm{\pi})\} = 2k+1$, the play $\bm{\pi}$ cycle over a node with priority $2k+1$. As they agree on all nodes except $u^\forall_{k+2} $ and $u^\forall_{k+1}$, $\bm{\rho}$ and $\bm{\pi}$ both cycle over this node. The priority $2k+1$ is signalled by two types of nodes in $G'$: (i) the rightmost branch of the existential gadget of a gadget node $w$ with $b:= \exists$, and (ii) on the successor $w^\exists_{k+1}$ of a universal gadget node $w$. 
Note that all plays that pass through $u$ and a node of type (i) infinitely often are alternating, since in $u$, $b := \forall$, and passing a node of type (i) sets $b$ to $\exists$. 
Since we know $\bm{\rho}$ and $\bm{\pi}$ are eventually stabilizing, they must visit a node of type (ii) infinitely often. Note that a node $w^\exists_{k+1}$ of type (ii) occurs in $\bm{\rho}$ and $\bm{\pi}$ iff $w$ is a universal gadget node
with $n_w = k+1$, with $b = \forall$ (since the plays are eventually stabilizing). Next we finalize the proof by showing the impossibility of this scenario. 

We show that $G'_{s'}$ does not contain any cycles $\mathcal{C}$, such that $\mathcal{C}$ contains
a node of type (ii) with $b = \forall$, and no nodes of type (i)\textemdash so, the $b$ value is $\forall$ for all nodes in $\mathcal{C}$. The absence of such cyles in $G'_{s'}$ implies that any play that passes through both $u$ (which has $b = \forall$) and a node of type (ii) infinitely often, must also pass through a node of type (i) infinitely often. In the previous paragraph, we showed that $\bm{\rho}$ and $\bm{\pi}$ do not pass through a node of type (i) infinitely often. Therefore we get that $\bm{\rho}$ and $\bm{\pi}$ do not pass through a node of type (ii) infinitely often either, obtaining a contradiction and finalizing the proof of~\cref{lemma:nu-regularity-for-forall-player}.

Let us show the absence of such cycles: Assume to the contrary that such a cycle exists in $G'_{s'}$.
Let $u_1 = (v_1, p_1, \forall), $ $,\ldots, u_c=(v_c, p_c, \forall)$ be the universal gadget nodes in $G'_{s'}$ where $n_{u_i} = k+1$ 
for each $i \in \{1, \ldots, c\}$, 
and each $(u_i)^\exists_{k+1}$ lies on a cycle $\mathcal{C}_i$ in $G'_{s'}$ that 
does not contain a node of type (i)\textemdash so, the $b$ value is $\forall$ for all nodes in $\mathcal{C}_i$. 

As $\mathcal{C}_i$ is $\exists$-winning (since it is a cycle in $G'_{s'}$) and $(u_i)^\exists_{k+1}$ has priority $2k+1$, we know that $\mathcal{C}_i$ contains a node with priority $2k+2$, which comes from the rightmost edge of a universal gadget node $w = (w', p', \forall)$. Furthermore, $n_w = k+1$, because if it was not, then $G'_{s'}$ would contain a cycle $\mathcal{C'}_i$ that is identical to $\mathcal{C}_i$ except that it passes through $w^\forall_{k+1}$ instead of $w^\forall_{k+2}$, and thus has the maximum priority $2k+1$. 
Then, $(u_i)^\exists_{k+1}$ in fact lies on a cycle with the rightmost edge (i.e. $(u_j)^\forall_{k+2}$) of some $u_j$ for $j \in \{1, \ldots, c\}$. 

Imagine a directed graph on $c$ nodes, labeled from $1$ to $c$, which contains an edge from node $i$ to node $j$ iff there exists a cycle in $G'_{s'}$ that does not contain any nodes of type (i), and contains $(u_i)^\exists_{k+1}$ and $(u_j)^\forall_{k+2}$. Since each $(u_i)^\exists_{k+1}$ lies on a cycle $\mathcal{C}_i$ (that is devoid of type (i) nodes) with the rightmost branch $(u_j)^\exists_{k+2}$ of some node $u_j$, each node in this graph has an outgoing edge. Then, there is a cycle in the graph, say $i^1 \to i^2 \to \cdots \to i^l \to i^1$. This implies that there is a (non-alternating) cycle in $G'_{s'}$ that passes through $(u_{i^m})^\exists_{k+1}$ for all $m \in \{1, \ldots, l\}$, formed by the universal player strategy that takes the edge $u_{i^m} \to (u_{i^m})_{k+1}$ for all $u_{i^m}$ with $m \in \{1, \ldots, l\}$. This forms an odd cycle in $G'_{s'}$, yielding a contradiction. 

\end{proof}

\noindent \textbf{Structure of $s'$:} Without loss of generality, we can restrict attention to a positional winning $\exists$-strategy $s'$ in $G'$ that obeys the restrictions stated in~\cref{cor:lemma-regularity} and~\cref{lemma:nu-regularity-for-forall-player}. 

Next, we construct an $\exists$-strategy $s$ in $G$, based on $s'$.
\subsubsection{Construction of the $\exists$-strategy $s$ in $G$}\label{subsec:construction-of-s}

In this section, we construct an $\exists$-strategy $s$ in $G$.
 In the rest of the proof we will fix an arbitrary $\forall$-strategy $t$ and use $s$ to show the existence of a counter $\exists$-strategy, s.t. the play starting at $v$ and compliant with both strategies is $\exists$-winning. Showing this will imply $v \in \Win_\exists(G)$ by the determinacy of fair parity/parity games. To show the existence of a counter $\exists$-strategy, we will go to a case distinction on the play $\rho = \play_v(s, t)$: We will show that either $\rho$ is $\exists$-winning\textemdash so $s$ is a counter strategy to $t$, or we can iterate $s$ to obtain a counter strategy to $t$.


 \smallskip
  \textbf{Idea behind the construction of $s$}. Intuitively, we want to construct $s$ so that on every play, an $s$-compliant play in $G$ acts like the projection of an $s'$-compliant play in $G'$, in the sense that the base sequence of the $s'$-compliant play is exactly the $s$-compliant play. For this, we construct $s$ based on the choices $s'$ makes, similar to the proof of~\cref{thm:fairParityToParity}. However, this time the nodes in $G'$ carry additional memory values $p$ and $b$, whereas the nodes in $G$ do not have such values. Our strategy $s$ should also take into account these memory values while deciding the successor of an $\exists$-node in $G$, since $s'$ can prescribe different successors to $(u, p, b)$ and $(u, p', b')$. For this, we will equip $s$ with a memory that mimics the $p$ and $b$ changes in $G'_{s'}$. However, there is one catch here: There can be different $s'$-plays in $G'$, say $\bm{\rho}^1$ and $\bm{\rho}^2$, that have an identical projection to $G$, i.e. $\bm{\rho}^1|_{V} = \bm{\rho}^2|_V$. That is, $\bm{\rho}^1$ and $\bm{\rho}^2$ pass through the same base nodes, but with different $p$ and $b$ values. This can happen because even though $G'_{s'}$ fixes the $\exists$-player choices, $\forall$-player can take different gadget branches to reach the same base nodes, with different $p$ and $b$ values. To overcome this issue, as in the proof of~\cref{thm:fairParityToParity}, we will restrict our attention to a subgame of $G'_{s'}$ with limited choices for the $\forall$-player, and obtain $LG'_{s'}$.
  
  After obtaining $LG'_{s'}$, we will define a natural (unique) \enquote{expansion} of an $s$-compliant play $\rho$ to $LG'_{s'}$. This expansion will of course be an $s'$-compliant play (thus, $\exists$-winning), and will provide us with a canonical way to update the $p$ and $b$ values in $G$. 

  
  {\bf Constraining $G'_{s'}$ to $LG'_{s'}$}. We will constrain $G'_{s'}$ to a subgame by limiting the choices of the $\forall$-player on universal gadgets. For every universal gadget encountered in $G'_{s'}$, we limit the choices of a universal gadget node $u = (w, p, b) \in V'_{\forall}$ (where $w \in \Vfair_{\forall}$) only to the branches $u \to u_{n_{u}}$ and $u \to u_{n_u+1}$\textemdash both of which always exist. So, we remove all the other branches of $u$ out of $G'_{s'}$. We call the remaining subgame $LG'_{s'}$, standing for \emph{limited $G'_{s'}$}. Note that as $LG'_{s'}$ is a subgame of $G'_{s'}$, it is still $\exists$-winning.

  \textbf{About the expansion of $\rho$}. Next, we want to use $LG'_{s'}$ to define a unique \enquote{expansion} of a play $\rho$ in $G$ to $G'_{s'}$. 
  Note that for a universal gadget node $u = (w,p, b)$, the only successor of $u_{n_u}$ in $G'_{s'}$ is $u^\exists_{n_u}$, and the only successor of $u_{n_u + 1}$ in $G'_{s'}$ is $u^\forall_{n_u+1}$. Therefore, in $LG'_{s'}$ , the only branches rooted at $u$ are
  \begin{align}
   & u \to u_{n_u} \to u^\exists_{n_u} \to s'(u^\exists_{n_u}) \tag{branch 1 - $\forall$}
   \label{branch1-pp-uni}\\
   & u \to u_{n_u+1} \to u^\forall_{n_u+1} \to E(u) \tag{branch 2 - $\forall$}\label{branch2-pp-uni}
  \end{align}
 As $E(u)|_V$ contains all successors of $w = u|_V$, in every play that contains $w \in \Vfair_{\forall}$, and some  successor $w'$ of $w$, we can take~(\ref{branch2-pp-uni}) to $w'$, i.e. the path $u \to u_{n_u+1} \to u^\forall_{n_u+1} \to w'$ in the expansion. Moreover, if $w' =s'(u^\exists_{n_u}) $, we can take~(\ref{branch1-pp-uni}) to $w'$ as well. In the expansion of $\rho$, we want to take~(\ref{branch1-pp-uni}) if the successor $w'$ of a $w \in \rho$ is $ s'(u^\exists_{n_u})$, and take~(\ref{branch2-pp-uni}) otherwise. An edge $w \to w'$ in $\rho$ where $w \in \Vfair_\forall$ will appear in the expansion of $\rho$ as~(\ref{branch1-pp-uni}) or~(\ref{branch2-pp-uni}), according to the above-defined rule. 

   What about an edge $w \to w'$ in $\rho$ where $w \in \Vfair_\exists$? Recall that for $u = (w, p, b) \in V'_\exists$ (where $w \in \Vfair_\exists$), the only successor of $u$ in $G'_{s'}$ (and therefore in $LG'_{s'}$) is $u_{n_u}$. Thus, if $n_u \leq k$, there are two paths rooted at $u$ in $LG'_{s'}$: 
   \begin{align}
   & u \to u_{n_u} \to u^\exists_{n_u} \to s'( u^\exists_{n_u})\tag{branch 1 - $\exists$}\label{branch1-pp-exi}\\
   & u \to u_{n_u} \to u^\forall_{n_u} \to E_f(u)\tag{branch 2 - $\exists$}\label{branch2-pp-exi}
   \end{align}
    If $n_u = k+1$, then only~(\ref{branch2-pp-exi}) exists in $LG'_{s'}$. By the construction of $s$, we will make sure that $w'$ is always in $E_f(w) \cup s'( u^\exists_{n_u})$, and in case $n_u = k+1$ that $w' = s'( u^\exists_{n_u})$. 
    Then, as before, if $w' =  s'( u^\exists_{n_u})$, the expansion includes~(\ref{branch1-pp-exi}); otherwise, it will take~(\ref{branch2-pp-exi}). 

  For an edge $w \to w'$ in $\rho$ where $w \not \in \Vfair$, since $u = (w, p, b)$ is not a gadget node, the expansion will take the only successor of $u$ in $LG'_{s'}$ with base node $w'$.
  
  This way, for every $s$-compliant play $\rho$ in $G$, we get a canonical \enquote{expansion} to $LG'_{s'}$. Below, we will define the expansion formally, after constructing $s$. For the sake of introducing $s$, we would like the reader to observe that the expansion given above, as it is unique for every $s$-compliant play, provides us with a consistent way of updating the memory values $p$ and $b$ in $G$. 

 \textbf{Construction of $s$}.
Let $w \in \Vfair_\exists$ and $u = (w, p, b)$ be the existential gadget node in $G'$ with memory values $p$ and $b$. If 
 $s'$ prescribes $u$ its rightmost successor, i.e. $s'(u) = u_{k+1}$, then $s$ prescribes $w$ its successor $s'(u^\exists_{k+1})|_V$ whenever $w$ is reached in $G$ with memory values $p$ and $b$.\footnote{The memory updates are detailed below.}
  We will call such node-memory instances, where $s'$ prescribes the rightmost successor, \enquote{unfair $\exists$-instances}.

Otherwise, i.e. if $s'$ prescribes the existential gadget node $u$ any other successor, then whenever a play reaches $w$ with the memory values $p$ and $b$, $s$ prescribes $w$ all its successors in $\{s'(u^\exists_i)|_V \}\cup E_f(w)$ in a cyclic fashion,\footnote{The order in which the successors are prescribed by $s'$ does not matter. It only matters that \textit{whenever $w$ is seen infinitely often with memory values $p$ and $b$ in a play $\rho$ in $G$, all its successors in $\{s'(u^\exists_i)|_V \}\cup E_f(w)$ are also seen infinitely often in $\rho$}.} using some local memory.

Now let $w \in V_\exists \setminus \Vfair_\exists$ and $(w,p,b)$ be the node in $G'$ with memory values $p$ and $b$. Then $(w,p,b)$ is not a gadget node. In this case, $s$ simply prescribes $s(w) = s'(w, p, b)|_V$ whenever  $w$ is reached in $G$ with memory values $p$ and $b$. 

\bigskip

Now we explain how the memory values $p$ and $b$ are updated during an $s$-compliant play $\rho$ in $G$. 

\smallskip
 \textbf{The memory updates in $G$}. In the beginning of every play $\rho$ compliant with $s$ in $G$, the memory values $p$ and $b$ are initialized with $1$ and $\exists$, respectively. Throughout the play, $p$ is updated in each step to be the maximum of the $p$ value of the previous node and the $\beta$ priority of the current node, except when the current instance is an unfair $\exists$-instance and $b = \forall$. In this case, the $p$ value  
is reset to $1$. The memory value $b$ is set to $\exists$ during unfair $\exists$-instances and 
is set to $\forall$ whenever the previous instance is a fair universal node $w$ with memory values are $p, b$ such that $n_u = k+1$ for $u = (w, p, b)$, and the current base node is different from $s'(u^\exists_{n_u})|_V$.
Otherwise, $b$ is left unchanged. It can be observed that the memory updates in an $s$-compliant play $\rho$ that start from $w$ mimic the change of $p$ and $b$ values in the expansion of $\rho$ that start from $(w, 1, \exists)$.


\enskip 

\textbf{The expansion of an $s$-compliant play in $G$ to $LG'_{s'}$}. A play $\bm{\rho}$ in $LG'_{s'}$ is said to be the \emph{expansion} of a play $\rho=v_1 v_2 \ldots $ compliant with $s$ in $G$ to $LG'_{s'}$ if, 
$\bm{\rho}$ starts from $(v_1, 1, \exists)$ and for each edge $(v_i, v_{i+1})$ in $\rho$, 
\begin{itemize} 
    \item If $v_i \not \in \Vfair$ with memory values $p, b$, we extend $\bm{\rho}$ with $(v_i, p, b) \to (v_{i+1}, p', b)$ appropriately,
    \item If $v_i \in \Vfair_\exists$ with memory values $p, b$, for $u = (v_i, p, b)$, in case $v_{i+1} = s'(u_{n_u}^\exists)|_V$, we extend $\bm{\rho}$ with $u \to u_{n_u} \to u^\exists_{n_u} \to (v_{i+1}, p', b')$ appropriately. 
    Otherwise, i.e. in case $v_{i+1} \neq s'(u_{n_u}^\exists)|_V$, by construction of $s$, $v_{i+1} \in E_f(v_i)$, In this case, we extend $\bm{\rho}$ with $u \to u_{n_u} \to u^\forall_{n_u} \to (v_{i+1}, p', b')$ appropriately. 
    \item If $v_i \in \Vfair_\forall$ with memory values $p, b$, for $u = (v_i, p, b)$, in case $v_{i+1} = s'(u_{n_u}^\exists)|_V$, we again extend $\bm{\rho}$ with $u \to u_{n_u} \to u^\exists_{n_u} \to (v_{i+1}, p', b')$ appropriately. 
    Otherwise, i.e. in case $v_{i+1} \neq s'(u_{n_u}^\exists)|_V$, then we extend $\bm{\rho}$ with $u \to u_{n_u+1} \to u^\forall_{n_u+1} \to (v_{i+1}, p', b')$ appropriately. Note that $u^\forall_{n_u+1}$ always exists since for universal gadgets $n_u$ is at most $k+1$.
\end{itemize}
Note that $\bm{\rho}$ restricted to nodes in $V$, is exactly $\rho$, i.e.  $\bm{\rho}|_V = \rho$. Furthermore, $\bm{\rho}$ and $\rho$ have consistent memory values, i.e. if in $\rho$ a node $w$ is seen with memory values $p$ and $b$, then 
the corresponding node of $w$ in $\bm{\rho}|_V$ is $(w, p, b)$. We will call $\rho$ the \emph{projection} of $\bm{\rho}$ to $G$. 

We would like to note that the choice of memory values $p :=1, b:\exists$ of the initial node of $\bm{\rho}$ is arbitrary, and can be chosen as any other value pair as needed.

Finally, observe that $\bm{\rho}$ is contained in $LG'_{s'}$, regardless of the $p$ and $b$ values of the initial node of the expansion.

\begin{notation}\label{not:bold-version} For an $s$-compliant play in $G$ shown by a latin letter, we will denote its expansion to $G'$ via the bold version of the same letter, such as $\rho$ and $\bm{\rho}$, or $\iota$ and $\bm{\iota}$. 
\end{notation}

\subsubsection{The main proof of~\cref{thm:fairParityPlusParityToParity}}\label{subsec:the-parity-plus-parity-to-parity-proof}

In this section we will show that for every $\forall$-strategy $t$ in $G$, there exists a counter $\exists$-strategy such that every play that starts at $v$ and is compliant with both strategies is $\exists$-winning. Since fair parity/parity games are determined, this implies that $v$ is $\exists$-winning in $G$, i.e. $v \in \Win_\exists(G)$, concluding the proof of~\cref{thm:fairParityPlusParityToParity}

For this, we fix an arbitrary $\forall$-strategy $t$ in $G$. Then we take a play $\rho$ that starts at $v$, and is compliant with $s$ and $t$, and go to a case distinction on the behaviour of $\rho / \bm{\rho}$. 

In the first two cases,~\ref{item:case1} and \ref{item:case2}, we will show that $\rho$ is $\exists$-winning\textemdash so, $s$ is the counter $\exists$-strategy. In~\ref{item:case3}, we will construct an $\exists$-winning play $\nu$ that starts from $v$ in $G$ and is compliant with $t$. We will not explicity construct the $\exists$-strategy $s^\circ$ used to construct $\nu$, but the proof will use the construction of $s$ iteratively to obtain $s^\circ$, by manipulating the memory values of $s$ to jump between different nodes in $LG'_{s'}$. To construct $\nu$ in~\ref{item:case3}, we will use~\ref{item:case1} and \ref{item:case2} as subroutines. 

\enskip
\begin{enumerate}[label=\textbf{Case }\arabic*., leftmargin=*, itemsep=0.5em]
    \item $\rho$ is $\forall$-unfair. \label{item:case1}
    \item $\rho$ is $\forall$-fair and $\bm{\rho}$ is eventually stabilizing.\label{item:case2}
    \item $\rho$ is $\forall$-fair and $\bm{\rho}$ is alternating. \label{item:case3}
\end{enumerate}

\enskip

\noindent \ref{item:case1} ($\rho$ is $\forall$-unfair). If $\rho$ is $\exists$-fair, then it is automatically $\exists$-winning. Assume $\rho$ is $\exists$-unfair. 
By construction of $s$, if $\rho$ is $\exists$-unfair then there exists an existential gadget node $u = (w, p, b)$ with $w \in \Vfair_\exists$ such that the rightmost branch of $u$ is taken infinitely often in $\bm{\rho}$ (since otherwise $s$ cycles through $E_f(w)$). 
First assume that in $\bm{\rho}$, the rightmost branch of a universal gadget is visited infinitely often as well. Then, $\bm{\rho}$ is alternating. By~\cref{lemma:alternating-play-priority}, the winner of $\bm{\rho}$ is determined by the maximum $\beta$ priority in $\Inf(\bm{\rho}|_V) = \Inf(\rho)$. As $\rho$ is mutually unfair, the winner of $\rho$ is determined by the maximum $\beta$ priority in $\Inf(\rho)$ as well. Since $\bm{\rho}$ is $\exists$-winning, $\rho$ is also $\exists$-winning.

Now assume in $\bm{\rho}$ the rightmost branch of no universal gadget is visited infinitely often. Then, $\bm{\rho}$ is eventually stabilizing. By~\cref{lemma:stabilizing-p-b}, the value of $b$ stabilizes in $\bm{\rho}$, and the maximum priority seen infinitely often is at most $2k+2$.
As the rightmost branches of existential nodes are visited infinitely often, $b$ stabilizes at $\exists$ and the priority $2k+1$ is signalled infinitely often. On the other hand, as the righmost branches of universal gadgets are not visited infinitely often, the priority $2k+2$ is not signalled infinitely often. This implies the maximum priority seen infinitely often in $\bm{\rho}$ is $2k+1$, yielding a contradiction as $\bm{\rho}$ is $\exists$-winning.  

\enskip

\noindent  \ref{item:case2} ($\rho$ is $\forall$-fair and $\bm{\rho}$ is eventually stabilizing). 
First assume\textemdash towards a contradiction\textemdash that
there exists an existential gadget node $u$ whose rightmost branch is taken infinitely often in $\bm{\rho}$. 
As $\bm{\rho}$ is eventually stabilizing and rightmost branches of existential gadgets are visited infinitely often, the value of $b$ stabilizes at $\exists$. From here on, identical arguments to the second paragraph of~\cref{item:case1} follows, yielding a contradiction.


Therefore, the righmost edges of existential gadgets are not visited infinitely often in $\bm{\rho}$. By the construction of $s$, this implies that $\rho$ is $\exists$-fair. 


We focus our attention on $\bm{\tilde{\rho}}$, i.e. the infinitely repeating tail of $\bm{\rho}$. Let $G'_{\bm{\tilde{\rho}}}$ be the subgame of $LG'_{s'}$ restricted to the nodes and edges taken in $\bm{\tilde{\rho}}$. Let $\bar{p}$ and $\bar{b}$ be the unique $p$ and $b$ values 
in $\bm{\tilde{\rho}}$, whose existence is guaranteed by~\cref{lemma:stabilizing-p-b}, as $p$ and $b$ eventually become constants in $\bm{\rho}$. Then for each node $w \in \bm{\tilde{\rho}}|_V$ a unique node $(w, \bar{p}, \bar{b})$ exists in $G'_{\bm{\tilde{\rho}}}$. 
The arguments we use from here until the end of the proof~\ref{item:case2} closely follow the arguments from the proof of~\cref{thm:fairParityToParity}, part (ii) of side $\Rightarrow$.

Recall that $\bm{\tilde{\rho}}$ is $\exists$-winning. Let $m$ be the maximum (even) priority seen in $\bm{\tilde{\rho}}$\textemdash and therefore in $G'_{\bm{\tilde{\rho}}}$. We will show the existence of a base node in $\bm{\tilde{\rho}}$ with the $\alpha$ priority $m$. This will in turn imply that the maximum $\alpha$ priority in $\tilde{\rho}$ is also $m$. Recall that as $\rho$ is mutually fair, the $\alpha$ priorities determine the winner. Therefore, this will show that $\rho$ is $\exists$-winning, concluding the proof.


First observe that every subgraph $S$ of $G'_{\bm{\tilde{\rho}}}$ that is an end-component (i.e., when $\bm{\tilde{\rho}}$ reaches a node in $S$, it stays in $S$) is equal to $G'_{\bm{\tilde{\rho}}}$, since $G'_{\bm{\tilde{\rho}}}$ is already the infinite tail of a play.
To prove the existence of a base node with priority $m$ in $\bm{\tilde{\rho}}$, we will assume that no such base node exists in $G'_{\bm{\tilde{\rho}}}$, and show the existence of an end component $S$ of 
 $G'_{\bm{\tilde{\rho}}}$ that is devoid of gadget nodes with priority $m$. This will show that $G'_{\bm{\tilde{\rho}}}$ has a base node with priority $m$, concluding the proof.

Whenever a node $w \in \Vfair_\exists$ is visited infinitely often in $\rho$, the successor $u_{n_u}$ of the existential gadget node $u = (w, \bar{p}, \bar{b})$ is visited infinitely often in $\bm{\tilde{\rho}}$, as $u_{n_u}$ is the only successor of $u$ in $LG'_{s'}$.
 Furthermore, due to $\rho$ being $\exists$-fair, both $s'(u^\exists_{n_u})$ and all the successors in $E_f(u)$ are visited infinitely often in $\bm{\rho}$. Therefore, both outgoing branches~(\ref{branch1-pp-exi}) and~(\ref{branch2-pp-exi}) of $u$ are in $G'_{\bm{\tilde{\rho}}}$.%
 \footnote{Unless, $E_f(w) = \{s'(u^{\exists}_{n_u})|_V\}$, in which case only~(\ref{branch1-pp-exi}) is taken.} 
 Also whenever the node $u^\forall_{n_u}$ is in $G'_{\bm{\tilde{\rho}}}$, the node $u^\exists_{n_u}$ is also in $G'_{\bm{\tilde{\rho}}}$.
Similarly, due to $\rho$ being $\forall$-fair, whenever a node $w \in \Vfair_\forall \cap \Inf(\rho)$, 
both the branches~(\ref{branch1-pp-uni}) and~(\ref{branch2-pp-uni}) are in $G'_{\bm{\tilde{\rho}}}$.
\footnote{Unless, $E_f(w) = \{s'(u^{\exists}_{n_u})|_V\}$ and $t(w)$ is eventually always the unique fair successor of $w$, in which case only~(\ref{branch1-pp-uni}) is taken.}
So, whenever the node $u^\forall_{n_u+1}$ is in $G'_{\bm{\tilde{\rho}}}$, the node $u^\exists_{n_u}$ is also in $G'_{\bm{\tilde{\rho}}}$. 

Assume there are no base nodes that carry priority $m$ in $G'_{\bm{\tilde{\rho}}}$. Then, there is a gadget node in $G'_{\bm{\tilde{\rho}}}$ that carries this priority. Recall that for all $u$ in $G'$ and all $i$, $u^\exists_i$ has an odd priority and $u^\forall_i$ has an even priority. 
Given the above characteristics of $G'_{\bm{\tilde{\rho}}}$, 
we can guarantee that whenever there is a gadget node carrying the maximum even priority $m$, it is either $u^\forall_{n_u}$ in an existential gadget or $u^\forall_{n_u+1}$ in a universal gadget. In both cases, there exists an existential 
sibling $u^\exists_{n_u}$ of this gadget node in  $G'_{\bm{\tilde{\rho}}}$, carrying the odd priority $m-1$. 
So, we can remove the branches of the (existential or universal) gadget that carry priority $m$ in $G'_{\bm{\tilde{\rho}}}$, and get a subgame $H$ of $G'_{\bm{\tilde{\rho}}}$ without dead-ends. 
Since $H$ is a subgame of $LG'_{s'}$, all infinite plays in $H$ are $\exists$-winning. Furthermore, due to the maximum priority in $H$ being $m-1$, we know that none of the gadget nodes $u$ which have successors with priorities $m$ and $m-1$ in $G'_{\bm{\tilde{\rho}}}$ can lie on a cycle in $H$.
We focus on the cyclic part $\Cyc(H)$ of $H$. Since none of the nodes in $\Cyc(H)$ carry priorities $m$ or $m-1$, all nodes in $\Cyc(H)$ have the same outgoing edges in $G'_{\bm{\tilde{\rho}}}$ and in $\Cyc(H)$. 
Then, once $\bm{\tilde{\rho}}$ reaches a node in $\Cyc(H)$, it stays in $\Cyc(H)$, i.e. $\Cyc(H)$ is an end-component of $G'_{\bm{\tilde{\rho}}}$ without priority $m$ gadget nodes. As previously discussed, this concludes the proof. 

\enskip

\noindent  \ref{item:case3} ($\rho$ is $\forall$-fair and $\bm{\rho}$ is alternating). The rest of this section will be devoted to the proof of this case. We will inductively construct a $t$-compliant play $\nu$ that starts at $v$ in $G$ and its (roughly speaking) expansion $\bm{\nu}$ to $LG'_{s'}$. We will show that either during the construction of $\nu$ and $\bm{\nu}$, we fall into the subcases~\ref{item:case1} and~\ref{item:case2}, which directly ends the proof showing that $t$ is not $\forall$-winning in $G$; or the inductively constructed play $\nu$ is not $\forall$-winning in $G$.

Before we start the proof of \ref{item:case3}, we introduce some notation and make observations that will be used in the proof.

\enskip 

\begin{notation}
	We denote the subgame of $LG'_{s'}$ where the rightmost edges of all universal gadgets are removed, by $LG^{-}_{s'}$.	
\end{notation}

That is, in $LG'_{s'}$ the only successor of $u$ is $u_{n_u}$ (whose only successor is $u^\exists_{n_u}$).

\begin{lemma}\label{lemma:Gminus-cycles}
	No cycle of $LG^{-}_{s'}$ contains a universal gadget node $u$ with $n_u = k+1$.
\end{lemma}

\begin{proof}[Proof of~\cref{lemma:Gminus-cycles}]
Observe that all infinite plays in $LG^{-}_{s'}$ are eventually stabilizing. This implies that the maximum priority on cycles in $LG^{-}_{s'}$ is at most $2k+1$ by~\cref{lemma:stabilizing-p-b}, and the absence of priority $2k+2$ gadget nodes, due the absence of righmost edges of universal gadgets. 
The priority $2k+1$ is visited on a cycle in case a universal gadget branch~(\ref{branch1-pp-uni}) exists in $LG^{-}_{s'}$, caused by a universal node $u$ with $n_u = k+1$. Recall that such a $u$ has two branches~(\ref{branch1-pp-uni}) and~(\ref{branch2-pp-uni}) in $LG'_{s'}$, but only~(\ref{branch1-pp-uni})remains in $LG^{-}_{s'}$, since the rightmost branch is removed. 
So, if a node $u$ with $n_u = k+1$ exists on a cycle in $LG^{-}_{s'}$, then the maximum priority $2k+1$ is seen on this cycle. Such a cycle cannot exist since all plays in $LG^{-}_{s'}$ are $\exists$-winning. This disallows the existence of universal nodes $u$ with $n_u = k+1$ on cycles of $LG^{-}_{s'}$.
\end{proof}

\begin{observation} If $t$ wins a node $w$ in $G$, we can WLOG assume it does so with every finite history $h$ that ends at $w$. 
\end{observation}

\begin{observation}\label{obs:unique-elongation-for-t-s} For every given $\exists$-strategy $s^i$, and for every given $t$-compliant history $h$ in $G$, there exists a
	unique extension of $h$ that's compliant with both $t$ and $s^i$. That is, there is a 
	unique play $h \cdot \sigma$
	in $G$, such that $\sigma$ is compliant with both $s^i$ and $t$, and $h \cdot \sigma$ is compliant with $t$. 
\end{observation}


Now we start the proof of~\ref{item:case3} 

\enskip

\noindent \textbf{The proof of~\ref{item:case3}} We construct a $t$-compliant play $\nu$ in $G$ and a play $\bm{\nu}$ in $LG'_{s'}$. By slight abuse of Notation~\ref{not:bold-version}, we denote these plays by $\nu$ and $\bm{\nu}$, even though $\bm{\nu}$ is not exactly the expansion of $\nu$. However, to justify our abuse of notation, these two plays will exhibit very similar behavior to a play and its expansion. This behaviour, which is an invariant at each iteration of the construction algorithm for $\nu$ and $\bm{\nu}$ is presented via Inv.~\ref{inv:nu-bold-nu} and~\cref{lemma:nu-bold-nu}. We present these below, but postpone their proof. 
Note that $\bm{\nu}$ is in $LG'_{s'}$ and not necessarily in $LG_{s'}^-$. 

The plays $\nu$ and $\bm{\nu}$ will be constructed in iterations. In each iteration $i$ there are prefixes $\nu^i$ and $\bm{\nu}^i$ and infinite plays $\nu^i\cdot \rho^i$ and $\bm{\nu}^i\cdot \bm{\rho^i}$ and we identify a prefix $\nu^{i+1}$ of $\nu^i\cdot \rho^i$ that extends $\nu^i$ and a prefix $\bm{\nu}^{i+1}$ of $\bm{\nu}^i\cdot \bm{\rho^i}$ that extends $\bm{\nu}^i$ such that $\nu= \lim_{i \to \infty} \nu^i$ and $\bm{\nu} = \lim_{i \to \infty} \bm{\nu}^i$. The following invariant will hold for all $i \in \mathbb{N}$. 

\begin{invariant}\label{inv:nu-bold-nu} The set of nodes visited in $\nu^i$ is exactly the set of base nodes visited in $\bm{\nu}^i$ (i.e. even though $\nu^i = \bm{\nu}^i|_V$ does not necessarily hold, $\nu^i = {\bm{\nu}^i}^{\{\}}|_V$ holds.\footnote{See Not.~\ref{not:base-node-base-sequence}}). 
\end{invariant}

Furthermore, the infinitely often visited base nodes of $\nu$ and $\bm{\nu}$ are identical, as stated in~\cref{lemma:nu-bold-nu}. 

\begin{lemma}\label{lemma:nu-bold-nu}
	The set of infinitely often visited nodes in $\nu$ are exactly the infinitely often visited base nodes in $\bm{\nu}$, i.e. $\Inf(\nu) = \Inf(\bm{\nu}|_V)$.
\end{lemma}

In the proof we will exploit the similarity of $\nu$ and $\bm{\nu}$ stated via Inv.~\ref{inv:nu-bold-nu} and~\cref{lemma:nu-bold-nu}. The similarity, in addition to $\bm{\nu}$ being a play in $LG'_{s'}$ starting at $(v, 1, \exists)$, therefore $\exists$-winning; while $\nu$ being a $t$-compliant play in $G$ starting at $v$, will allow us to prove that $\nu$ is $\exists$-winning.




\enskip 

 Next, we formally construct $\nu$ and $\bm{\nu}$. 
Afterwards, we prove Inv.~\ref{inv:nu-bold-nu} and~\cref{lemma:nu-bold-nu}, and go to a case distinction on the structure of $\nu$ and $\bm{\nu}$, the last part of the proof. 

\enskip

\noindent \textbf{The construction of $\nu$ and $\bm{\nu}$} 

\smallskip
The construction will follow the below structure:
\begin{itemize}
    \item \textbf{Invariants: } We will introduce the invariants about $\rho^{i},\bm{\rho}^{i}, \nu^i $ and $\bm{\nu}^i$ that will be preserved at each iteration $i$.
    \item \textbf{Base case (Initialization): } We will initialize the construction with values for $\rho^0, \bm{\rho}^0, \nu^0$ and $\bm{\nu}^0$. We will show that the invariants hold in the base case. 
    \item \textbf{Inductive Step (Construction): } Assuming the invariants hold at iteration $i-1$ (i.e. they hold for $\rho^{i-1}, \bm{\rho}^{i-1}, \nu^{i-1}$ and $\bm{\nu}^{i-1}$), we will show how to construct the values at iteration $i$ (i.e. $\rho^{i}, \bm{\rho}^{i}, \nu^{i}$ and $\bm{\nu}^{i}$). We will show that the plays constructed at iteration $i$ preserve the invariants. 
    \begin{itemize}
        \item \textbf{Selection of $\bm{u}^i$: } We will identify a node $\bm{u}^i$ in $\bm{\rho}^{i-1}$, through a list of properties $\bm{u}^i$ must satisfy.
        \item  \textbf{Construction of the plays at iteration $i$ : } Assuming we were able to select a $\bm{u}^i$ satisfying the listed properties, we will construct the new values of $\rho^{i}, \bm{\rho}^{i}, \nu^{i}, \bm{\nu}^{i}$. While constructing these plays, we will simultaneously show that they preserve the invariants.
        \item \textbf{Existence of $\bm{u}^i$: } Lastly, we will show that a $\bm{u}^i$ satisfying the listed properties always exists.
    \end{itemize}
\end{itemize}

\enskip

\noindent \textbf{-- Invariants: } Inv. $\ref{inv:set}$ lists the invariants of the construction. 

\begin{invariant}\label{inv:set} At each iteration $i \in \mathbb{N}^+$,
	\begin{enumerate}[label=(\roman*)]
		\item $\nu^i$ is a finite $t$-compliant play in $G$.\label{inv:set-one}
		\item $\bm{\nu}^i$ is a finite play in $LG'_{s'}$.\label{inv:set-two}
		\item $\nu^i\cdot \rho^i$ is a(n infinite) $t$-compliant play in $G$,
		$\bm{\nu}^i\cdot \bm{\rho}^i$ is a play in $LG'_{s'}$, and $\bm{\rho}^i$ is the expansion of $\rho^i$ to $LG'_{s'}$.\label{inv:set-three}
		\item Either $\nu^i\cdot \rho^i$ is $\exists$-winning (concluding the construction and finalizing the proof, as it is a $t$-compliant play in $G$; note that winning is prefix independent), or $\rho^i$ and $\bm{\rho}^i$ fall into~\ref{item:case3}, i.e. $\rho^i$ is $\forall$-fair and $\bm{\rho}^i$ is alternating.\label{inv:set-four}
		\item $\nu^{i-1}$ is a prefix of $\nu^{i}$ and $\bm{\nu}^{i-1}$ is a prefix of $\bm{\nu}^{i}$.\label{inv:set-five}
	\end{enumerate}
\end{invariant}

\noindent \textbf{-- Base case (Initialization): } Initially, we set $\rho^0 := \rho$, $\nu^0 := \bm{\nu}^0 := \varepsilon$, and  $\bm{\rho}^0$ is the expansion of $\rho^0$ to $LG'_{s'}$ (starting from the initial vertex $(v, 1, \exists)$). 
The invariants clearly hold for the initial values of $\rho^0, \bm{\rho}^0, \nu^0$ and $\bm{\nu^0}$. 


\smallskip
\noindent \textbf{-- Inductive Step (Construction): } Assuming the invariant holds for $\rho^{i-1}, \bm{\rho}^{i-1},$ $ \nu^{i-1}$ and $\bm{\nu}^{i-1}$, we will construct the plays $\rho^{i}, \bm{\rho}^{i}, \nu^{i}$ and $\bm{\nu}^{i}$ at iteration $i \geq 1$.
For this, we need to first identify the node $\bm{u}^i$.

\smallskip
\noindent \textbf{\textbullet \,\, Identifying $\bm{u}^i$: } Let $\bm{\rho}^{i-1} = u_1 u_2 \ldots$ and let $n > 1$ be the minimal index such that $u_n = (w^i, p^i, b^i)$ satisfies the following properties. We will denote $u_n$ by $\bm{u}^i$ to express that it is \textit{the selected node at iteration $i$}.
\begin{enumerate}[label=$\bm{u}^i$-Prop.~\arabic*, leftmargin=*, align=left]
    \item $w^i \in \Vfair_\forall$ (i.e. $\mathbf{u}^i$ is the root node of a universal gadget in $G'$).\label{item:u-1}
    \item $u_{n} \to u_{n+1}$ is the rightmost edge of $\mathbf{u}^i$, i.e. $\mathbf{u}^i \to \mathbf{u}_{k+2}^i$ (the existence of the rightmost edge of the universal gadget in $\bm{\rho}^{i-1}$ implies that $n_{\bm{u}^i} = k+1$, due to the definition of expansion).\label{item:u-2}
    \newline Define $\bar{w}^i := s'((\mathbf{u}^i_{k+1})^\exists)|_V$. That is, $\bar{w}^i$ is the base node of the unique successor of $(\mathbf{u}^i_{k+1})^\exists$ in $LG'_{s'}$.\footnote{We will try to remove the rightmost edges from $\bm{\rho}^i$ as much as possible by taking the edge $\mathbf{u}^i \to \mathbf{u}^i_{k+1}$ instead of the rightmost edge $\mathbf{u}^i \to \mathbf{u}^i_{k+2}$ in universal gadgets whenever possible.} Further, define $\bm{\varsigma}^i := u_1 \ldots u_{n-1}$ to be the prefix of $\bm{\rho}^i$ 
    until $u_n$ and excluding $u_n$. 
     As usual, define $\varsigma^i$ to be the projection of $\bm{\varsigma}^i$ to $G$, i.e. $\varsigma^i = \bm{\varsigma^i}|_V$ is a play in $G$.
    \item There is a $t$-compliant extension $\nu^{i-1} \cdot \varsigma^i \cdot \kappa$ of $\nu^{i-1}$ in $G$ such that $\kappa$
    has a prefix $\mathcal{C}^{w^i}$ that is a cycle over $w^i$, and $\kappa$ continues after $\mathcal{C}^{w^i}$ with $\bar{w}^i$. That is, $\mathcal{C}^{w^i} = w^i \ldots w^i$ is a cycle over $w^i$ and $\kappa = \mathcal{C}^{w^i} \cdot \bar{w}^i \cdots$. Moreover, there exists a cycle $\bm{ \mathcal{C}^{\mathbf{u}^i}}$ over $\mathbf{u}^i$ in $LG'_{s'}$, and its set of base nodes is the same as the set of nodes of $\mathcal{C}^{w^i}$, i.e. $\{ \, v \mid v \in \bm{ \mathcal{C}^{\mathbf{u}^i}}|_V\, \} = \{ \, v \mid v \in \mathcal{C}^{w^i} \, \}$. \label{item:u-3}
\end{enumerate}

\smallskip
\noindent \textbf{\textbullet \,\,Construction of the plays at iteration $i$ : } 
Assuming we obtained $\bm{u}^i$, whose existence will be proven later, we construct $\bm{\nu}^i$ as follows. We set
\[\bm{\nu}^i := \bm{\nu}^{i-1} \cdot \bm{\varsigma}^i \cdot \bm{ \mathcal{C}^{\mathbf{u}^i}} \to \mathbf{u}_{k+1}^i \to (\mathbf{u}_{k+1}^i)^{\exists} \to \bar{\mathbf{u}}^i  \text{ where } \bar{\mathbf{u}}^i = (\bar{w}^i, \bar{p}^i, b^i)\]  with the appropriate $\bar{p}^i$.

Note that when we write the concatenation of two finite plays, such as $\bm{\nu}^{i-1}$ and $\bm{\iota}^i$ where the last vertex of the former matches the first vertex of the latter, as $\bm{\nu}^{i-1} \cdot \bm{\iota}^i$, we assume that this shared vertex is not duplicated in the resulting sequence.

Next, we show that $\bm{\nu^i}$ is indeed a finite play in $LG'_{s'}$ [Showing Inv.~\ref{inv:set}-\ref{inv:set-two} is preserved]. 
\enskip

\begin{lemma}\label{lemma:nu-i-in-gadget-game}
	The play $\bm{\nu}^i$ is a finite play in $LG'_{s'}$. 
\end{lemma}

\begin{proof}[Proof of~\cref{lemma:nu-i-in-gadget-game}]It is clear that $\bm{\nu}^i$ is finite. We will show $\bm{\nu}^i$ is a play in $LG'_{s'}$ by showing that every edge of $\bm{\nu}^i$ exists in $LG'_{s'}$. (i) By construction, $\bm{\nu}^{i-1} \cdot \bm{\varsigma}^i \cdot \bm{ \mathcal{C}^{\mathbf{u}^i}}$ is a play in $LG'_{s'}$. (ii) As $\mathbf{u}^i$ is a universal node in $G'$, all its outgoing edges are in $LG'_{s'}$. So, in particular the edge $\bm{ \mathcal{C}^{\mathbf{u}^i}} \to \bm{u}^i_{k+1}$ is in $LG'_{s'}$. (iii) According to~\ref{item:u-2}, $\bm{u}^i \to \bm{u}^i_{k+2}$ is in $G'_{\bm{\rho}^i}$, where $\bm{\rho}^i$ is the expansion of a $t$-compliant play $\rho^i$ in $G$. By construction of expansion, $\bm{u}^i \to \bm{u}^i_{k+2}$ being in $G'_{\bm{\rho}^i}$ implies $n_{\bm{u}^i} = k+1$. According to Not.~\ref{not:n}, this in turn implies that $s'$ prescribes to $\bm{u}^i$ its existential successor, i.e. the edge $\mathbf{u}_{k+1}^i \to (\mathbf{u}_{k+1}^i)^{\exists}$ is in $LG'_{s'}$. (iv) Finally, the edge $(\mathbf{u}_{k+1}^i)^{\exists} \to \bar{\mathbf{u}}^i$ is in $LG'_{s'}$ as according to~\ref{item:u-3}, $\bar{w}^i = s'((\mathbf{u}_{k+1}^i)^{\exists})|_V$; $b^i$ is left unchanged as the transition is not through a rightmost gadget branch, and $\bar{p}^i$ is set accordingly. \end{proof}

\enskip

To obtain $\nu^i$, we simply extend $\nu^{i-1}$ with $\varsigma^i \cdot \mathcal{C}^{w^i} \cdot \bar{w}^i$.

\begin{lemma}\label{lemma:nu-i-in-original-game}
The play $\nu^i$ is a finite $t$-compliant play in $G$ whose nodes are the base nodes of $\bm{\nu}^i$.
\end{lemma}

\begin{proof}[Proof of~\cref{lemma:nu-i-in-original-game}]
Due to~\ref{item:u-3}, $\nu^i$ is clearly a finite $t$-compliant play in $G$  [Showing Inv.~\ref{inv:set}-\ref{inv:set-one}]. Moreover, Inv.~\ref{inv:nu-bold-nu} is preserved, i.e. the set of nodes of $\nu^i$ is the same as the set of base nodes of $\bm{\nu}^i$ (due to how the cycle $\mathcal{C}^{w^i} $ is defined in~\ref{item:u-3}).
\end{proof}

\enskip

Lastly, we define $\rho^{i}$ and $\bm{\rho}^{i}$. We will define $\rho^{i}$ as an extension of $\nu^i$, so that $\nu^i \cdot \rho^{i}$ is a $t$-compliant play in $G$ and the expansion $\bm{\rho}^{i}$ of $\rho^{i}$, and $\bm{\nu}^i \cdot \bm{\rho}^{i}$ are plays in $LG'_{s'}$. For this, we extract an $\exists$-strategy $s^i$ in $G$, by pulling back ${s'}$, just as we did in the construction of $s$. However, while pulling $s'$ back we need to be careful about the memory update function, so that we can start $\rho^{i}$ from the last node of $\nu^i$, that is $\bar{w}$ (with memory values $p^i, b^i$ determined by the history $\nu^i$); and we can start its expansion $\bm{\rho}^{i}$ from the last node of $\bm{\nu}^i$, that is $\bar{\bm{u}}^i = (\bar{w}^i, \bar{p}^i, \bar{b}^i)$.
Note that even though $\nu^{i}$ and $\bm{\nu}^i$ have the same set of base nodes, the ending memory values $p^i, b^i$ and $\bar{p}^i, \bar{b}^i$ are not necessarily equivalent. This is because the cycle $\bm{\mathcal{C}^{\bm{u}^i}}$ can take righmost gadget edges and reset $b$, or $\mathcal{C}^{w^i}$ can take transitions that reset $b$. Even though these cycles pass through the same set of base nodes, one is not the expansion of the other, so there is no guarantee that they will output consistent memory values. 

We therefore need to slightly adjust the memory update function in $G$, and identify the values $p^i$ and $b^i$ with the values $\bar{p}^i$ and $\bar{b}^i$ at the end of $\nu^i$. That is, we interrupt the memory update function in $G$ at the end of $\nu^i$, and ask it to treat the current memory values at $\bar{w}$ as $\bar{p}^i$ and $\bar{b}^i$. Note that this does not affect anything in $G$, since we used the memory values $p$ and $b$ only as a way to ensure consistency between a play in $G$ and its expansion to $LG'_{s'}$. In particular, this adjustment in the memory values will be invisible to $t$, since $t$ only looks at the history of a play to determine the next move.  

As $\nu^i$ is a play compliant with $t$, by Obs.~\ref{obs:unique-elongation-for-t-s}, there exists a unique $t$-compliant extension $\nu^i \cdot \sigma$ of $\nu^i$ such that $\sigma$ is compliant with both $s^i$ and $t$. We denote $\sigma$ by $\rho^{i}$. 
As explained above, we adjust the memory values at the end of $\nu^{i}$ to $\bar{p}, \bar{b}$ to match the values of the last node of $\bm{\nu}^{i}$, and take the expansion $\bm{\rho}^{i}$ of $\rho^{i}$ to $LG'_{s'}$. Note that the starting node of $\bm{\rho}^{i}$ is $\bar{\bm{u}}^i = (\bar{w}, \bar{p}, \bar{b})$. Therefore both $\bm{\rho}^{i}$ and $\bm{\nu}^{i} \cdot \bm{\rho}^{i}$ are plays in $LG'_{s'}$ (as $\bm{\nu}^i$ is a play in $LG'_{s'}$ due to~\cref{lemma:nu-i-in-gadget-game}).
Clearly, $\rho^{i}$ is a $t$-compliant play in $G$ [Showing Inv.~\ref{inv:set}-\ref{inv:set-three} is preserved].

If $\rho^{i}$ and $\bm{\rho}^{i}$ fall into~\ref{item:case1} or~\ref{item:case2} (i.e. if $\rho^{i}$ is $\forall$-unfair or $\bm{\rho}^{i}$ is eventually stabilizing), by the respective proofs we get that $\rho^{i}$ is $\exists$-winning. This implies that $\nu^i \cdot \rho^i$ is also $\exists$-winning. However, as shown in the previous paragraph, $\nu^i \cdot \rho^i$ is a $t$-compliant play in $G$, and its initial node is $v$. This shows that $t$ does not win $v$, concluding the proof of this case. Otherwise, $\rho^{i}$ and $\bm{\rho}^{i}$ fall into~\ref{item:case3} [Showing Inv.~\ref{inv:set}-\ref{inv:set-four} is preserved].


\enskip 

We showed all invariants in Inv.~\ref{inv:set} are preserved, except for~\ref{inv:set-five}, which is easily seen by how $\nu^i$ and $\bm{\nu}^i$ are constructed. 


\enskip

\noindent \textbf{\textbullet \,\, Existence of $u^i$: } The last step on confirming the well definedness of the construction is proving that the existence of a $\bm{u}^i$ is guaranteed in each iteration $i >0$. 

\begin{lemma}\label{lemma:existence-of-u-guaranteed}
  Given that the invariants given in Inv.~\ref{inv:set} holds for $\nu^{i-1}, \bm{\nu}^{i-1}, \rho^{i-1}$ and $\bm{\rho}^{i-1}$, and that $\rho^{i-1}$ and $\bm{\rho}^{i-1}$ fall into~\ref{item:case3}, there always exists a $\bm{u}^i$ in $\bm{\rho}^{i-1}$ that satisfies the properties given via~\ref{item:u-1}-\ref{item:u-3}.
\end{lemma}

\begin{linenomath}
\begin{proof}[Proof of~\cref{lemma:existence-of-u-guaranteed}] By assumption $\rho^{i-1}$ and $\bm{\rho}^{i-1}$ fall into~\cref{item:case3}, so $\bm{\rho}^{i-1}$ is alternating, and therefore, it visits the rightmost edge of a node $\bm{u} = (w, p, b)$ infinitely often. Also $\rho^{i-1}$ is $\forall$-fair, so it visits $\bar{w} = s'(\mathbf{u}^\exists_{k+1})|_V$ (a fair successor of $w$) infinitely often. Consequently, $\bm{\rho}^{i-1}$ visits an edge $(w, p', b') \to (\bar{w}, p'', b'')$, as well as $\mathbf{u}$,  infinitely often. Let $\tilde{\rho}^{i-1}$ denote the suffix of $\rho^i$ consisting of only nodes that appear infinitely often in $\rho^i$. By considering a long enough infix of $\tilde{\rho}^{i-1}$, we guarantee finding a cycle $\mathcal{C}^w$ on $w$ that is continued by $\bar{w}$ in $\rho^{i-1}$, and that visits all the nodes in $\tilde{\rho}^{i-1}$. 
We can shift $\mathcal{C}^w$ in $\rho^{i-1}$ to guarantee that its expansion $\bm{\mathcal{C}}^w$ to $LG'_{s'}$ starts with $\mathbf{u}$. As $\bm{u}$ is visited infinitely often in $\bm{\tilde{\rho}^{i-1}}$, we can further extend the expansion $\bm{\mathcal{C}}^w$ enough to guarantee it also ends with $\mathbf{u}$. We call this new cycle ${\mathcal{C}}^{\mathbf{u}}$. 
As $\bm{\tilde{\rho}}^{i-1}|_V = \tilde{\rho}^{i-1}$, the set of the base nodes of $\bm{\mathcal{C}}^{\mathbf{u}}$ will be all the nodes in $\tilde{\rho}^{i-1}$, same with the nodes of $\mathcal{C}^w$. Therefore, $\mathbf{u}$ satisfies the properties~\ref{item:u-1}-\ref{item:u-3}. 
\end{proof}
\end{linenomath}

Having proven the invariants and~\cref{lemma:existence-of-u-guaranteed}, we have shown that our construction is well-defined. Now let us prove Inv.~\ref{inv:nu-bold-nu} and~\cref{lemma:nu-bold-nu}, and proceed to the final case distinction over $\nu$ and $\bm{\nu}$ that will finalise the proof.

\enskip

\noindent \textbf{Proofs of Inv.~\ref{inv:nu-bold-nu} and~\cref{lemma:nu-bold-nu}}. 

\begin{proof}[Proof of Inv. \ref{inv:nu-bold-nu}]
Inv.~\ref{inv:nu-bold-nu} trivially holds for the initial assignments $\nu^0 = \bm{\nu}^0 = \varepsilon$ and during the construction, we showed that it is preserved.
\end{proof} 


\begin{proof}[Proof of~\cref{lemma:nu-bold-nu}]
Observe that $\bm{\nu}$ has the shape 

\begin{equation}\label{eq:bold-nu}
	\bm{\nu} = \bm{\varsigma}^1 \cdot \bm{\mathcal{C}^{\mathbf{u}^1}} \cdot \mathbf{u}^1_{k+1} \cdot (\mathbf{u}^1_{k+1})^\exists \cdot \bm{\varsigma}^2\cdot \bm{\mathcal{C}^{\mathbf{u}^2}} \cdot \mathbf{u}^2_{k+1} \cdot (\mathbf{u}^2_{k+1})^\exists \cdot \bm{\varsigma}^3 \cdot  \bm{\mathcal{C}^{\mathbf{u}^3}} \cdots
\end{equation}

and $\nu$ has the shape

\begin{equation}\label{eq:nu}
	\nu = \varsigma^1 \cdot \mathcal{C}^{w^1} \cdot \varsigma^{2} \cdot \mathcal{C}^{2} \cdot \varsigma^{3} \cdots
\end{equation}

By construction, we have that for each $i$, $\bm{\varsigma}^i|_V = \varsigma^i$, and the base nodes of $\bm{\mathcal{C}^{\mathbf{u}^i}}$ and $\mathcal{C}^{w^i}$ are the same. This directly implies~\cref{lemma:nu-bold-nu}.
\end{proof}
\enskip

\noindent \textbf{The Intuition and the Final Case Distinction}

\smallskip
Before the last part of the proof, which will go to a case distinction of $\nu$ and $\bm{\nu}$ and show that $v$ is $\exists$-winning, we would like to summarise the idea behind the construction given so far, and how it connects to the upcoming proof. 


\enskip

\noindent \textit{Idea Behind the Construction and the Proof.} Intuitively, at every iteration $i$, we construct a finite play $\bm{\nu}^i$ that starts at $\bm{\nu}^{i-1}$, continues along $\bm{\rho}^{i-1}$ until it encounters the first node ($\bm{u}^i$) whose rightmost edge is taken and satisfies the desired properties, then takes a cycle over $\bm{u}^i$ (which possibly visits the rightmost edge of $\bm{u}^i$) whose base nodes will match a cycle over $w^i$ in a $t$-compliant extension of $\nu^{i-1}$; then we continue the play from $\bm{u}^i$ with its $k+1^{\text{th}}$ successor (\ref{branch1-pp-uni}). By this last move, we refuse to take the rightmost branch of $\bm{u}^i$ (\ref{branch2-pp-uni}), except for on cycles $\bm{\mathcal{C}^{\bm{u}^i}}$. This $k+1^{\text{th}}$ successor of $\bm{u}^i$ is indeed in $LG'_{s'}$ (as explained in~\ref{item:u-2}, since $n_{\bm{u}^i} = k+1$), but not necessarily in $G'_{\bm{\rho}^i}$. This redirection allows us to accumulate the rightmost branches of universal gadgets of $\bm{u}^i$ in the cyclic parts of $\bm{\nu}$, and making sure that once the cyclic parts are removed from $\bm{\nu}$, in the remaining part of $\bm{\nu}$\textemdash still a play in $LG'_{s'}$\textemdash, the rightmost branches of a universal gadget node appears only if the node does not satisfy the properties~\ref{item:u-1}-\ref{item:u-3} as $\bm{u}^i$ is chosen as minimal with these properties. 
This will allow us to arrive at a contradiction and finish the proof.

\enskip

Before the final case distinction, we will prove one last easy lemma,~\cref{lemma:nus}, that summarises existing information, and justifies our proof strategy.

\begin{lemma}\label{lemma:nus}
 $\nu$ is a $t$-compliant play in $G$ with the initial node $v$ and $\bm{\nu}$ is a play in $LG'_{s'}$. Therefore, $\bm{\nu}$ is $\exists$-winning. 
\end{lemma}

\begin{linenomath}
\begin{proof}[Proof of~\cref{lemma:nus}]
As $\rho^0 = \rho$, which is a play in $G$ that starts from $v$, the initial node of $\nu$ is also $v$. Furthermore, as for each $i$, $\nu^i$
 is $t$-compliant (Inv.~\ref{inv:set}-\ref{inv:set-one}), and $\nu^i$ is a prefix of $\nu^{i+1}$ (Inv.~\ref{inv:set}-\ref{inv:set-five}), $\nu = \lim_{i \to \infty} \nu^i$ is also a $t$-compliant play in $G$. Similarly, as for all $i$, $\bm{\nu}^i$ is a play in $LG'_{s'}$ (Inv.~\ref{inv:set}-\ref{inv:set-two}) and $\bm{\nu}^i$ is a prefix of $\bm{\nu}^{i+1}$ (Inv.~\ref{inv:set}-\ref{inv:set-five}),  $\bm{\nu} = \lim_{i \to \infty} \bm{\nu}^i$ is also a play in $LG'_{s'}$. 
\end{proof}
\end{linenomath}

\smallskip

\cref{lemma:nus} reveals that showing $\nu$ is not $\forall$-winning is sufficient to finish the proof. This is exactly what we will do next. 

\enskip

\begin{linenomath}
Let $\widetilde{\bm{\nu}}$ be a suffix of infinitely recurring nodes in $\bm{\nu}$ (\ref{eq:bold-nu}) that starts with some $\bm{\varsigma}^i$,\footnote{Here we slightly abuse the notation. By definition, $\tilde{\bm{\nu}}$ represents \textit{the largest} tail of $\bm{\nu}$ that consists of nodes in $\Inf(\bm{\nu})$, but here we chose one that is convenient for us.} i.e., 
$$ \bm{\widetilde{\nu}} = \bm{\varsigma}^j \cdot \bm{\mathcal{C}^{\mathbf{u}^j}} \cdot \mathbf{u}^j_{k+1} \cdot (\mathbf{u}^j_{k+1})^\exists \cdot \bm{\varsigma}^{j+1}\cdot \bm{\mathcal{C}^{\mathbf{u}^{j+1}}} \cdot \mathbf{u}^{j+1}_{k+1} \cdot (\mathbf{u}^{j+1}_{k+1})^\exists \cdot \bm{\varsigma}^{j+2} \cdots$$

We remove the cycles from $\bm{\widetilde{\nu}}$ to obtain another play in $LG'_{s'}$:
$$\bm{\widetilde{\nu}}^{-} :=\bm{\varsigma}^j \cdot \mathbf{u}^j_{k+1} \cdot (\mathbf{u}^j_{k+1})^\exists \cdot \bm{\varsigma}^{j+1}\cdot \mathbf{u}^{j+1}_{k+1} \cdot (\mathbf{u}^{j+1}_{k+1})^\exists \cdot \bm{\varsigma}^{j+2} \cdots$$
\end{linenomath}

Note that $\bm{\widetilde{\nu}}^{-}$ is a play in $LG'_{s'}$, since removing cycles from a play on a graph, gives another play on the same graph.

\enskip

\begin{linenomath}
Similarly, let $\widetilde{\nu}$ be the tail of $\nu$(\ref{eq:nu}) starting from $\varsigma^j$, that is,
$$\widetilde{\nu} = \varsigma^j \cdot \mathcal{C}^{w^j} \cdot \varsigma^{j+1} \cdot \mathcal{C}^{w^j} \cdot \varsigma^{j+2} \cdots$$
\end{linenomath}

\enskip

From Inv.~\ref{inv:nu-bold-nu} and~\cref{lemma:nu-bold-nu}, we know that $\widetilde{\nu}$ also contains only infinitely recurring nodes of $\nu$.

\enskip

We will finish the proof by a case distinction on the behaviour of $\widetilde{\nu}$, $\widetilde{\bm{\nu}}$ and $\widetilde{\bm{\nu}}^{-}$.
We consider whether the rightmost edges of universal gadgets (i.e., those of priority $2k+2$) appear in $\widetilde{\bm{\nu}}^{-}$ and the $\forall$-fairness of $\widetilde{\nu}$.
\begin{enumerate}[label=\textbf{D}\arabic*., leftmargin=*, itemsep=0.5em]
    \item The rightmost edges of universal gadgets do not exist in $\widetilde{\bm{\nu}}^-$, \label{item:D1}
    \item The rightmost edge of a universal gadget exists in $\widetilde{\bm{\nu}}^-$ and  $\widetilde{\nu}$ is $\forall$-unfair (i.e. $\nu$ is $\forall$-unfair)  \label{item:D2}
    \item The rightmost edge of a universal gadget exists in $\widetilde{\bm{\nu}}^-$ and $\widetilde{\nu}$ is $\forall$-fair (i.e. $\nu$ is $\forall$-fair) \label{item:D3}
\end{enumerate}

We show that all 3 cases yield contradictions.

\enskip

\noindent \ref{item:D1} Recall that $\widetilde{\bm{\nu}}^-$ is a play in $LG^{-}_{s'}$. By Item~\ref{item:u-2}, we have that each $\mathbf{u}^i$ in $\widetilde{\bm{\nu}}^-$ has $n_{\mathbf{u}^i} = k+1$. This introduces a contradiction through~\cref{lemma:Gminus-cycles}.

\enskip

\noindent \ref{item:D2} Recall that as a result of~\cref{lemma:nu-regularity-for-forall-player}, we picked $s'$ so that $n_{\bm{u}}$ can be equal to $k+1$ for a universal node $\bm{u} = (w, p, b) \in \Win_\exists(G')$ only if $b = \exists$. Let $\sigma$ be an $s$ play in $G$, then by the definition of expansion, if $\bm{\sigma}$ passes through the rightmost edge of a universal gadget rooted at $\bm{u} = (w, p, b)$, then $b = \exists$. Since the rightmost edges of universal nodes set the $b$ value to $\forall$, the existence of such edges in $\widetilde{\bm{\nu}}^{-}$ implies that $\widetilde{\bm{\nu}}^{-}$, and therefore $\widetilde{\bm{\nu}}$, is eventually alternating. Therefore the maximum $\beta$ priority of the base nodes of $\widetilde{\bm{\nu}}$ is even (by~\cref{lemma:alternating-play-priority}). However, by~\cref{lemma:nu-bold-nu} we have that the base nodes of $\widetilde{\bm{\nu}}$ are exactly the nodes of $\widetilde{\nu}$. Since $\nu$ is $\forall$-unfair and $\forall$-winning, the maximum $\beta$ priority seen in $\widetilde{\nu}$ is odd. This yields a contradiction. 

\enskip

\noindent \ref{item:D3} As $\widetilde{\nu}$ is $\forall$-fair, for every universal gadget node $\mathbf{u} = (w, p, b)$ whose rightmost branch is taken in $\widetilde{\bm{\nu}}^{-}$ (as before, we know $n_{\bm{u}} = k+1$ and $b = \exists$)  
 the edge $w \to \bar{w}$ exists in $\widetilde{\nu}$
 where $\bar{w} = s'(\mathbf{u}^\exists_{k+1})|_V$\textemdash a fair successor of $w$.

On the other hand, as in every iteration in the construction of $\bm{\nu}$ we chose $\bm{u}^i$ to be the minimal, the rightmost edge of a node $\bm{u}$ is taken in $\bm{\nu}^-$ only if the properties in Items~\ref{item:u-1}-\ref{item:u-3} could not be satisfied for $\bm{u}$. That is,
there exists no $t$-compliant extension $\kappa$ of $\nu^{i-1} \cdot \varsigma^i$ that has a prefix $\mathcal{C}^{w^i}$ that is a cycle over $w^i$ and is continued by $\overline{w}^i$; and a cycle $\bm{\mathcal{C}}^{\mathbf{u}}$ in $LG'_{s'}$ over $\mathbf{u}$ that visits the same base nodes as $\mathcal{C}^{w^i}$. 

To yield a contradiction, we will construct such $\kappa$, $\mathcal{C}^{w}$ and $\bm{\mathcal{C}}^{\mathbf{u}}$ for $\mathbf{u}$. 

\enskip

As $\mathbf{u} = (w,p,b)$ is in $\widetilde{\bm{\nu}}^{-}$, it is in $\bm{\varsigma}^{j'}$ for some $j' > j$.

\smallskip

Then, $\bm{\nu}^{j'-1} = \bm{\varsigma}^1 \cdot \bm{\mathcal{C}}^{\mathbf{u}^1} \cdots \bm{\varsigma}^{j'-1} \cdot \bm{\mathcal{C}}^{\mathbf{u}^{j'-1}} \cdot \mathbf{u}^{j'-1}_{k+1} \cdot (\mathbf{u}^{j'-1}_{k+1})^{\exists} \cdot \bar{\mathbf{u}}^{j'-1}$. 

\enskip 

Similarly, $\nu^{j'-1} = \varsigma^1 \cdot \mathcal{C}^{w^1} \cdots \varsigma^{j'-1} \cdot \mathcal{C}^{w^{j'-1}} \cdot \bar{w}^{j'-1}$.

\enskip 

Let $\bm{\varsigma}^{j;}_{\leftarrow\mathbf{u}}$ be the prefix of $\bm{\varsigma}^{j'}$ until $\mathbf{u}$, and let $\bm{\varsigma}^{j'}_{\rightarrow\mathbf{u}}$ be its suffix starting with $\mathbf{u}$. That is, $\bm{\varsigma}^{j'} = \bm{\varsigma}^{j'}_{\leftarrow\mathbf{u}} \cdot \bm{\varsigma}^{j'}_{\rightarrow\mathbf{u}}$.

\smallskip

As for all $i$, $\bm{\varsigma}^i$ is an expansion of $\varsigma^i$; 
$\nu^{j'-1} \cdot \varsigma^{j'}_{\leftarrow\mathbf{u}}$ is a $t$-compliant finite play in $G$, where $\varsigma^{j'}_{\leftarrow \mathbf{u}}$ is the projection of $ \bm{\varsigma}^{j'}_{\leftarrow \mathbf{u}}$ to $G$. 

\smallskip

Then, the tail $\kappa = \varsigma^{j'}_{\rightarrow\mathbf{u}} \cdot \mathcal{C}^{w^{j'}} \cdot \varsigma^{j'+1} \cdot  \mathcal{C}^{w^{j'+1}}  \cdots $ of $\nu$ is an extension of $\nu^{j'-1} \cdot \varsigma^{j'}_{\leftarrow\mathbf{u}}$, as required.\footnote{Here, $\varsigma^{j'}_{\mathbf{u}}$ serves as $\varsigma^i$ where $\mathbf{u} = \mathbf{u}^i$. } $\kappa$, being a tail of $\widetilde{\nu}$, takes the edge $w \to \bar{w}$ infinitely often. Consequently, we can find a prefix $\mathcal{C}^{w}$ of $\kappa$ that is a cycle over $w$, and $\kappa = \mathcal{C}^{w} \cdot \bar{w} \cdots$. By allowing the cycle to be longer, we can ensure that $\mathcal{C}^{w}$ visits all nodes of $\widetilde{\nu}$. 
Further, as $\widetilde{\bm{\nu}}$ visits $\mathbf{u}$ infinitely often, we can pick a long enough cycle $\bm{\mathcal{C}}^{\mathbf{u}}$ that contains all nodes in $\widetilde{\bm{\nu}}$, and therefore all the base nodes in $\widetilde{\nu}$. This makes sure that base nodes of $\bm{\mathcal{C}}^{\mathbf{u}}$ and the nodes of $\mathcal{C}^{w}$ agree.

With this, the rightmost edge of $\mathbf{u}$ cannot be taken in $\widetilde{\bm{\nu}}^{-}$ in the first place. This yields the desired final contradiction, and ends the proof. 
\qed
\subsubsection{Strategy sizes of fair parity/parity games}\label{subsec:parity-parity-strategy-size}

We obtain the following bound on strategy sizes for fair parity/parity games.

\begin{lemma}\label{lem:strategysizes}
Let $G$ be a fair parity/parity game on $n$ nodes.
Then for both players the memory requirement of
winning strategies in $G$ is at most $n^2\cdot n^n$.
Furtermore, for each player a family of fair parity/$\bot$
games $(G_n)_{n\in\mathbb{N}}$ exists such that for all $n$, every winning
strategy for the respective player requires memory at least $2^n$.
\end{lemma}
\begin{proof}[Proof of~\cref{lem:strategysizes}]
For the upper bound, we note that in a winning $i$-strategy for a fair parity/parity game, as constructed in the proof
of~\cref{thm:fairParityPlusParityToParity}, the nodes in $V_i \setminus \Vfair$ have strategies with
quadratic memory, but the nodes in $\Vfair_i$ may have to traverse all their fair successors, and possibly one more successor.
In the worst case, this requires an additional local memory of $|E_f(v)| + 1\leq n$ for each $v \in \Vfair_i$, and causes an exponential blowup in the overall memory required.

\begin{linenomath}
For the lower bound, we consider the case for $\exists$-player; the result for $\forall$-player is obtained by switching the player's roles. Define the family $(G_n)_{n\in\mathbb{N}}$ of games by letting $G_n$ (for $n\in\mathbb{N}$) have exactly $n+1$ nodes, one node $x$ owned by $\forall$-player and $n$ nodes $y_i$ owned by $\exists$-player;
let there be an edge from $x$ to every node $y_i$ and two \emph{fair} edges from every node $y_i$ back to $x$. Let all nodes have priority $0$. Then every winning $\exists$-strategy in $G_n$ necessarily is $\exists$-fair. There is a fair $\exists$-strategy $s$ that
uses one bit as local memory for each node $y_i\in \Vfair_\exists$, and therefore uses memory of overall size $2^n$.
The claim follows since there is no $\exists$-fair strategy that
uses less memory than $s$, which is shown by induction on $n$.
\end{linenomath}
\end{proof}

\section{Fixpoint Characterization of Winning Regions}~\label{section:fixpoints}


In this section, we characterize the winning regions in 
fair games with parity conditions by means of fixpoint expressions.
Thereby we provide an alternative, symbolic route to solve such games, rather than by reducing
to parity games. We start by briefly recalling
details on Boolean fixpoint expressions.

\myparagraph{Fixpoint expressions and fixpoint games.}\label{definition:fpgames}
Let $U$ be a finite set, let $o$ be a natural number and let
$f:\mathcal{P}(U)^o\to \mathcal{P}(U)$ be
a monotone function, that is, assume that whenever we have
sets $X_j,Y_j\subseteq U$ such that
$X_j\subseteq Y_j$ for all $1\leq j\leq o$, then $f(X_1,\ldots,X_o)\subseteq
f(Y_1,\ldots,Y_o)$.
Then the data $f$ and $o$ induce the \emph{fixpoint expression}
\begin{equation}
e=\eta_o X_o.\,\eta_{o-1} X_{o-1}.\,\ldots.\nu X_2.\,\mu X_1.\,f(X_1,\ldots,X_o)
\end{equation}
where $\eta_i=\nu$ if $i$ is even and $\eta_i=\mu$ if $i$ is odd.
We define the semantics of fixpoint expressions using standard parity games.
Given a fixpoint expression $e$, the associated \emph{fixpoint game}
$G_{e}=(W_\exists,W_\forall,E, \text{Parity}(\kappa))$ for the priority function $\kappa:W_\exists\cup W_\forall\to\mathbb[o]$ is the following
parity game.
We put $W_\exists=U\times\{1,\ldots ,o\}$, $W_\forall=\mathcal{P}(U)^o$.
Moves and priorities are defined by
\begin{align*}
E(v,i)&=\{\overline{Z}\in W_\forall\mid v\in f(\overline{Z})\} & \kappa(v,i)&=i\\
E(\overline{Z})&=\{(v,i)\mid v\in Z_i\} &
\kappa(\overline{Z})&=0
\end{align*}
for $(v,i)\in W_\exists$ and $\overline{Z}=(Z_1,\ldots, Z_o)\in W_\forall$.
Then we say that $v\in U$ is \emph{contained} in $e$ (denoted $v\in e$) if and only
if $\exists$-player wins the node $(v,1)$ in $G_{e}$.

\begin{remark}
    The above game semantics for fixpoint expressions has been shown 
    to be equivalent to the more traditional Knaster-Tarski semantics~\cite{BaldanEA19};
    the cited work takes place in a more general setting and therefore uses slightly more verbose 
     parity games.
\end{remark}

\subsection{Winning Regions in Standard Parity Games}\label{definition:parityfixpoint}

\leavevmode\par\medskip

We start by recalling the well-known fixpoint characterization of 
winning regions in parity games. So let $G=(A,\text{Parity}(\lambda))$ be a parity game where $A = (V_\exists,V_\forall,E)$ is a game arena, $V = V_\exists \dotcup V_\forall$ and  $\lambda: V \to[2k]$ a priority function. 
To be able to write fixpoint expressions over such games
we define monotone operators on subsets of $V$ by putting
\begin{align*}
	\Diamond X &= \{v\in V\mid E(v)\cap X\neq\emptyset\} &
	\Box X &= \{v\in V\mid E(v)\subseteq X\} \\
\end{align*}
for $X\subseteq V$ and also put 
\begin{align*}
	\Cpre(X)=(V_\exists\cap\Diamond X)\cup(V_\forall\cap\Box X).
\end{align*}

Thus $\Diamond X$ and $\Box X$
are the sets of nodes that have an outgoing edge leading to $X$,
and for which all outgoing edges lead to $X$, respectively. 
and $\Cpre(X)$ is the set of nodes
from which $\exists$-player can force the game to reach a node
from $X$ in one step.
Also, we define $C_i=\{v\in V\mid \lambda(v)=i\}$ for $1\leq i\leq 2k$,
that is, $C_i$ is the set of game nodes with priority $i$.

	Using this notation, we define a function $\mathsf{parity}:\mathcal{P}(V)^{2k}\to\mathcal{P}(V)$ by putting
\begin{align*}
	\mathsf{parity}(X_1,\ldots, X_{2k}):= (C_1 \cap \Cpre(X_1)) \cup \ldots \cup  (C_{2k} \cap \Cpre(X_{2k}))
\end{align*}
for $(X_1,\ldots ,X_{2k})\subseteq \mathcal{P}(V)^{2k}$.
This function is monotone and it is well-known (see e.g~\cite{Walukiewicz02}) 
that the fixpoint expression 
\begin{align*}
\nu_{2k} X_{2k}.\,\mu_{2k-1} X_{2k-1}.\,\ldots.\nu X_2.\,\mu X_1.\,\mathsf{parity}(X_1,\ldots, X_{2k})
\end{align*}
induced by the data $\mathsf{parity}$ and $2k$ characterizes the winning region
in parity games with priorities $1$ through $2k$.
Here, the function $\mathsf{parity}$ encodes the evaluation of one step in the parity game and the
alternating arrangement of least and greatest fixpoints encodes the objective that the maximal priority that occurs infinitely often is even.

\subsection{Winning Regions in Fair Parity/$\bot$ Games}\label{definition:paritybotfixpoint}

\leavevmode\par\medskip

Next we present a fixpoint characterization of the winning
regions in fair games of the form $G=(A,\text{Parity}(\lambda),\bot)$ where $A = (V_\exists,V_\forall,E,E_f)$ is a fair game arena, $V = V_\exists \dotcup V_\forall$ and  $\lambda: V \to[2k]$ a priority function. 

In the upcoming fixpoint characterization,
the function $\mathsf{parity}$ will still apply to `normal' nodes $\Vn$,
but we also introduce additional functions
\begin{align*}
		\Diamond_f X &= \{v\in V\mid E_f(v)\cap X\neq\emptyset\} &
\Box_f X &= \{v\in V\mid E_f(v)\subseteq X\}
\end{align*}
that specialize $\Diamond$ and $\Box$ to fair edges.

To obtain a fixpoint characterization, we directly follow the minimal
gadget constructions for fair parity/$\bot$ games as shown in~\cref{fig:optimalexistentialparitybotgadgets}.
To this end, we define additional functions that encode
the inner gadget nodes.
For $1\leq i < k$, put
\begin{align*}
\Apre_{\exists}(X_i,X_{i+1})&=\Diamond X_i\cap \Box_f X_{i+1} &
\Apre_{\forall}(X_i,X_{i+1})&=\Diamond_f X_i\cap \Box X_{i+1},
\end{align*}
encoding nodes $(v^\forall,2i)$ for
$v\in \Vfair_\exists$ and $v\in \Vfair_\forall$, respectively
(here, $\mathsf{Apre}$ stands for \emph{alternative} predecessor function, as it encodes the additional $\forall$-choice of whether a fair edge is to be taken).
Then, given an even number $p$, we let $I_p=\{ i\mid i \text{ odd}, p \leq i< 2k\}$
denote the set of odd priorities that lie strictly between $p$ and $2k$,
and put 
\begin{align*}
\varphi^\fair_{\exists, p}&=
 \begin{cases} \bigcup_{i\in I_p}\Apre_{\exists}(X_{i},X_{i+1})\,\cup\, \Diamond\, X_{2k+1} & \text{$p$ is odd} \\
\bigcup_{i\in I_p}\Apre_{\exists}(X_{i},X_{i+1})\, \cup \,\Diamond\, X_{2k+1} \cup\Box_f \,X_p & \text{$p$ is even},
\end{cases}\\
\varphi^\fair_{\forall, p}&=
 \begin{cases} \bigcup_{i\in I_p}\Apre_{\forall}(X_{i},X_{i+1}) & \text{$p$ is odd} \\
\bigcup_{i\in I_p}\Apre_{\forall}(X_{i},X_{i+1})\, \cup \Box \,X_p & \text{$p$ is even}
\end{cases}
\end{align*}

 Using this notation, the winning region for the existential player in fair parity/$\bot$ games with priorities $1$ through $2k$ can be characterized by the fixpoint expression induced by $2k+1$ and the function
$\chi$ that is defined to
 map $(X_1,\ldots, X_{2k+1})\in\mathcal{P}(V)^{2k+1}\to\mathcal{P}(V)$
 to the set
 \begin{align*}
 \chi(X_1,\ldots, X_{2k+1})= &(\Vn \,\cap\, \mathsf{parity})\,\cup\\
& (\Vfair_\exists\, \cap \, \bigcup_{i \in [2k]}\, C_i \,\cap \,\varphi^\fair_{\exists, i})\, \cup \\ &(\Vfair_\forall\, \cap \, \bigcup_{i \in [2k]}\, C_i \,\cap \,\varphi^\fair_{\forall, i})
 \end{align*}
The function $\chi$ therefore treats normal nodes from
$V^\mathsf{n}$ in the same way as nodes in standard parity games
are treated, but for fair nodes with priority $i$, the 
functions $\varphi^\mathsf{fair}_{\exists,i}$ and
$\varphi^\mathsf{fair}_{\forall,i}$ are used to encode
the respective gadget construction.
The full fixpoint expression then is
\begin{align}\label{eq:mutuallyfairparityfp}
 e=\mu X_{2k+1}.\, \nu X_{2k}.\, \mu X_{2k-1}\,\ldots \,\nu X_2.\,\mu X_1. \,
 \chi(X_1,\ldots, X_{2k+1})
\end{align}

Next we show that this fixpoint expression
characterizes the winning region of $\exists$-player in fair parity/$\bot$ games.

\begin{restatable}{theorem}{restatablefairparityfpcorrectness}\label{theorem:parityfpcorrectness} Let $G=(A,\text{Parity}(\lambda),\bot)$ where $A = (V_\exists,V_\forall,E,E_f)$ is a fair game arena, $V = V_\exists \dotcup V_\forall$ and  $\lambda: V \to [2k]$ is a priority function. 
	Then the fixpoint expression given in~\eqref{eq:mutuallyfairparityfp} characterizes $\Win_\exists(G)$.
\end{restatable}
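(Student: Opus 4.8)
The plan is to establish the correctness of the fixpoint characterization~\eqref{eq:mutuallyfairparityfp} by connecting it to the reduction of \cref{thm:fairParityToParity}. The key observation is that the function $\chi$ is \emph{by design} a symbolic encoding of the parity game $G'$ obtained from the gadget construction of \cref{fig:existentialgadgetsparitybot}: the subexpressions $\phi^\fair_{\exists,p}$ and $\phi^\fair_{\forall,p}$ encode exactly the choices available in the existential and universal gadgets (picking an odd priority $i$, then either allowing the opponent a fair move at the cost of visiting $i+1$, or retaining control and signalling $i$ or $\max_\odd$), while the $\mathsf{parity}$ component handles normal nodes as in an ordinary parity game. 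So the first step is to make this correspondence precise: I would show that the fixpoint game $G_e$ associated with $e$ (in the sense of the fixpoint-game semantics recalled above) is, up to bisimulation / winning-region-preserving transformation, the parity game $G'$ from \cref{thm:fairParityToParity}. Concretely, each base node $v\in\Vn$ in $G_e$ plays the role of $v$ in $G'$; each $v\in\Vfair$ in $G_e$ unfolds (through the $\mathcal P(U)^{2k+1}$-valued $\forall$-moves and the structure of $\phi^\fair_{\cdot,p}$) into the three-level gadget for $v$, where the choice of the $\forall$-player's set tuple $\overline Z$ corresponds to the combined gadget decisions and the edge priority $\kappa(\overline Z,v)=\max\{i\mid v\in Z_i\}$ reproduces the priority that the gadget node signals.

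**Second, I would verify this correspondence component by component.** For a normal node $v\in\Vn$: the clause $\Vn\cap\mathsf{parity}(X_1,\dots,X_{2k})$ is the standard parity fixpoint operator, and it is folklore (cited as~\cite{Walukiewicz02}) that its induced fixpoint game has the same winning region as the parity game restricted to normal moves with priorities $\lambda(v)$; here one must also check that the extra outermost variable $X_{2k+1}$ (present because $G'$ uses priority $2k+1$) is handled correctly, i.e.\ that at normal nodes $X_{2k+1}$ is simply never used as a target, matching the fact that normal nodes never have priority $2k+1$ in $G'$. For a fair existential node $v\in\Vfair_\exists$ with $\lambda(v)=p$: I would check that $C_p\cap\phi^\fair_{\exists,p}$ offers exactly the moves of the top gadget of \cref{fig:existentialgadgetsparitybot} — the disjunct $\Apre_\exists(X_i,X_{i+1})=\Diamond X_i\cap\Box_f X_{i+1}$ over odd $i\in I_p$ encodes, for each branch $i$, ``$\exists$ picks some ordinary successor in $X_i$ at priority $i$, while every fair successor lies in $X_{i+1}$ at priority $i+1$'' (this is precisely the gadget node $(v^\forall,i-1)$ with children $(\vout,i-1)$ and $(\vout,i)$, remembering the $\llambda$-shift when $\lambda(v)$ is even so branches start at $\llambda(v)$), the disjunct $\Diamond X_{2k+1}$ encodes the rightmost branch ``play unfairly, signal $\max_\odd=2k+1$'', and — when $p$ is odd — the extra disjunct $\Box_f Y_p$ accounts for the possibility of the owner exploring all their own fair edges. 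The dual check for $v\in\Vfair_\forall$ against the bottom gadget is symmetric. Once this structural identity (or winning-region equivalence) between $G_e$ and $G'$ is in place, \cref{thm:fairParityToParity} gives $\Win_\exists(G)=\Win_\exists(G')\cap V$, and the fixpoint-game semantics gives $\{v\mid v\in e\}=\Win_\exists(G_e)\cap V$, so $v\in e \iff v\in\Win_\exists(G)$, which is the claim.

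**The main obstacle** I anticipate is making the correspondence between $G_e$ and $G'$ genuinely rigorous rather than merely suggestive. The fixpoint game $G_e$ has a single $\exists$-node per element of $V$ followed by a huge $\forall$-branching over all set-tuples $\overline Z$, whereas $G'$ has an explicit three-level bounded-branching gadget; one must argue that the ``irrelevant'' freedom in choosing $\overline Z$ (choices that are dominated, or that lead to losing sink behavior) does not affect the winning region, and that the edge-priority convention $\kappa(\overline Z,v)=\max\{i\mid v\in Z_i\}$ correctly simulates the \emph{node} priorities appearing inside the gadgets. This is essentially a priority-bookkeeping argument: one shows that an optimal $\forall$-strategy in $G_e$ will always choose $\overline Z$ so that $v$ appears in exactly the $X_i$'s dictated by one gadget branch, so the max collapses to the intended single gadget priority. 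A cleaner alternative, which I would pursue if the direct route gets unwieldy, is to bypass $G'$ entirely and prove correctness \emph{directly} by the unfolding/signature method: show (i) soundness, that from any $v\in e$ the $\exists$-player can extract a strategy realizing~\eqref{eq:Bone} by reading off the fixpoint witnesses (using the $\mu$-ranks on $X_{2k+1}$, $X_{2k-1}$, \dots to guarantee $\exists$-fairness and on the odd variables to bound how long bad priorities persist), and (ii) completeness, that a winning $\exists$-strategy for~\eqref{eq:Bone} induces a winning $\exists$-strategy in $G_e$. Either way the technical heart is the same fairness-to-priority translation already exploited in \cref{thm:fairParityToParity}, so I would lean on that theorem's machinery (and the appendix proofs it references) as much as possible rather than redoing it.
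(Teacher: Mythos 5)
Your proposal follows essentially the same route as the paper's proof: the appendix establishes that $\exists$-player wins $v$ in the gadget game $G'$ of \cref{thm:fairParityToParity} iff $v\in e$, by mutually transforming positional winning strategies between $G'$ and the fixpoint game $G_e$ (restricting attention, as you anticipate, to ``economical'' choices of $\overline Z$ so that the edge priority $\max\{i\mid w\in Z_i\}$ collapses to the intended single gadget priority), and then concludes via \cref{thm:fairParityToParity}. Your component-by-component reading of $\phi^\fair_{\exists,p}$ and $\phi^\fair_{\forall,p}$ as encodings of the gadget branches is exactly the correspondence the paper uses, so the plan is sound.
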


\begin{proof}[Proof of~\cref{theorem:parityfpcorrectness}]
	The proof is by mutual transformation of winning strategies in the game 
	$G'$ from the proof~\cref{thm:fairParityToParity} on one hand, and in the semantic game $G_e$ for~\eqref{eq:mutuallyfairparityfp} on the other hand.
	
	Let $G' = (V'_\exists, V'_\forall, E', \Omega:V' \to [2k+1])$ be the parity game equivalent to $G$, given in the proof of~\cref{thm:fairParityToParity} that is depicted by the existential gadgets in~\cref{fig:existentialgadgetsparitybot}.
	Recall that the nodes in $V'$ are in the shape of $v, v_i, v_i^\forall$ or $v_i^\exists$ for $v \in V$; and that nodes of shape $v \in V$ are won by a player in $G'$ if and only if they are won by the same player in the fair parity/$\bot$ game $G$.  Recall that we denote nodes $v_i$
	also by $(v,i)$.
	By the semantics of fixpoint expressions given in~\cref{definition:fpgames},
	it suffices to show that
	$\exists$-player wins $v \in V$ in $G'$ if and only if $\exists$-player wins $v$ in $G_e$, i.e. $v\in e$.

	In $G_e$, $\exists$-player can move from
	nodes $v\in V$ to every tuple $(Z_1, \ldots, Z_{2k+1})$ 
	such that $v\in\chi(Z_1,\ldots, Z_{2k+1})$, that is,
	to every tuple $(Z_1, \ldots, Z_{2k+1})$ that satisfies the following conditions.
	\begin{align}
		\intertext{If $v \in \Vn$ with priority $i \in [2k]$ then}
		& v \in \Cpre(Z_i) \text{ from }\mathsf{parity}\label{line:Vn}
		\intertext{
			If $v \in \Vfair_\exists$ with \emph{odd} priority $i \in [2k]$, then we have } 
		& v \in \varphi^{\fair}_{\exists, i} \,=\, \Apre_\exists(Z_i, Z_{i+1}) \,\cup \,\ldots \,\cup\, \Apre_\exists(Z_{2k-1}, Z_{2k}) \,\cup\, \Diamond\, Z_{2k+1} \label{line:oddphiexists}
		\intertext{If $v \in \Vfair_\exists$ with \emph{even} priority $i \in [2k]$, then we have}  
		& v \in \varphi^{\fair}_{\exists, i} \,= \,\Box_f \,Z_i\, \cup \,\Apre_\exists(Z_{i+1}, Z_{i+2})\, \cup \,\ldots\, \cup \,\Apre_\exists(Z_{2k-1}, Z_{2k}) \,\cup\, \Diamond\, Z_{2k+1} \label{line:evenphiexists}
		\intertext{If $v \in \Vfair_\forall$ with \emph{odd} priority $i \in [2k]$, then we have} 
		& v \in \phi^{\fair}_{\forall, i}\, =\, \Apre_\forall(Z_i, Z_{i+1}) \,\cup\, \ldots \,\cup \,\Apre_\forall(Z_{2k-1}, Z_{2k}) \label{line:oddphiforall}
		\intertext{If $v \in \Vfair_\forall$ with \emph{even} priority $i \in [2k]$, then we have}  
		& v \in \phi^{\fair}_{\forall, i} \,= \,\Box \,Z_i \,\cup \,\Apre_\forall(Z_{i+1}, Z_{i+2})\, \cup\, \ldots\, \cup\, \Apre_\forall(Z_{2k-1}, Z_{2k}) \label{line:evenphiforall}
	\end{align}
	Note that for a $v$ with priority $i$, it is sufficient to pick $Z_j=\emptyset$
	for all for $j < i$, and in particular $Z_j = \emptyset$ for $j \neq i$ when $v \in \Vn$. 
	Since existential player owns nodes $(v,i)$ and moves to a universal player node $\bar{Z} = (Z_1, \ldots, Z_{2k+1})$ where $ v \in \chi(\bar{Z})$,
	without loss of generality, we can assume $\exists$-player always makes these more economical moves, restricting the options of the universal player.  
	As a reaction to $\exists$-player moving from
	$v$ to $\bar{Z}$, $\forall$-player can challenge every node
	$w$ that is contained in some set $Z_j$;
	since $\exists$-player moves economical, the move from
	$\bar{Z}$ to $w$ signals a priority $j \geq i$.
	
	So let $s'$ be a positional $\exists$-strategy in $G'$ that wins the game from every node in the winning region of $\exists$-player in $G'$. 
	We construct a positional $\exists$-strategy $s$ in $G_{e}$ as follows. We put $s(v)=\bar{Z}$,
	where the tuple $\bar{Z}= (Z_1, \ldots, Z_{2k+1})$ is constructed 
	by case distinction.
	
	\begin{itemize}
		\item For $v \in \Vn$, put $Z_{\lambda(v)} = \begin{cases} E(v) \quad  &\text{if} \quad v \in V_\forall \\ \{s'(v)\} \quad &\text{if} \quad v\in V_\exists, \end{cases}$
		and $Z_j = \emptyset$ for all other $j$.
		\item For $v \in \Vfair_\exists$, we have $s'(v) = v_i$ for some $i \in [\mathbf{c}, k+1]$ where $\mathbf{c} = \lceil \lambda(v) \backslash 2 \rceil$.
		If $i \in [\mathbf{c}+1, k]$ or $\lambda(v)$ is odd, then $v_i$ has two successors $v_i^\exists \in V'_\exists$ and $v_i^\forall \in V'_\forall$. If $i=k+1$ then it has one successor $v_{k+1}^\exists \in V'_\exists$; and
		if $i = \mathbf{c}$ and $\lambda(v)$ is even then $v_i$ has 
		one successor $v_\mathbf{c}^\forall \in V'_\forall$. 
		Note that $\exists$-player wins all the successors of $v_i$ using $s'$.
		We put 
		\begin{align*}
			&Z_{2i-1} = \{s'(v_i^\exists)\} &&\text{ unless } i = \mathbf{c} \text{ and }\lambda(v) \text{ is even, in which case }Z_{2i-1} = \emptyset  \\
			&Z_{2i} = E_f(v)  &&\text{ unless }i=k+1, \text{ in which case }Z_{2i} \text{ does not exist}
		\end{align*}
		and $Z_j = \emptyset$ for all other $j$.
		\item For $v \in \Vfair_\forall$, we have $s'(v) = v_i$ for $i \in [\mathbf{c}, k]$ for which the same constaints apply. We put
		\begin{align*}
			&Z_{2i-1} = \{s'(v_i^\exists)\} &&\text{ unless } i = \mathbf{c} \text{ and } \lambda(v) \text{ is even, in which case } Z_{2i-1}=\emptyset  \\
			&Z_{2i} = E(v)  &&
		\end{align*}
		and $Z_j = \emptyset$ for all other $j$.
	\end{itemize}
	We note that $s$ is indeed a valid strategy for $G_e$ 
	since we have $v\in\chi(s(v))$ in all cases.
	
	It remains to show that $s$ wins every node $v$ in $G_e$
	that is won by $s'$ in $G'$. First we make a small observation about the plays compliant with $s$ in $G_e$.
	Whenever $\exists$-player moves to a node $\bar{Z}$, there is either one non-empty set $Z_j$ for $j \in [2k+1]$; or there are two non-empty sets $Z_j$ and $Z_{j+1}$ where in the latter case $j$ is odd.
	When $\forall$-player picks the next node from such a $\bar{Z}$, in the first case it moves to $(w', j)$ for some $w' \in Z_j$ signalling priority $j$; whereas in the second case it can signal either $j$ or $j+1$ depending on the successor $(w', \cdot)$ and which set the $w'$ is inside of. 
	We assume without loss of generality that, whenever $\forall$-player moves to a node $(w', \cdot)$ with $w' \in Z_{j} \cap Z_{j+1}$, it moves to $(w', j)$, since the universal player prefers seeing odd priorities.
	
	Let $\pi$ be a play in $G_{e}$ that adheres to $s$ and starts at a node $v$ such that $(v, 1, \exists)$ is in the winning region of $\exists$ in $G'$.
	
	Then $\pi$ induces a play $\rho$ of $G'$ that adheres to $s'$ as follows:
	whenever $\exists$-player moves from $w$ to $s(w)=\bar{Z}$, 
	$\forall$-player picks a $j$ such that $Z_j \neq \emptyset$ and some $w'\in Z_j$ signalling
	priority $j$. 

	We construct $\rho$ as follows:
	\begin{itemize}
		\item if $w\in \Vn$, then $\rho$ proceeds from $w$ to $w'$;
		\item if $w\in \Vfair$, then $\rho$ proceeds from
		$w$ to $s'(w)=w_i$. If $w'=s'(w_i^\exists)$ and if $w_i$ has a successor $w_i^\exists$ (i.e. $i\neq \mathbf{c}$ with $\lambda(v)$ even),
		then $\rho$ proceeds from $w_i$ to $w_i^\exists$
		and from there to $w'$.
		If $w'\neq s'(w_i^\exists)$ or $w_i^\exists$ does not exist,
		then $\rho$ proceeds from $w_i$ to $w_i^\forall$
		and from there to $w'$. In the first case, $\rho$ visits priority $2i-1$, whereas in the second case it visits $2i$.
	\end{itemize}
	It is not hard to see that $\rho$ visits a node with priority $j$ whenever $\pi$ visits a node $(w', j)$, signalling $j$.
	For $v \in \Vn$, this is trivial since only $Z_{\lambda(v)}$ is non-empty. For $v \in \Vfair$, this is due to the assumption we have on $s'$ and on the construction of $s$:
	Both $s'$ and $s$ prefer moves that signal the odd priority $j$ whenever $w' = succ(w)$. Note that the priority $\lambda(v)$ of a node $v \in \Vfair$ in $\pi$ does not 
	affect the winning condition since in the minimal gadgets that we are mimicing, each visit to a fair node $v$ is followed by a visit to a gadget node with priority $\geq \lambda(v)$. 
	As $\rho$ adheres to $s'$ and $s'$ is a winning strategy,  we conclude that $\exists$-player wins $\pi$.
	
	\bigskip
	For the converse direction, let $s: W_\exists \to W_\forall$ be a positional winning $\exists$-strategy in $G_{e}$. We define a positional
	strategy $s'$ for $\exists$-player in $G'$ as follows.
	For each $v$ in $G_e$ such that $\exists$-player wins from $v$
	with strategy $s$,
	we have $s(v) = \bar{Z}$ for some $\bar{Z} = (Z_1, \ldots, Z_{2k+1})$ such that
	\begin{itemize}
		\item If $v \in \Vn\cap V_\exists$, then we have $v\in \Cpre(Z_{\lambda(v)})$ by~\cref{line:Vn}. In this case pick some
		$w \in Z_{\lambda(v)}$ and put $s'(v)=w$.\label{item:Vn}
		
		\item If $v \in \Vfair_\exists$, then we have by~\cref{line:oddphiexists} and~\cref{line:evenphiexists} that $v \in \Apre_\exists(Z_{2i-1}, Z_{2i})$ for some $2i-1 \in [\lambda(v), 2k]$, or $\lambda(v)$ is even and $v \in \Box_f \,Z_{\lambda(v)}$, or $v \in \Diamond \, Z_{2k+1}$.
		\\ In this case we put
		\begin{equation*}
			s'(v) = \begin{cases} v_i &\text{ if } v \in \Apre_\exists(Z_{2i-1}, Z_{2i})\\
				v_{\mathbf{c}} &\text{ if } \lambda(v) \text{ is even and } v \in \Box_f\, Z_{\lambda(v)} \\
				v_{k+1} &\text{ if } v \in \Diamond\, Z_{2k+1} \end{cases}
		\end{equation*}
		In the first case, $v_i$ has a successor $v_i^\exists \in V_\exists$. We put $s(v_i^\exists)=w$ for some $w\in Z_{2-1}$. Similarly in the last case, we put $s(v_{k+1})=w$ for some $w \in Z_{2k+1}$.
		
		\item If $v \in \Vfair_\forall$, then we have by~\cref{line:oddphiforall} and~\cref{line:evenphiforall} that $v \in \Apre_\forall(Z_{2i-1}, Z_{2i})$ for some odd $2i-1 \in [\lambda(v), 2k]$, or $\lambda(v)$ is even and $v \in \Box \,Z_{\lambda(v)} $.
		\\ In this case we put
		\begin{equation*}
			s(v) = \begin{cases} v_i & \text{ if } v \in \Apre_\forall(Z_{2i-1}, Z_{2i})\\
				v_\mathbf{c} & \text{ if } \lambda(v) \text{ is even and } v \in \Box\, Z_{\lambda(v)} \end{cases}
		\end{equation*}
		In the first case $v_i$ has a successor $v_i^\exists \in V_\exists$ and we put $s(v_i^\exists)=w$ for some $w\in Z_{2i-1}$.
	\end{itemize}
	\begin{linenomath}
		To see that $s'$ is a winning strategy, let $\rho$ be a play that adheres to $s'$ and starts at some node $v$ such that $\exists$ wins $v$ in $G_{e}$ by following strategy $s$. Then by construction, $\rho$ induces a play $\pi$ in $G_{e}$ that adheres to $s$. As $s$ is a winning strategy, the maximum priority $m$ that is signalled infinitely often in $\pi$ is even. By definition of $G_{e}$,
		$\pi$ infinitely often takes a transition of the shape
		$\bar{Z} \to (w, m)$ where by our assumption, $w \in Z_{m} \setminus Z_{m-1}$. Furthermore, transitions that signal a priority greater than
		$m$ occur only finitely often in $\pi$.
		Observe that whenever $\pi$ takes such a transition, it translates to $\rho$ either as a node $v\in \Vn$ with $\lambda(v) = m$ being visited, or as taking a path $v \to v_{m\backslash 2} \to v^\forall_{m\backslash 2} \to w \in Z_m \setminus Z_{m-1}$ from some node $v$ with $\lambda(v)\leq m$. This path sees $m$ as the maximum priority due to $v^\forall_{m \backslash 2}$. Similarly, a larger priority $m'>m$ is not seen in $\rho$ infinitely often either, since this would reflect to $\pi$ as some node $(w', m')$ being visited infinitely often by $\pi$. 
		Thus, the maximum priority seen infinitely often in $\rho$ is also $m$, making it a winning play for $\exists$-player.
	\end{linenomath}
\end{proof}

We point out that for $\forall$-fair parity games (with $\Vfair_\exists=\emptyset$),~\cref{eq:mutuallyfairparityfp} instantiates
to the fixpoint characterization given in~\cite{banerjee2022fast}; it follows that
the parity game reductions from~\cref{section:reductionparity} apply to the one-sided fair parity games considered in~\cite{banerjee2022fast} as well.

\subsection{Winning Regions in Fair Parity/Parity Games}\label{definition:parityparityfixpoint}

\leavevmode\par\medskip

For fair parity/parity games, we obtain a similar fixpoint characterization,
encoding the reduction to parity games presented in
\cref{subsection:reductionparityparity} along the lines of Figures~\ref{fig:existentialgadgetparityparity} and~\ref{fig:universalgadgetparityparity}.
So let $V$, $E$ and $E_f$ be the set of nodes, edges and fair edges, respectively, in a fair parity/parity game $G$ with priority
functions $\lambda:V\to[2k]$ and $\Gamma:V\to[d]$.
Recall that
nodes in the game $G'$ from~\cref{thm:fairParityPlusParityToParity}
are annotated with memory values $p\in[d]$ and $b\in[2]$
in which the game $G'$
records the maximal $\Gamma$-priority that has recently been seen and the
player that has last insisted on keeping control in one of its fair nodes.
Hence all involved functions in the fixpoint expression work over (subsets of) the set $V\times [d]\times [2]$ of base nodes,
consisting of game nodes that are annotated with memory values.
If no confusion arises, we denote the elements of $[2]$ by $\exists$ and $\forall$.

In more detail, we overload the operators $\Diamond$, $\Box$, $\Diamond_f$ and
$\Box_f$ to work over $V\times [d]\times[2]$ and update the memory values accordingly,
using the adapted successor functions $E':\mathcal{P}(V\times [d]\times[2])\to \mathcal{P}(V\times [d]\times[2])$ and $E'_f:\mathcal{P}(V\times [d]\times[2])\to \mathcal{P}(V\times [d]\times[2])$
from~\cref{subsection:reductionparityparity} that update memory values. 
We also adapt the sets $C_i$ for $1\leq i\leq 2k$,
$\Vn$, $\Vfair_\exists$ and $\Vfair_\forall$,
and also the functions $\Cpre$ and $\mathsf{parity}(X_1,\ldots, X_{2k})$
accordingly.
Then the adapted version of $\Cpre$ computes the sets $\Cpre(X)$ of nodes
$(v,p,b)$ from which $\exists$-player can force the game to reach a tuple $(w,p',b)$ from $X$ in one step, where $p'=\max(p,\Gamma(v))$ and
again, the fixpoint induced by $2k$ and $\mathsf{parity}$ characterizes the winning region in parity games
with priorities $1$ through $2k$, in which nodes are annotated with memory
values $p$ and $b$, but without ever actually using the memory values.
This formula will still apply to `normal' nodes $\Vn$ in the fixpoint characterization of fair parity/parity games.

Also, we put
\begin{equation}
	\odot (X_{2k+1},\ldots X_{2k+2+d}) = \{(v,p,b)\in V\times [d]\times[2]\mid (E(v)\times\{p'\}\times\{\exists\}) \cap X_{i}\neq\emptyset\}
\end{equation}
for $X\subseteq V\times[d]\times[2]$, where $p'=p$ and $i=2k+1$ if $b=\exists$ 
and $p'=1$ and $i=2k+2+p$ if $b=\forall$. 
Thus
$\odot (X_{2k+1},\ldots X_{2k+2+d})$ consists of tuples $(v,p,b)$ for which
there is a move $(v,w)$ in $G$ such that $(w,p,\exists)\in X_{2k+1}$
if $b=\exists$, and $(w,1,\exists)\in X_{2k+2+p}$
if $b=\forall$.
Intuitively, this encodes the  branch in gadgets for
fair $\exists$ nodes; the constraints correspond to signalling priority 
$2k+2+p$ resetting the memory component from $p$ to $1$ if the universal player last insisted on keeping control
at one of its fair nodes, and to signalling $2k+1$ and leaving
the memory component $p$ unchanged if the universal player last insisted on keeping control at one of its fair nodes.

Then we redefine
\begin{align*}
	\varphi^\fair_{\exists, j}&=
	\bigcup_{i\in [k]}(\Diamond X_{2i-1}\cap \Box_f X_{2i})\,\cup\, \odot(X_{2k+1},\ldots, X_{2k+2+d})\\ 
	\varphi^\fair_{\forall, j}&=
	\bigcap_{i\in [k]} (\Box X_{2i}\cup\Diamond_f X_{2i+1})\cap  \Diamond_f{X_1}\cap\Box' X_{2k+2} 
\end{align*}
where $\Box'X_{2k+2}=\{(v,p,b)\mid \forall w\in E(v).\,(w,\max(p,\Gamma(v)),\forall)\in X_{2k+2}\}$, that is, $\Box'X_{2k+2}$ contains those tuples $(v,p,b)$
such that for all moves from $v$ to $w$ in $G$, the resulting tuple
with updated memory $(w,\max(p,\Gamma(v)),\forall)$ is contained in
$X_{2k+2}$.
Using this notation, the winning region for the existential player in fair parity/parity games with priority functions $\lambda:V\to[2k]$ and
$\Gamma:V\to[d]$ can be characterized by the fixpoint expression induced by the function
$\chi$ that is defined to
map argument sets $X_1,\ldots, X_{2k+2+d}\subseteq V\times [d]\times[2]$
to the set
\begin{align*}
	\chi(X_1,\ldots, X_{2k+2+d})= &(\Vn \,\cap\, \mathsf{parity})\,\cup\\
	& (\Vfair_\exists\, \cap \, \bigcup_{i \in [ 2k]}\, C_i \,\cap \,\varphi^\fair_{\exists, i})\, \cup \\ &(\Vfair_\exists\, \cap \, \bigcup_{i \in [2k]}\, C_i \,\cap \,\varphi^\fair_{\forall, i})
\end{align*}
Then in full form, we get the following alternating nested fixpoint expression
\begin{align}\label{eq:mutuallyfairparityparityfp}
	e'=\eta X_{2k+2+d}.\, \bar{\eta} X_{2k+1+d}.\,\ldots \,\nu X_2.\,\mu X_1. \,
	\chi(X_1,\ldots, X_{2k+2+d})
\end{align}
where $\eta=\mu$ and $\bar{\eta}=\nu$ if $d$ is odd and vice versa if $d$ is even.

\begin{restatable}{theorem}{restatableparityparityfpcorrectness}\label{theorem:parityparityfpcorrectness}
	Let $G=(A,\text{Parity}(\lambda),\text{Parity}(\Gamma))$ where $A = (V_\exists,V_\forall,E,E_f)$ is a fair game arena, $V = V_\exists \dotcup V_\forall$ and  $\lambda: V \to [2k]$, $\Gamma: V \to [d]$ are priority functions. 
	Then there is a fixpoint expression over $V\times[d]\times[2]$
	with alternation depth $2k+2+d$ that characterizes $\Win_\exists(G)$.
\end{restatable}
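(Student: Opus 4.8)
The plan is to reduce \cref{theorem:parityparityfpcorrectness} to \cref{thm:fairParityPlusParityToParity} together with the known fixpoint characterization of winning regions in ordinary parity games. Concretely, I would take the reduced parity game $G'$ constructed in the proof of \cref{thm:fairParityPlusParityToParity}, whose node set is $V\times[d]\times[1]$ together with gadget nodes, and whose priorities range over $[2k+d]$, and then apply the standard parity fixpoint operator (the $\mathsf{parity}$ function described in \cref{section:fixpoints}, suitably instantiated for the $2k+d$ priorities of $G'$) to obtain a fixpoint expression $e'$ of alternation depth $2k+d$ over the node set of $G'$. Since $\exists$-player wins $v$ in $G$ iff $\exists$-player wins $(v,1,\exists)$ in $G'$, restricting the semantics of $e'$ to the base nodes $V\times[d]\times[1]$ gives a fixpoint expression characterizing $\Win_\exists(G)$ up to the embedding $v\mapsto(v,1,\exists)$.

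The remaining work is to eliminate the gadget nodes so that the expression genuinely lives over $V\times[d]\times[1]$ and not over the larger node set of $G'$. The gadgets of \cref{fig:existentialgadgetparityparity,fig:universalgadgetparityparity} have bounded depth (three levels) and a fixed, acyclic internal structure, so their $\Cpre$ contributions can be composed symbolically: I would substitute, for each base node $(v,p,b)$ with $v\in\Vfair$, the inlined expression obtained by unfolding the three gadget levels — using $\Diamond$, $\Box$, $\Diamond_f$, $\Box_f$ and the $\Apre_\exists,\Apre_\forall$ abbreviations exactly as in the $\beta=\bot$ case, but now indexing the fixpoint variables by the relevant gadget priorities and threading the memory updates $p'=\max(p,\Gamma(v))$ and the resets of $p$ to $1$ through the successor sets $E(v,p,b)$, $E_f(v,p,b)$, $E(v,p/1,\exists)$. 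The even/odd priorities $1,2,\dots,2k$ on the interior branches, together with the two cases $2k+1$ (when $b=\exists$) and $2k+p$ (when $b=\forall$) on the rightmost branch, determine which of the $2k+d$ fixpoint variables each disjunct refers to; for $v\in\Vn$ one simply uses the ordinary $\mathsf{parity}$ disjunct but over $E(v,p,b)$. Collecting these pieces into a single monotone function $\chi'$ over $\mathcal P(V\times[d]\times[1])^{2k+d}$ and wrapping it in the alternating fixpoint $\eta_{2k+d}X_{2k+d}\ldots\nu X_2.\mu X_1.\chi'$ yields the desired expression.

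Correctness of the inlining is then exactly the statement that solving a parity game in which certain nodes have been replaced by fixed acyclic gadgets is the same as solving the fixpoint expression in which those gadget nodes have been symbolically eliminated; this is a routine (if tedious) composition argument for $\Cpre$-style operators, and it mirrors the transition from \cref{fig:existentialgadgetsparitybot} to the function $\chi$ in the $\beta=\bot$ case, which was already shown correct in \cref{theorem:parityfpcorrectness}. One clean way to present it is to not inline at all in the formal proof: keep $e'$ over the full node set of $G'$, invoke \cref{thm:fairParityPlusParityToParity} and the standard parity fixpoint theorem to conclude $e'$ characterizes winning in $G'$, observe that gadget nodes contribute a bounded acyclic ``lasso-free'' part that can be folded away by the usual substitution lemma for nested fixpoints, and only then read off that the folded expression over $V\times[d]\times[1]$ has alternation depth $2k+d$ (the gadget priorities $1,\dots,2k+1$ and the memory-dependent priorities $2k+1,\dots,2k+d$ together span exactly $2k+d$ priorities, with the parity of each preserved).

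The main obstacle I expect is bookkeeping the memory-value updates and resets inside the fixpoint expression so that monotonicity is preserved and the alternation depth comes out exactly $2k+d$ rather than something larger: the rightmost branch of the existential gadget signals $2k+1$ or $2k+p$ depending on $b$ and conditionally resets $p$, and the interface with the reset ($E(v,p/1,\exists)$) must be encoded without introducing spurious priority levels or breaking the even/odd alternation pattern that the fixpoint expression relies on. Once one checks that the priorities actually used are $\{1,\dots,2k\}\cup\{2k+1,\dots,2k+d\}$ — a consecutive block of size $2k+d$ — and that odd/even parity is respected branch by branch, the alternation-depth claim is immediate, and the rest is the standard equivalence between parity-game solving and alternating fixpoints together with the substitution lemma for eliminating the acyclic gadget fragment.
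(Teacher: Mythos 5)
Your proposal matches the paper's proof in substance: the paper likewise derives the fixpoint expression over $V\times[d]\times[1]$ directly from the gadgets of \cref{thm:fairParityPlusParityToParity} (encoding the interior branches via $\Apre$-style operators indexed by the gadget priorities $1,\ldots,2k$, threading the memory update $p'=\max(p,\Gamma(v))$ through overloaded $\Diamond,\Box,\Diamond_f,\Box_f$, and capturing the $b$-dependent rightmost branch with its reset via a dedicated clause $\odot(X_{2k+1},\ldots,X_{2k+d})$), and reduces correctness to \cref{thm:fairParityPlusParityToParity}. The only cosmetic difference is that the paper establishes correctness by a direct mutual transformation of winning strategies between the fixpoint game over the base nodes and the reduced game $G'$ (mirroring the detailed argument for \cref{theorem:parityfpcorrectness}), rather than by first applying the standard parity fixpoint to all of $G'$ and then folding away the acyclic gadget fragment via a substitution lemma.
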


\begin{linenomath}
	
	\begin{proof}[Proof of~\cref{theorem:parityparityfpcorrectness}]
		First we note that~\cref{eq:mutuallyfairparityparityfp} has the claimed parameters. 
		To see that~\cref{eq:mutuallyfairparityparityfp} characterizes $\mathsf{Win}_\exists(G)$, it suffices by~\cref{thm:fairParityPlusParityToParity} 
		to show that for all $v\in V$,
		$(v,1,\exists)\in e'$ if and only if $\exists$-player wins the node $(v,1,\exists)$
		in the game $G'$ from the proof of~\cref{thm:fairParityPlusParityToParity}.
		Again the proof is by mutual transformation of winning strategies in the
		game $G'$ from the proof of~\cref{thm:fairParityPlusParityToParity} on one hand, and in the semantic game $G_{e'}$ for the fixpoint expression
		$e'$ on the other hand. 
		The transformations are very similar to the ones in the proof
		of~\cref{theorem:parityfpcorrectness}, with main differences being in
		the updating of memory-values by means of the adapted functions $E$ and $E_f$,
		and the treatment of  branches in gadgets for nodes $(v,p,b)$ such that $v\in\Vfair_\exists$
		that now incorporate a case-distinction on the content of memory component $b$, as encoded by the 
		clause $\odot (X_{2k+1},\ldots X_{2k+d})$,
		and the treatment of  branches in gadgets for nodes $(v,p,b)$ such that $v\in\Vfair_\forall$ in which the memory component is flipped to $\forall$. as encoded by the clause $\Box' X_{2k+2}$.
	\end{proof}
\end{linenomath}

\section{Conclusion}

We introduce two-player games with local
transition-fairness constraints for both players, allowing two
objectives $\alpha$ and $\beta$ to decide the winner
of plays in which both players play fair and both players play unfair,
respectively. We show the determinacy of this class of games in the case
that $\alpha$ and $\beta$ are $\omega$-regular objectives.
In the special case that both $\alpha$ and $\beta$ are parity conditions,
there is a reduction to standard parity games with blow-up quadratic in the
number of priorities used by $\alpha$ and $\beta$; if $\beta=\top$
or $\beta=\bot$,
the reduction becomes even linear. We present
both enumerative and symbolic methods to realize this reduction; in the
process, we also obtain a tight exponential bound on the memory
required by winning strategies in fair parity/parity games.
We expect that the central idea behind the reduction generalizes from parity
objectives to more general settings such as fair games in which $\alpha$
and $\beta$ are Rabin, Streett, or
even Emerson-Lei conditions, but leave this issue for future work.

\section*{Acknowledgments}
We are grateful to an anonymous referee for contributing the proof of
the lower bound in Lemma~\ref{lem:strategysizes}.

\bibliographystyle{alphaurl}
\bibliography{generic}

\end{document}